\newtheorem{thm}{Theorem}[section]
\newtheorem{lem}[thm]{Lemma}
\newtheorem{rem}[thm]{Remark}
\newtheorem{cor}[thm]{Corollary}
\newtheorem{definition}[thm]{Definition}
\let\Im\undefined
\DeclareMathOperator{\Im}{Im}
\DeclareMathOperator{\tr}{tr}
\DeclareMathOperator{\dd}{d\!}
\DeclareMathOperator{\rank}{rank}
\DeclareMathOperator{\Span}{span}
\DeclareMathOperator{\diag}{diag}
\DeclareMathOperator{\Ltwolim}{\mathsf L^2-lim}
\DeclareMathOperator{\sech}{sech}
\numberwithin{equation}{section}
\title{Lower bounds on the moduli of three-dimensional Coulomb-Dirac operators via fractional Laplacians with applications}
\author{Sergey Morozov\footnote{Mathematisches Institut, Ludwig-Maximilians-Universit\"at M\"unchen, Theresienstr. 39, 80333 Munich, Germany\newline $\mathtt{\qquad morozov@math.lmu.de,\ dmueller@math.lmu.de}$}\ \footnote{St. Petersburg State University, Universitetskaya nab. 7-9, 199034 St. Petersburg, Russia} and David M\"uller$^{*}$}
\date{}
\begin{document}

\maketitle

\begin{abstract}
For $\nu\in[0, 1]$ let $D^\nu$ be the distinguished self-adjoint realisation of the three-dimensional Coulomb-Dirac operator $-\mathrm i\boldsymbol\alpha\cdot\nabla -\nu|\cdot|^{-1}$. For $\nu\in[0, 1)$ we prove the lower bound of the form $|D^\nu| \geqslant C_\nu\sqrt{-\Delta}$, where $C_\nu$ is found explicitly and is better then in all previous works on the topic. In the critical case $\nu =1$ we prove that for every $\lambda\in [0, 1)$ there exists $K_\lambda >0$ such that the estimate $|D^{1}| \geqslant K_\lambda a^{\lambda -1}(-\Delta)^{\lambda/2} -a^{-1}$ holds for all $a >0$. As applications we extend the range of coupling constants in the proof of the stability of the relativistic electron-positron field and obtain Cwickel-Lieb-Rozenblum and Lieb-Thirring type estimates on the negative eigenvalues of perturbed projected massless Coulomb-Dirac operators in the Furry picture. We also study the existence of a virtual level at zero for such projected operators.
\end{abstract}

\section{Introduction and main results}
This work is dedicated to the study of the Coulomb-Dirac operator
\begin{equation}\label{Dirac symbol}
 -\mathrm i\boldsymbol\alpha\cdot\nabla +M\beta- \nu|\cdot|^{-1}
\end{equation}
in $\mathsf L^2(\mathbb R^3, \mathbb C^4)$. 
Here $\boldsymbol{\alpha} :=(\alpha_1,\alpha_2,\alpha_3)$ is the vector of $\alpha_{i} :=\begin{pmatrix}
 0_{\mathbb{C}^2} & \sigma_{i} \\ \sigma_{i} & 0_{\mathbb{C}^2}
\end{pmatrix}$ for $i\in\{1,2,3\}$ with $\sigma_1,\sigma_2,\sigma_3$ being the standard Pauli matrices, $\beta:= \diag(1, 1, -1, -1)$ and $M \geqslant0$ is the mass of the Dirac particle. It is well known that the operator \ref{Dirac symbol} is essentially self-adjoint on $\mathsf C_0^\infty\big(\mathbb R^3\setminus\{0\}, \mathbb C^4\big)$ only for $|\nu| \leqslant \sqrt3/2$. For $\nu\in(\sqrt3/2, 1]$ there is a canonical choice of a self-adjoint extension, which for $M =1$ we denote by $D^{\nu, 1}$, see \cite{EstebanLoss,Mueller} and Remark \ref{r: D^nu}. For general $M \geqslant 0$ we let $D^{\nu, M} :=D^{\nu, 1} +(M -1)\beta$. We will restrict our attention to $\nu\in[0, 1]$ and the distinguished self-adjoint realisation $D^{\nu, M}$. A special role is played by the homogeneous operator $D^\nu :=D^{\nu, 0}$.

For $\nu\in [0, 1]$ we introduce
\begin{equation}\label{beta}
 \Upsilon_\nu:= \sqrt{1 -\nu^2}.
\end{equation}

We now state the results of the paper. Some of them are related to the corresponding results in two dimensions, which we have obtained in \cite{MorozovMueller}. It turns out that the three-dimensional situation is somewhat simpler due to presence of the Hardy inequality. The key results are Theorems \ref{absolute value bounds theorem} and \ref{t: l bound}, most of the rest are their applications. It is important to note that we provide the explicit constant $C_{\nu}$ in Theorem \ref{absolute value bounds theorem}, whereas in the corresponding Theorem 1 in \cite{MorozovMueller} substantial work is still required to extract an explicit value from the proof.

In the following scalar operators like $\sqrt{-\Delta}$ are applied to vector-valued functions component-wise without reflecting this in the notation.

\subsection{Lower bounds via powers of the Laplacian}

\begin{thm}\label{absolute value bounds theorem}
For every $\nu\in [0, 1)$ the inequality
\begin{align}\label{absolute value bound}
 |D^\nu| \geqslant C_\nu\sqrt{-\Delta}
\end{align}
holds with
\begin{align}\label{c nu}
C_{\nu}:=\big(1-\pi\Upsilon_\nu\cot(\pi\Upsilon_\nu/2)/2\big)\eta_\nu,
\end{align}
where
\begin{align}\label{eta_nu}
\eta_{\nu}:=\begin{cases}
             \dfrac{(9+4\nu^2)^{1/2}-4\nu}{3\big(1-2\Upsilon_\nu\cot(\pi\Upsilon_\nu/2)\big)},&\text{for }\nu\in [0, 1)\setminus\{\sqrt3/2\};\\
             \dfrac1{\sqrt 3(\pi -2)}, &\text{for } \nu =\sqrt3/2;\\
	     \dfrac{\pi(4 -\sqrt{13})}{3(4 -\pi)}, &\text{for } \nu =1.
            \end{cases}
\end{align}
\end{thm}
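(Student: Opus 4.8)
The strategy is to reduce the three-dimensional problem to a family of one-dimensional (radial) problems by a partial-wave decomposition, and then to estimate each radial block separately using a Hardy-type inequality. Concretely, I would first recall the decomposition of $\mathsf L^2(\mathbb R^3,\mathbb C^4)$ into partial-wave subspaces indexed by the spin-orbit quantum number $\kappa\in\mathbb Z\setminus\{0\}$, on each of which $D^\nu$ is unitarily equivalent to a $2\times2$ radial Dirac operator of the form $\mathfrak d^\nu_\kappa = -\mathrm i\sigma_2\frac{d}{dr} + \frac{\kappa}{r}\sigma_1 - \frac{\nu}{r}$ on $\mathsf L^2(\mathbb R_+,\mathbb C^2)$, with the distinguished boundary condition at the origin (this is where $\nu\in[\sqrt3/2,1]$ enters, through the choice of self-adjoint extension described in Remark \ref{r: D^nu}). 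Since $\sqrt{-\Delta}$ also decomposes along partial waves, inequality \eqref{absolute value bound} is equivalent to the family of one-dimensional operator inequalities $|\mathfrak d^\nu_\kappa|\geqslant C_\nu\sqrt{-\Delta_\kappa}$, where $-\Delta_\kappa$ is the radial Laplacian with the appropriate angular-momentum term. The worst case will be $|\kappa|=1$, so the constant $C_\nu$ should come out of that block.

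The core of the argument is a lower bound on $(\mathfrak d^\nu_\kappa)^2$. The natural idea is to compute $(\mathfrak d^\nu_\kappa)^2$ formally: squaring yields a diagonal Schrödinger-type operator whose entries are $-\frac{d^2}{dr^2} + \frac{\kappa(\kappa\mp1)-\nu^2}{r^2} \mp \text{(something)}$, but the cross terms and the boundary condition obstruct a naive diagonalization. The cleaner route, and the one I would pursue, is to use a ground-state substitution / Hardy-inequality argument: factor out the expected power-law behaviour $r^{\gamma}$ of functions in the operator domain near $0$, where $\gamma$ is the exponent associated to the distinguished extension (related to $\Upsilon_\nu=\sqrt{1-\nu^2}$ through $\gamma = \sqrt{\kappa^2-\nu^2}$ or $\Upsilon_\nu$ for $|\kappa|=1$), and reduce to a Hardy inequality with remainder. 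The explicit constant $\eta_\nu$ in \eqref{eta_nu}, with its $\cot(\pi\Upsilon_\nu/2)$ dependence, strongly suggests that the sharp constant in a weighted one-dimensional Hardy or Hardy–Rellich inequality on $\mathsf L^2(\mathbb R_+)$ is being invoked — the $\cot$ typically arises from the ground state of $-d^2/dr^2 + (\Upsilon_\nu^2-1/4)/r^2$ compared against $1/r$, i.e. from an explicit computation with Bessel or power functions on a half-line and the associated Mellin-transform/Kato-type optimization. The factor $\big(1-\pi\Upsilon_\nu\cot(\pi\Upsilon_\nu/2)/2\big)$ and the denominator $1-2\Upsilon_\nu\cot(\pi\Upsilon_\nu/2)$ in $\eta_\nu$ look like they come from splitting $(\mathfrak d^\nu_\kappa)^2$ into a "free" part controlling $-\Delta$ and a Coulomb remainder, each estimated by the same half-line Hardy constant, and then taking the product.

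Thus the ordered steps are: (1) set up the partial-wave decomposition and reduce to the radial operators $\mathfrak d^\nu_\kappa$ with the distinguished boundary condition; (2) establish (or cite from earlier in the paper) the relevant sharp one-dimensional weighted Hardy-type inequality on $\mathsf L^2(\mathbb R_+)$ with the constant featuring $\cot(\pi\Upsilon_\nu/2)$; (3) apply it to bound $(\mathfrak d^\nu_\kappa)^2$ below by $\eta_\nu^2$ times the square of the radial gradient plus the angular term, hence by $(\text{const})\cdot(-\Delta_\kappa)$, carefully tracking the Coulomb cross term to produce the extra factor $\big(1-\pi\Upsilon_\nu\cot(\pi\Upsilon_\nu/2)/2\big)$; (4) verify that $|\kappa|=1$ gives the smallest constant and that the special values $\nu=\sqrt3/2$ (where the generic formula is a $0/0$ indeterminacy, resolved by a limit giving $1/(\sqrt3(\pi-2))$) and $\nu=1$ (where $\Upsilon_\nu=0$, so $\cot$ must be expanded, yielding $\pi(4-\sqrt{13})/(3(4-\pi))$) match the stated $\eta_\nu$; and (5) take square roots to conclude \eqref{absolute value bound}. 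The main obstacle I anticipate is step (3): handling the Coulomb term $-\nu/r$ together with the boundary condition without losing sharpness in the constant — in particular, combining the two half-line Hardy estimates in the right way so that the product structure $C_\nu = (\ldots)\eta_\nu$ emerges with exactly the claimed constant rather than a weaker one, and making sure the argument is uniform in $\kappa$ with the extremal case correctly identified.
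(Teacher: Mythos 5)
Your high-level scaffold is right — partial-wave reduction, worst case at $|\kappa|=1$, a sharp constant tied to $\cot(\pi\Upsilon_\nu/2)$, then square roots — and you correctly flag that the crux is your step (3). But that step is left as a hand-wave, and the mechanism you gesture at (ground-state substitution for the radial Dirac operator, then a weighted half-line Hardy inequality) is not what makes the paper's argument go through, nor is it clear it can produce the precise product form $C_\nu = (1-\pi\Upsilon_\nu\cot(\pi\Upsilon_\nu/2)/2)\,\eta_\nu$.

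The paper's actual route has three ingredients you don't anticipate. First, rather than working with the radial operators $\mathfrak d^\nu_\kappa$ in position space, the whole problem is conjugated by the ``MWF'' transform $\mathcal T=\mathcal M\mathcal A\,\mathrm{diag}(1,1,-\mathrm i,-\mathrm i)\mathcal F$ (Fourier transform, angular decomposition, Mellin transform). Under $\mathcal T$, both $\sqrt{-\Delta}$ and the Coulomb potential become simple expressions built from the shift $R^1$ and the explicit Gamma-function symbols $V_l$ (Lemmata \ref{transformed operators lemma}, \ref{1/r with T lemma}, \ref{V corollary}), so the radial Coulomb--Dirac block becomes the matrix multiplication operator $M^\nu_s$ of \eqref{M kappa nu} applied after $R^1$. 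Second, the heart of the proof is a \emph{pointwise} $2\times2$ matrix inequality, Lemma \ref{critical Dirac via Kato bound lemma}: $(M_s^\nu)^*M_s^\nu\geqslant\eta_\nu^2 K^\nu\otimes\mathds 1$ on $\mathbb R$, where $K^\nu$ is the squared modulus of the Mellin symbol of the critically-coupled scalar relativistic Coulomb operator $(-\Delta)^{1/2}-\alpha_0|\cdot|^{-1}$. Proving this is an explicit, laborious computation with hyperbolic secants, tangents and polynomials (Appendix \ref{a: Dirac via Kato remainder}); it is not a Hardy inequality. Third, the constant then factors as $C_\nu=\eta_\nu\cdot\big(1-V_0(0)/V_0(\mathrm i\Upsilon_\nu)\big)$ because one first bounds $(D^\nu_{1/2-s,s})^2$ from below by $\eta_\nu^2$ times the square of (a unitary conjugate of) the Friedrichs extension $H_0^{(V_0(\mathrm i\Upsilon_\nu))^{-1}}$ of the scalar Kato-type operator (Lemma \ref{moduli comparison lemma}), and only afterwards compares that scalar operator with $R^1$ using the sharp form bound $V_0(\cdot+\mathrm i/2)R^1\leqslant V_0(0)R^1$ (Kato's inequality in the $l=0$ channel). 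The intermediate scalar operator is essential — in particular, one must handle the special functions $\psi^\nu_s$ that enter the operator core for $\nu>\sqrt3/2$ and show their transforms lie in the domain of that Friedrichs extension (Lemmata \ref{H alpha l domains lemma}, \ref{domain connection lemma}, \ref{maximal subdomain transformed lemma}). None of these domain subtleties appear in your plan, and without them the argument for $\nu\in(\sqrt3/2,1)$ does not close, since there $D^\nu$ is not the closure of the minimal operator and $(D^\nu)^2\geqslant c(-\Delta)$ fails outright.

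In short: the plan identifies the right skeleton, but the decisive step — the passage to Fourier--Mellin space, the pointwise matrix bound against the Kato operator's symbol, and the resulting two-factor structure of $C_\nu$ — is missing, and the replacement you suggest (a ground-state substitution plus half-line Hardy inequality applied directly to $(\mathfrak d^\nu_\kappa)^2$) is not shown to yield the claimed constant and would in any case require the same delicate tracking of the distinguished extension's domain.
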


The following figure shows the graph of $C_\nu$ as a function of $\nu$ with the dotted affine function for reference.
\begin{center}
\includegraphics[width=0.7\textwidth]{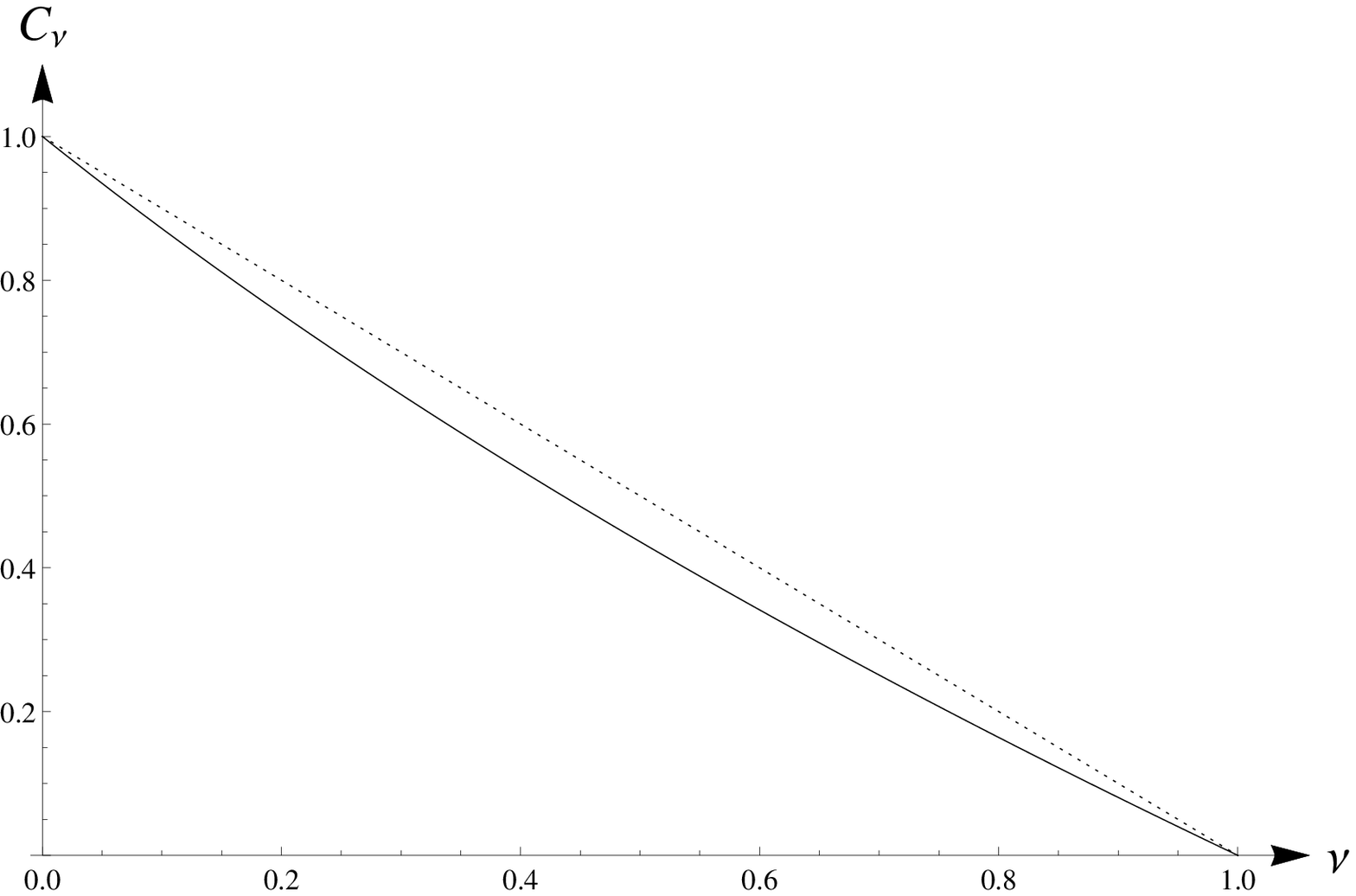}
\end{center}

We conclude the following estimates for massive Coulomb-Dirac operators:
\begin{cor}\label{c: massive estimates}
For every $\nu\in [0, 1)$ and $M >0$ the inequalities
\begin{align}
 |D^{\nu, M}| &\geqslant \Upsilon_\nu C_\nu\sqrt{-\Delta}\label{massive modulus}
\intertext{and}
 |D^{\nu, M}| &\geqslant \max\Big\{\frac{\Upsilon_\nu C_\nu}{1 +C_\nu}, 1 -2\nu\Big\}\sqrt{-\Delta +M^2}\label{massive bound 2}
\end{align}
hold with $C_\nu$ as in \eqref{c nu}.
\end{cor}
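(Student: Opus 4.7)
The starting observation is the operator identity
$$(D^{\nu,M})^2 = (D^\nu)^2 + M^2 - 2M\nu\beta/|\cdot|,$$
obtained by expanding $D^{\nu,M} = D^\nu + M\beta$ and computing $\{D^\nu,\beta\} = -2\nu\beta/|\cdot|$; here the kinetic part $-\mathrm i\boldsymbol\alpha\cdot\nabla$ anticommutes with $\beta$ (since $\alpha_j\beta=-\beta\alpha_j$), while the scalar Coulomb potential commutes with the matrix $\beta$.

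For \eqref{massive modulus} my plan is to establish $(D^{\nu,M})^2 \geqslant \Upsilon_\nu^2 (D^\nu)^2$ as a form inequality, since then Theorem \ref{absolute value bounds theorem} immediately yields $(D^{\nu,M})^2 \geqslant \Upsilon_\nu^2 C_\nu^2 (-\Delta)$. In view of the identity above, this reduces to proving the form bound
$$\nu^2(D^\nu)^2 + M^2 \geqslant 2M\nu\beta/|\cdot|.$$
I would attack this by combining AM-GM with a Hardy-Dirac-type estimate that controls $\||\cdot|^{-1}\psi\|$ in terms of $\|D^\nu\psi\|$ (in the spirit of the classical Hardy inequality $|\cdot|^{-2}\leqslant 4(-\Delta)$, but sharpened for the distinguished extension $D^\nu$). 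The main obstacle is producing this Hardy-Dirac bound with a constant sharp enough to leave exactly the factor $\Upsilon_\nu$ in the final estimate without degrading $C_\nu$.

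For \eqref{massive bound 2} two competing bounds are combined via a maximum. The term $\Upsilon_\nu C_\nu/(1+C_\nu)\sqrt{-\Delta + M^2}$ is obtained from \eqref{massive modulus} together with the spectral lower bound $|D^{\nu,M}| \geqslant \Upsilon_\nu M$, which is a consequence of the Sommerfeld fine-structure formula giving $\inf\sigma(|D^{\nu,M}|)=\Upsilon_\nu M$, realised at the relativistic hydrogen ground state. Taking the convex combination with weights $1/(1+C_\nu)$ and $C_\nu/(1+C_\nu)$ equalises the coefficients of $\sqrt{-\Delta}$ and $M$, yielding
$$|D^{\nu,M}| \geqslant \frac{\Upsilon_\nu C_\nu}{1+C_\nu}\bigl(\sqrt{-\Delta}+M\bigr)\geqslant \frac{\Upsilon_\nu C_\nu}{1+C_\nu}\sqrt{-\Delta+M^2},$$
the last step using subadditivity $\sqrt{-\Delta+M^2}\leqslant \sqrt{-\Delta}+M$ (trivially checked in the Fourier representation, where $-\Delta$ and $M^2$ commute). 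The alternative $(1-2\nu)\sqrt{-\Delta+M^2}$ (nontrivial only for $\nu<1/2$) follows directly from the triangle inequality applied to the decomposition $D^{\nu,M}=D^{0,M}-\nu/|\cdot|$, the free-Dirac identity $\|D^{0,M}\psi\|=\|\sqrt{-\Delta+M^2}\psi\|$, and the Hardy bound $\||\cdot|^{-1}\psi\|\leqslant 2\|\sqrt{-\Delta}\psi\|\leqslant 2\|\sqrt{-\Delta+M^2}\psi\|$.
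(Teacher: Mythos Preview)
Your treatment of \eqref{massive bound 2} is correct and matches the paper's argument: the paper also takes a convex combination of \eqref{massive modulus} with the spectral bound $|D^{\nu,M}|\geqslant \Upsilon_\nu M$ and optimises, arriving at the same weights you chose; the $(1-2\nu)$ alternative is likewise obtained from the Hardy inequality exactly as you describe.

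The gap is in \eqref{massive modulus}. Your reduction to
\[
\nu^2(D^\nu)^2 + M^2 \geqslant 2M\nu\,\beta|\cdot|^{-1}
\]
is correct, but the proposed tool---a Hardy--Dirac estimate $\||\cdot|^{-1}\psi\|\leqslant C\|D^\nu\psi\|$ combined with AM--GM---cannot close it. After bounding $\beta|\cdot|^{-1}\leqslant |\cdot|^{-1}$ and applying Cauchy--Schwarz, AM--GM would require $C\leqslant 1$, i.e.\ $(D^\nu)^2\geqslant |\cdot|^{-2}$; already at $\nu=0$ this reads $-\Delta\geqslant |\cdot|^{-2}$, which is false (Hardy gives only the constant $1/4$). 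So no Hardy--Dirac bound of that form is sharp enough.

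The paper's device is the one you already invoked for \eqref{massive bound 2}: the Sommerfeld spectral gap $(D^{\nu,M'})^2\geqslant M'^2\Upsilon_\nu^2$, but applied with the rescaled mass $M'=M/\nu^2$. Concretely, the algebraic identity
\[
(D^{\nu,M})^2 = \Upsilon_\nu^2(D^\nu)^2 + \nu^2(D^{\nu,M/\nu^2})^2 + M^2(1-\nu^{-2})
\]
together with $\nu^2(D^{\nu,M/\nu^2})^2\geqslant M^2\Upsilon_\nu^2/\nu^2 = -M^2(1-\nu^{-2})$ immediately yields $(D^{\nu,M})^2\geqslant \Upsilon_\nu^2(D^\nu)^2$. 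Equivalently, in your formulation, $\nu^2(D^\nu)^2 + M^2 - 2M\nu\beta|\cdot|^{-1} = \nu^2(D^{\nu,M/\nu^2})^2 - M^2(\nu^{-2}-1)\geqslant 0$. So the missing ingredient is not a new Hardy inequality but the spectral bound you already had in hand.
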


Note that $|D^{0, M}| =\sqrt{-\Delta +M^2}$ holds. The results analogous to Theorem \ref{absolute value bounds theorem} and Corollary \ref{c: massive estimates} were already discussed in the literature (see e.g. Lemma 1 in \cite{Bach1999} and Lemma 1 in \cite{brummelhuis2001stability}). However, they were obtained by taking the square root of the inequality
\begin{align*}
 (D^{\nu, M})^2 \geqslant \widetilde C_{\nu, M}(-\Delta),
\end{align*}
which cannot be valid with $\widetilde C_{\nu, M} >0$ for $\nu > \sqrt3/2$, since for such $\nu$ the domain of $D^\nu$ is not contained in $\mathsf H^1(\mathbb R^3, \mathbb C^4)$, see e.g. Lemma \ref{Lemma_operator_core} below. Even for $\nu\in [0, \sqrt3/2]$ the values of the constants in Theorem \ref{absolute value bounds theorem} and Corollary \ref{c: massive estimates} are improvements upon the above results.

Theorem \ref{absolute value bounds theorem} cannot hold for $\nu =1$, which is a consequence of the fact that the domain of $D^1$ is not contained in $\mathsf H^{1/2}(\mathbb R^3, \mathbb C^4)$, see Lemma \ref{Lemma_operator_core}. However, we can compare $|D^{1}|$ with the powers of the Laplacian smaller than $1/2$.
The comparison is based on the fact that for any $\lambda\in [0, 1)$ there exists $L_\lambda >0$ such that for all $a >0$ the inequality
\begin{align}\label{SSS lower bound}
 (-\Delta)^{1/2} -\frac2{\pi}|\cdot|^{-1} \geqslant L_\lambda a^{\lambda -1}(-\Delta)^{\lambda/2} -a^{-1}
\end{align}
holds, see Inequality (1.8) of \cite{FrankHLT} and Theorem 2.3 in \cite{SolovejSoerensenSpitzer}.

\begin{thm}\label{t: l bound}
 For any $\lambda\in [0, 1)$ and $a >0$ the inequality 
\begin{align*}
 |D^{1}| \geqslant K_\lambda a^{\lambda -1}(-\Delta)^{\lambda/2} -a^{-1}
\end{align*}
holds with 
\begin{align*}
                K_{\lambda}:=\min\Bigg\{L_\lambda\bigg(\frac{\pi(4-\sqrt{13})}{3(4-\pi)}\bigg)^{\lambda},\lambda^{-\lambda}
                (1-\lambda)^{-(1-\lambda)}\bigg(\frac{\sqrt{229}-8}{15}\bigg)^{\lambda}\Bigg\},
\end{align*}
where $L_\lambda$ is as in \eqref{SSS lower bound}.
\end{thm}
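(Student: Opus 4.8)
\emph{Proposed proof.} The plan is to pass to the partial-wave (spin--orbit) decomposition of $\mathsf L^2(\mathbb R^3,\mathbb C^4)$ into channels labelled by $(\kappa,m)$. This decomposition reduces $D^1$ together with $(-\Delta)^s$ (applied componentwise, for every $s>0$) and with multiplication by $|\cdot|^{-1}$. Write $\mathsf L^2(\mathbb R^3,\mathbb C^4)=\mathcal H_1\oplus\mathcal H_{\geqslant2}$, where $\mathcal H_1$ is the sum of the channels with $|\kappa|=1$ and $\mathcal H_{\geqslant2}$ the sum of those with $|\kappa|\geqslant2$, and prove the asserted inequality on each summand separately; the two are controlled by genuinely different arguments, which is the reason $K_\lambda$ comes out as a minimum, the channels $|\kappa|\geqslant2$ contributing the second entry and the channels $|\kappa|=1$ the first.

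\emph{The channels $|\kappa|\geqslant2$.} Here $\operatorname{dom}(D^1)$ is contained in $\mathsf H^1$, so one may square. A radial computation of the best $c$ with $(D^1)^2\geqslant c(-\Delta)$ on the channel $\kappa$ shows that the least favourable case in this range is $|\kappa|=2$, and that at $\nu=1$ the resulting bound is $|D^1|\geqslant t_1(-\Delta)^{1/2}$ on $\mathcal H_{\geqslant2}$, where $t_1=(\sqrt{229}-8)/15$ is the positive root of $15t^2+16t-11=0$. Now invoke the elementary scalar estimate $x^\lambda\leqslant\lambda a^{1-\lambda}x+(1-\lambda)a^{-\lambda}$ for $x\geqslant0$, $a>0$ (weighted arithmetic--geometric mean), which, after a rescaling of $a$, is equivalent to $y\geqslant\lambda^{-\lambda}(1-\lambda)^{-(1-\lambda)}a^{\lambda-1}y^{\lambda}-a^{-1}$, hence by functional calculus $T\geqslant\lambda^{-\lambda}(1-\lambda)^{-(1-\lambda)}a^{\lambda-1}T^{\lambda}-a^{-1}$ for every self-adjoint $T\geqslant0$. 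Applying this with $T=t_1(-\Delta)^{1/2}$ gives, on $\mathcal H_{\geqslant2}$,
\[
 |D^1|\geqslant\lambda^{-\lambda}(1-\lambda)^{-(1-\lambda)}\Big(\tfrac{\sqrt{229}-8}{15}\Big)^{\lambda}a^{\lambda-1}(-\Delta)^{\lambda/2}-a^{-1}.
\]

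\emph{The channels $|\kappa|=1$.} At $\nu=1$ this is precisely the subspace on which $\operatorname{dom}(D^1)\not\subseteq\mathsf H^{1/2}$ (Lemma \ref{Lemma_operator_core}), so squaring is unavailable. Instead I would carry the argument underlying Theorem \ref{absolute value bounds theorem} to the endpoint $\nu=1$. That argument first produces $|D^\nu|\geqslant\eta_\nu\mathcal T_\nu$ for a Hardy-type operator $\mathcal T_\nu$, and only then reduces $\mathcal T_\nu$ to a multiple of $(-\Delta)^{1/2}$ by means of the sharp inequality $|\cdot|^{-1}\leqslant(\pi/2)(-\Delta)^{1/2}$; the constant gained in that last step is $1-\pi\Upsilon_\nu\cot(\pi\Upsilon_\nu/2)/2$, which degenerates to $0$ as $\nu\to1$ — in line with the fact that \eqref{absolute value bound} cannot hold at $\nu=1$. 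Omitting this last step one obtains, on $\mathcal H_1$,
\[
 |D^1|\geqslant\eta_1\Big((-\Delta)^{1/2}-\tfrac{2}{\pi}|\cdot|^{-1}\Big),\qquad\eta_1=\frac{\pi(4-\sqrt{13})}{3(4-\pi)}.
\]
Restricting \eqref{SSS lower bound} to the invariant subspace $\mathcal H_1$, combining the two inequalities, and finally replacing $a$ by $\eta_1 a$, one gets on $\mathcal H_1$
\[
 |D^1|\geqslant L_\lambda\eta_1^{\lambda}\,a^{\lambda-1}(-\Delta)^{\lambda/2}-a^{-1}.
\]

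\emph{Conclusion and the main obstacle.} Since $(-\Delta)^{\lambda/2}$ reduces along the decomposition, adding the two displayed estimates over the orthogonal summands yields $|D^1|\geqslant K_\lambda\,a^{\lambda-1}(-\Delta)^{\lambda/2}-a^{-1}$ with $K_\lambda$ as stated. (One may note beforehand that conjugating by the dilation $f\mapsto t^{3/2}f(t\,\cdot)$ sends the inequality for $a$ to the one for $ta$, so a single value of $a$ would already suffice.) The one step that is not bookkeeping is the endpoint estimate on $\mathcal H_1$: as $(D^1)^2$ is useless there, $|D^1|\geqslant\eta_1\big((-\Delta)^{1/2}-\tfrac2\pi|\cdot|^{-1}\big)$ must be extracted from the radial analysis behind Theorem \ref{absolute value bounds theorem} at $\nu=1$, ensuring that what survives on the right-hand side is the Hardy-subtracted operator and not merely a positive multiple of $(-\Delta)^{1/2}$; everything else — the partial-wave reduction, the radial squaring on $\mathcal H_{\geqslant2}$, the restriction of \eqref{SSS lower bound}, and the weighted mean inequality — is routine.
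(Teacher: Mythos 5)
Your proposal follows the same route as the paper: decompose along the spin--orbit channels, treat $|\kappa|\geqslant2$ by the radial squaring that yields the constant $(\sqrt{229}-8)/15$ at $\nu=1$ combined with the weighted AM--GM inequality, and treat $|\kappa|=1$ by the channel-wise inequality of Theorem~\ref{absolute value bounds theorem} truncated before the final Kato step and then fed into \eqref{SSS lower bound}. This is exactly the paper's proof, which combines Lemmata~\ref{moduli comparison lemma}, \ref{non-critical channels lemma} and \ref{critical lower bound lemma} with $R^1 \geqslant \lambda^{-\lambda}(1-\lambda)^{\lambda-1}a^{\lambda-1}R^\lambda - a^{-1}$, and your identification of the origin of both entries of $K_\lambda$ (including the $15t^2+16t-11=0$ root and the $\eta_1 = \pi(4-\sqrt{13})/(3(4-\pi))$ prefactor) is correct.

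One small technical caveat on the critical channels: the intermediate bound you want, $|D^1|\geqslant\eta_1\big((-\Delta)^{1/2}-\tfrac2\pi|\cdot|^{-1}\big)$ restricted to $\mathcal H_1$, is not literally what the paper proves. The spin--orbit channel mixes the scalar spherical-harmonic degrees $l=0$ and $l=1$, so the restriction of the Herbst operator to $\mathcal H_1$ acts as $H_0^{\alpha_0}$ on one radial component and as $H_1^{\alpha_0}$ on the other. Lemma~\ref{moduli comparison lemma} (via the matrix-symbol comparison of Lemma~\ref{critical Dirac via Kato bound lemma}) instead yields $|D^1_{1/2-s,s}|\geqslant\eta_1\,\mathcal U^*_{1/2-s,s}\big(H_0^{\alpha_0}\otimes\mathds 1_{\mathbb C^2}\big)\mathcal U_{1/2-s,s}$, with the worse operator $H_0^{\alpha_0}$ appearing in \emph{both} slots. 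Since $H_1^{\alpha_0}\geqslant H_0^{\alpha_0}$, your claimed bound is formally stronger and is not established by the paper; but because one then applies Lemma~\ref{critical lower bound lemma} for $H_0^{\alpha_0}$ to each component, both routes deliver the same $K_\lambda$. So the plan is sound, and the step you correctly flag as "the main obstacle" is precisely Lemma~\ref{moduli comparison lemma} at $\nu=1$, which the paper has already proved; you just want to aim for its weaker but sufficient conclusion rather than the full restricted Herbst inequality.
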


\subsection{Stability of the electron-positron field}

Let
\begin{align*}
 \mathbb H:=\int \mathrm dx:\Psi^*(x)D^{\nu, M}\Psi(x): +\frac\alpha2\int\mathrm dx\int\mathrm dy\frac{:\Psi^*(x)\Psi^*(y)\Psi(y)\Psi(x):}{|\mathbf x -\mathbf y|}
\end{align*}
be the Hamiltonian of the electron-positron field in the Furry picture, see \cite{brummelhuis2001stability, Bach1999}. Here $\Psi$ are the field operators and $\alpha >0$ is the fine-structure constant.
As explained in Section 2 of \cite{brummelhuis2001stability} (Note the misprint: the inequality in Section 2(ii) of \cite{brummelhuis2001stability} should coincide with inequality (1) there!), estimate \eqref{massive modulus} immediately implies the following result.
\begin{thm}\label{t: HF states}
Let $\nu\in [0, 1)$, $M, \alpha \geqslant0$. Then for
\begin{align}\label{alpha condition}
 \alpha\leqslant 4\Upsilon_\nu C_\nu/\pi
\end{align}
the energy $\mathcal E(\rho) :=\rho(\mathbb H)$ is non-negative for all generalised Hartree-Fock states $\rho$.
\end{thm}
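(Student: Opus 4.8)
The plan is to follow the strategy of Section 2 of \cite{brummelhuis2001stability} essentially verbatim, with the single improvement that the operator lower bound \eqref{massive modulus} from Corollary \ref{c: massive estimates} replaces the weaker bound used there. First I would recall the standard reduction: for a generalised Hartree-Fock state $\rho$, the energy $\mathcal E(\rho)$ can be written in terms of the one-particle density matrix $\gamma$ of $\rho$, which is a self-adjoint operator on $\mathsf L^2(\mathbb R^3,\mathbb C^4)$ with $-P_-\leqslant\gamma\leqslant P_+$, where $P_\pm$ are the spectral projections of $D^{\nu,M}$ onto the positive and negative spectral subspaces. One obtains an expression of the form $\mathcal E(\rho)=\tr\big(D^{\nu,M}\gamma\big)+\tfrac{\alpha}{2}D(\rho_\gamma,\rho_\gamma)-\tfrac{\alpha}{2}\mathrm{Ex}(\gamma)$, where $D(\cdot,\cdot)$ is the Coulomb form of the charge density $\rho_\gamma$ (non-negative) and $\mathrm{Ex}(\gamma)=\iint|\mathbf x-\mathbf y|^{-1}|\gamma(x,y)|^2\,\mathrm dx\,\mathrm dy$ is the exchange term.

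Next I would estimate the exchange term from above. Discarding the non-negative direct term and using the bound on the exchange energy in terms of the kinetic energy of $|\gamma|$, the standard inequality (see e.g. \cite{brummelhuis2001stability}, which in turn rests on the sharp Hardy--Kato inequality $|\cdot|^{-1}\leqslant\tfrac{\pi}{2}\sqrt{-\Delta}$ applied in the Coulomb double integral) gives
\begin{align*}
 \mathrm{Ex}(\gamma)\leqslant\frac{\pi}{2}\,\tr\big(\sqrt{-\Delta}\,|\gamma|\big).
\end{align*}
Hence
\begin{align*}
 \mathcal E(\rho)\geqslant\tr\big(D^{\nu,M}\gamma\big)-\frac{\alpha\pi}{4}\tr\big(\sqrt{-\Delta}\,|\gamma|\big).
\end{align*}
Since $\gamma$ splits as $\gamma_+-\gamma_-$ with $0\leqslant\gamma_+\leqslant P_+$ and $0\leqslant\gamma_-\leqslant P_-$, and $D^{\nu,M}$ acts as $+|D^{\nu,M}|$ on $\Ran P_+$ and as $-|D^{\nu,M}|$ on $\Ran P_-$, we have $\tr\big(D^{\nu,M}\gamma\big)=\tr\big(|D^{\nu,M}|\gamma_+\big)+\tr\big(|D^{\nu,M}|\gamma_-\big)=\tr\big(|D^{\nu,M}|\,|\gamma|\big)$.

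It remains to combine these. By Corollary \ref{c: massive estimates}, inequality \eqref{massive modulus}, we have $|D^{\nu,M}|\geqslant\Upsilon_\nu C_\nu\sqrt{-\Delta}$ as quadratic forms, hence also in the sense needed to compare the two traces against the non-negative operator $|\gamma|$ (expanding $|\gamma|$ in its eigenbasis and applying the form inequality to each eigenvector). Therefore
\begin{align*}
 \mathcal E(\rho)\geqslant\Big(\Upsilon_\nu C_\nu-\frac{\alpha\pi}{4}\Big)\tr\big(\sqrt{-\Delta}\,|\gamma|\big),
\end{align*}
which is non-negative precisely when $\alpha\leqslant4\Upsilon_\nu C_\nu/\pi$, i.e.\ under \eqref{alpha condition}; the boundary case $\alpha=0$ is trivial. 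The only genuinely delicate point is the justification that all traces are well-defined and the manipulations (cyclicity, splitting of $\gamma$, term-by-term application of form inequalities to possibly unbounded operators) are legitimate for the class of density matrices arising from generalised Hartree--Fock states; this is exactly the functional-analytic bookkeeping carried out in Section 2 of \cite{brummelhuis2001stability}, and I would simply invoke it, noting that nothing there depends on the particular value of the constant in the kinetic lower bound for $|D^{\nu,M}|$.
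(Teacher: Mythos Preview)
Your proposal is correct and takes essentially the same approach as the paper: the paper does not give a standalone proof but simply states that estimate \eqref{massive modulus} immediately implies the result via Section 2 of \cite{brummelhuis2001stability}, which is exactly the argument you have spelled out in detail. Your exposition of the reduction to the one-particle density matrix, the Kato-type bound on the exchange term, and the insertion of the new lower bound $|D^{\nu,M}|\geqslant\Upsilon_\nu C_\nu\sqrt{-\Delta}$ is precisely the content being invoked.
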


For the physical value of the fine-structure constant $\alpha \approx 1/137$ condition \eqref{alpha condition} is fulfilled for $\nu \lesssim 0.97$, i.e. for the atomic numbers $Z =\nu/\alpha$ up to and including $132$.
In Theorem 1 of \cite{brummelhuis2001stability} (which is an improvement upon Theorem 2 in \cite{Bach1999}) the same result as in Theorem \ref{t: HF states} was proved under the stronger assumption
\begin{align*}
 \alpha \leqslant (4/\pi)(1 -\nu^2)^{1/2}(\sqrt{4\nu^2 +9} -4\nu)/3
\end{align*}
instead of \eqref{alpha condition}, which for $\alpha \approx 1/137$ corresponds to the atomic numbers $Z =\nu/\alpha \leqslant 117$ and cannot hold for any $\alpha >0$ if $\nu >\sqrt3/2$.

\subsection{Eigenvalue estimates in the Furry picture}

Theorems \ref{absolute value bounds theorem} and \ref{t: l bound} allow us to estimate the negative spectrum of perturbed projected massless Coulomb-Dirac operators in the Furry picture (cf. \cite{MorozovMueller}). Namely, for $\nu\in[0, 1]$ let $P^\nu_+$ be the spectral projector of $D^\nu$ to the half-line $[0, \infty)$.
Assuming that the negative energy states of $D^\nu$ are occupied (Dirac sea), we restrict the operator to the Hilbert space $\mathfrak H_+^\nu:= P^\nu_+\mathsf L^2(\mathbb R^3,\mathbb C^4)$. We now consider perturbations of such restricted operator by electromagnetic potentials. The following results are three-dimensional equivalents of Corollary 2 and Theorems 3 and 4 in \cite{MorozovMueller}. For numbers and self-adjoint operators we use the notation $x_\pm := \max\{\pm x, 0\}$ for the positive and negative parts of $x$.

\begin{lem}\label{l: defining corollary}
Suppose that $(\nu, \gamma)\in \big([0, 1]\times [0, \infty)\big)\setminus \big\{(1, 0)\big\}$.
Let $V$ be a measurable Hermitian $(4\times 4)$-matrix function with $\tr (V_+^{3 +\gamma})\in \mathsf L^1(\mathbb R^3)$ and such that there exists $C> 0$ with
\begin{align*}
 \int_{\mathbb R^3}\big\langle\upsilon(\mathbf x), V_-(\mathbf x)\upsilon(\mathbf x)\big\rangle\mathrm d\mathbf x  \leqslant C\Big(\big\||D^\nu|^{1/2}\upsilon\big\|^2 +\|\upsilon\|^2\Big), \textrm{ for all }\upsilon \in P_+^\nu\mathfrak D\big(|D^\nu|^{1/2}\big).
\end{align*}
Then the quadratic form
\begin{align*}
    \mathfrak d^\nu(V)&:P_+^\nu\mathfrak D\big(|D^\nu|^{1/2}\big)\to \mathbb R, \\ \mathfrak d^\nu(V)[\upsilon] &:=\big\||D^\nu|^{1/2}\upsilon\big\|^2 -\int_{\mathbb{R}^3}\big\langle \upsilon(\mathbf x), V(\mathbf x)\upsilon (\mathbf x)\big\rangle \mathrm{d}\mathbf{x}
\end{align*}
is closed and bounded from below in $\mathfrak H_+^\nu$.
\end{lem}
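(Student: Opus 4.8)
The plan is to verify the hypotheses of the KLMN theorem (the form version of the Kato–Rellich theorem) for the form $\mathfrak d^\nu(V)$ relative to the closed nonnegative form $\upsilon\mapsto \||D^\nu|^{1/2}\upsilon\|^2$ on $P_+^\nu\mathfrak D(|D^\nu|^{1/2})$. To do so I must control the perturbing form $\upsilon\mapsto\int\langle\upsilon,V\upsilon\rangle$ by separating it into its positive and negative parts: $\int\langle\upsilon,V\upsilon\rangle=\int\langle\upsilon,V_+\upsilon\rangle-\int\langle\upsilon,V_-\upsilon\rangle$. The contribution of $V_-$ is already assumed to satisfy a relative form bound with respect to $\||D^\nu|^{1/2}\cdot\|^2+\|\cdot\|^2$; what remains is to show that $\int\langle\upsilon,V_+\upsilon\rangle$ is likewise form-bounded, and that the relevant relative bound can be taken strictly less than $1$ after absorbing a multiple of $\|\upsilon\|^2$ — or alternatively, that the combined form is bounded below and closable — which is exactly what is needed for closedness and semiboundedness.

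For the $V_+$ part the key tool is the lower bound of Theorem \ref{absolute value bounds theorem} (for $\nu\in[0,1)$) or Theorem \ref{t: l bound} (for $\nu=1$, using $\gamma>0$ so that $(\nu,\gamma)\ne(1,0)$). In the case $\nu\in[0,1)$, inequality \eqref{absolute value bound} gives $|D^\nu|\geqslant C_\nu\sqrt{-\Delta}$, hence $\mathfrak D(|D^\nu|^{1/2})\hookrightarrow \mathsf H^{1/2}(\mathbb R^3,\mathbb C^4)$ with $\||D^\nu|^{1/2}\upsilon\|^2\geqslant C_\nu\|(-\Delta)^{1/4}\upsilon\|^2$. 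By the Sobolev inequality for $(-\Delta)^{1/4}$ in dimension three, $\mathsf H^{1/2}(\mathbb R^3)\hookrightarrow \mathsf L^{3}(\mathbb R^3)$; combining this with the Hölder estimate $\int\langle\upsilon,V_+\upsilon\rangle\leqslant \big\|\tr V_+\big\|_{\mathsf L^{3/2}}\,\|\,|\upsilon|\,\|_{\mathsf L^3}^2$ (pointwise, $\langle\upsilon,V_+\upsilon\rangle\leqslant (\tr V_+)|\upsilon|^2$ since $V_+$ is a nonnegative Hermitian matrix) shows that the $V_+$ form is bounded by a constant times $\||D^\nu|^{1/2}\upsilon\|^2$, and the hypothesis $\tr(V_+^{3+\gamma})\in\mathsf L^1$ gives in particular $\tr V_+\in \mathsf L^{3}_{\loc}$; to get a genuine \emph{infinitesimal} form bound one splits $V_+=V_+\mathbbm 1_{\{\tr V_+>R\}}+V_+\mathbbm 1_{\{\tr V_+\leqslant R\}}$ and observes the first summand has small $\mathsf L^{3/2}$ norm for large $R$ while the second is bounded, hence is controlled by $\varepsilon\|(-\Delta)^{1/4}\upsilon\|^2+C_\varepsilon\|\upsilon\|^2$. (Actually one only needs $\tr(V_+^{3})\in\mathsf L^1_{\loc}$ plus decay for this; the stronger exponent $3+\gamma$ is reserved for the later eigenvalue estimates.) In the critical case $\nu=1$, one uses Theorem \ref{t: l bound} with, say, $\lambda=3/(3+\gamma)\cdot\text{(something in }(0,1))$ chosen so that $\mathsf H^{\lambda/2}\hookrightarrow \mathsf L^{3/(?)}$ matches the integrability $\tr(V_+^{3+\gamma})\in\mathsf L^1$; concretely pick $\lambda\in(0,1)$ with $\frac{3}{\lambda}$ conjugate exponent making Hölder close, and absorb the $-a^{-1}$ term into the $+\|\upsilon\|^2$ allowance by fixing $a$.

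Once the $V_+$ form is shown to be infinitesimally form-bounded relative to $\||D^\nu|^{1/2}\cdot\|^2$, i.e.\ for every $\varepsilon>0$ there is $b_\varepsilon$ with $\int\langle\upsilon,V_+\upsilon\rangle\leqslant \varepsilon\||D^\nu|^{1/2}\upsilon\|^2+b_\varepsilon\|\upsilon\|^2$, and the $V_-$ part satisfies the given bound with constant $C$ (which may not be small), I combine them: the full perturbation obeys
\[
 \Big|\int\langle\upsilon,V\upsilon\rangle\Big|\leqslant (C+\varepsilon)\||D^\nu|^{1/2}\upsilon\|^2+(C+b_\varepsilon)\|\upsilon\|^2 .
\]
If $C<1$ this is immediately a KLMN situation and we conclude $\mathfrak d^\nu(V)$ is closed and semibounded. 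If $C\geqslant 1$, the $V_-$ bound is not small, but then I do not need to subtract $V_-$ at all — the form $\upsilon\mapsto \||D^\nu|^{1/2}\upsilon\|^2+\int\langle\upsilon,V_-\upsilon\rangle$ only \emph{raises} the form, so it is trivially closed and bounded below on the same domain, and $\mathfrak d^\nu(V)=\big(\||D^\nu|^{1/2}\cdot\|^2+\int\langle\cdot,V_-\cdot\rangle\big)-\int\langle\cdot,V_+\cdot\rangle$; since $V_+$ is infinitesimally form-small relative to $\||D^\nu|^{1/2}\cdot\|^2$ and hence relative to the larger form, KLMN again applies. The main obstacle I anticipate is purely bookkeeping: making sure the Sobolev embedding constant and the choice of $\lambda$ (in the $\nu=1$ case) are compatible with the stated integrability hypothesis on $V_+$, and handling the fact that the form domain $P_+^\nu\mathfrak D(|D^\nu|^{1/2})$ is a \emph{subspace} of $\mathfrak D(|D^\nu|^{1/2})$ — but this causes no trouble since all bounds used are monotone under restriction to a closed subspace, and $P_+^\nu$ is an orthogonal projection commuting with $|D^\nu|$.
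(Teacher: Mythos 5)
Your proposal follows the standard KLMN route and is, in substance, correct; the paper itself does not spell out the proof but declares it ``fully analogous'' to Corollary~2 of \cite{MorozovMueller}, and the ingredients you invoke (split $V = V_+ - V_-$, use the hypothesis for $V_-$, use Theorem~\ref{absolute value bounds theorem} or Theorem~\ref{t: l bound} together with Sobolev/H\"older to make $V_+$ infinitesimally form-small, then KLMN with the comparison form $\||D^\nu|^{1/2}\cdot\|^2 + \int\langle\cdot,V_-\cdot\rangle$) are exactly what one expects.

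One slip in the H\"older step: you write $\int\langle\upsilon,V_+\upsilon\rangle \leqslant \|\tr V_+\|_{\mathsf L^{3/2}}\,\||\upsilon|\|_{\mathsf L^3}^2$, but with $f = \tr V_+$ and $g = |\upsilon|^2$ the dual pairing $\|f\|_{\mathsf L^{3/2}}\|g\|_{\mathsf L^{3}}$ produces $\||\upsilon|^2\|_{\mathsf L^3} = \||\upsilon|\|_{\mathsf L^{6}}^2$, not $\||\upsilon|\|_{\mathsf L^3}^2$, and $\mathsf H^{1/2}(\mathbb R^3)$ embeds only into $\mathsf L^3$, not $\mathsf L^6$. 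The correct pairing is $\|\tr V_+\|_{\mathsf L^{3}}\,\||\upsilon|\|_{\mathsf L^3}^2$ (exponents $3$ and $3/2$, not the other way around). Your subsequent cutoff argument in fact works with $\mathsf L^3$: since $\tr V_+ \in \mathsf L^{3+\gamma}$, one has $\|\tr V_+\mathbbm 1_{\{\tr V_+ > R\}}\|_{\mathsf L^3} \to 0$ as $R \to \infty$ (Chebyshev plus H\"older for $\gamma>0$, dominated convergence for $\gamma=0$), and the bounded piece contributes $R\|\upsilon\|^2$; so the infinitesimal form bound holds. For $\nu=1$ the same bookkeeping with $\lambda\in[3/(3+\gamma),1)$ (nonempty because $\gamma>0$) closes the gap, as you indicate. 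With the exponents corrected, the proof is sound.
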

According to Theorem 10.1.2 in \cite{BirmanSolomjak}, there exists a unique self-adjoint operator $D^\nu(V)$ in $\mathfrak H_+^\nu$ associated to $\mathfrak d^\nu(V)$.

In the following theorem we estimate the negative spectrum of $D^\nu(V)$ by combining Theorems \ref{absolute value bounds theorem} and \ref{t: l bound} with the estimates
\begin{align*}
 \rank\big((-\Delta)^{t} -V\big)_- \leqslant C_t^\Delta\int_{\mathbb R^3}\tr \big(V_+(\mathbf x)\big)^{3/(2t)}\mathrm d\mathbf x,
\end{align*}
which according to Example 3.3 of \cite{Frank2014CLR} hold for all $0 <t <3/2$ with (non-optimal)
\begin{align*}
 C_t^\Delta :=\frac1{4\pi^2t}\Big(\frac3{3 -2t}\Big)^{(3 -t)/t}.
\end{align*}

We obtain the following estimates on the negative eigenvalues of $D^\nu(V)$.
\begin{thm}\label{CLR theorem}
For $\nu, \lambda\in [0, 1)$ let $C_\nu$ and $K_\lambda$ be defined in Theorems \ref{absolute value bounds theorem} and \ref{t: l bound}, respectively.\\
\emph{(a) Cwickel-Lieb-Rozenblum inequalities:}  For $\nu\in [0, 1)$,
\begin{equation}\label{CLR}
 \rank \big(D^\nu(V)\big)_- \leqslant C_{1/2}^\Delta C_\nu^{-3}\int_{\mathbb R^3} \tr \big(V_+(\mathbf x)\big)^3\dd\mathbf x.
\end{equation}
 \emph{(b) (Hardy-)Lieb-Thirring inequalities:} For $\nu\in [0, 1]$ and $\gamma> 0$, the estimate
\begin{equation*}
 \tr\big(D^\nu(V)\big)^\gamma_- \leqslant C^{\mathrm{LT}}_{\nu, \gamma}\int_{\mathbb R^3} \tr \big(V_+(\mathbf x)\big)^{3+ \gamma}\dd\mathbf x
\end{equation*}
holds with
\begin{align*}
 C^{\mathrm{LT}}_{\nu, \gamma} &:=\frac{6C_{1/2}^\Delta C_\nu^{-3}}{(\gamma +1)(\gamma +2)(\gamma +3)}, \quad \text{for }\nu< 1;\\ C^{\mathrm{LT}}_{1, \gamma} &:=\min_{\lambda\in (3/(3 +\gamma), 1)}\frac{\gamma^{1 +\gamma}\lambda^\gamma\,\Gamma(1 +3/\lambda)\,\Gamma(3 +\gamma -3/\lambda)\,C_{\lambda/2}^\Delta K_\lambda^{-3/\lambda}}{\Gamma(4 +\gamma)\,(3 -3\lambda)^{3(1- \lambda)/\lambda}\big((3 +\gamma)\lambda -3\big)^{3 +\gamma -3/\lambda}}.
\end{align*}
\end{thm}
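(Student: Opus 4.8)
The plan is to reduce the spectral problem for $D^\nu(V)$ on $\mathfrak H_+^\nu$ to a spectral problem for a perturbed fractional Laplacian on the full space $\mathsf L^2(\mathbb R^3,\mathbb C^4)$, and then invoke the CLR-type bounds of \cite{Frank2014CLR} quoted in the excerpt. First I would record the abstract principle: if $A \geqslant 0$ is self-adjoint in a Hilbert space $\mathfrak H$, $P$ is an orthogonal projection, and $W$ is a form-bounded perturbation, then $\rank\big((PAP\restriction P\mathfrak H) - PWP\big)_- \leqslant \rank\big(A - W\big)_-$ (and similarly for $\gamma$-th Riesz means), since the Rayleigh quotient of $PAP - PWP$ on $P\mathfrak H$ dominates that of $A - W$ restricted to the same subspace. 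Applying this with $A = |D^\nu|$, $P = P_+^\nu$, $W = V$ we get $\rank\big(D^\nu(V)\big)_- \leqslant \rank\big(|D^\nu| - V\big)_-$ and $\tr\big(D^\nu(V)\big)_-^\gamma \leqslant \tr\big(|D^\nu| - V\big)_-^\gamma$; here one must note $D^\nu(V)$ acts only on the positive spectral subspace, so only $P_+^\nu V P_+^\nu$ enters, but dropping the projections only enlarges the negative part. Thus it suffices to bound the negative spectrum of $|D^\nu| - V$ on the whole space.

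For part (a), $\nu\in[0,1)$: by Theorem \ref{absolute value bounds theorem}, $|D^\nu|\geqslant C_\nu\sqrt{-\Delta}$ in the form sense, so $|D^\nu| - V \geqslant C_\nu\sqrt{-\Delta} - V$ as forms, hence $\rank\big(|D^\nu| - V\big)_- \leqslant \rank\big(C_\nu\sqrt{-\Delta} - V\big)_-$. Now rescale: $C_\nu\sqrt{-\Delta} - V = C_\nu\big(\sqrt{-\Delta} - C_\nu^{-1}V\big)$, and apply the quoted bound with $t = 1/2$ to the operator $\sqrt{-\Delta} - C_\nu^{-1}V$, giving $\rank \leqslant C_{1/2}^\Delta \int \tr\big(C_\nu^{-1}V_+(\mathbf x)\big)^{3}\,\mathrm d\mathbf x = C_{1/2}^\Delta C_\nu^{-3}\int\tr V_+^3$, which is exactly \eqref{CLR}.

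For part (b), the Lieb-Thirring bounds are obtained from the CLR bounds by the standard layer-cake (integration over levels) argument: for any self-adjoint $H$ bounded below, $\tr(H)_-^\gamma = \gamma\int_0^\infty s^{\gamma-1}\rank\big(H + s\big)_-\,\mathrm ds$. For $\nu<1$ one writes $|D^\nu| - V + s \geqslant C_\nu\sqrt{-\Delta} - (V - s)$ and applies the $t=1/2$ CLR bound to $(V-s)_+ \leqslant V_+$ pointwise... more precisely to $\rank\big(C_\nu\sqrt{-\Delta} - (V-s)\big)_- \leqslant C_{1/2}^\Delta C_\nu^{-3}\int\tr\big((V(\mathbf x) - s)_+\big)^3\,\mathrm d\mathbf x$; plugging into the layer-cake formula and evaluating $\gamma\int_0^\infty s^{\gamma-1}(t-s)_+^3\,\mathrm ds = \frac{6\,t^{3+\gamma}}{(\gamma+1)(\gamma+2)(\gamma+3)}$ via the Beta function yields $C^{\mathrm{LT}}_{\nu,\gamma}$ as stated. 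For $\nu=1$ one cannot use $\sqrt{-\Delta}$; instead, for each admissible $\lambda\in(3/(3+\gamma),1)$ one uses Theorem \ref{t: l bound}: $|D^1| \geqslant K_\lambda a^{\lambda-1}(-\Delta)^{\lambda/2} - a^{-1}$ for every $a>0$, so $|D^1| - V + s \geqslant K_\lambda a^{\lambda-1}(-\Delta)^{\lambda/2} - \big(V - s + a^{-1}\big)$. Optimising the free parameter $a$ against the level $s$ (the natural choice balances $a^{-1}$ against $s$, giving the power $(3+\gamma)\lambda - 3 > 0$ in the exponent), applying the CLR bound with $t = \lambda/2$, and again evaluating the resulting $s$-integral with a Beta-function identity produces the displayed $C^{\mathrm{LT}}_{1,\gamma}$ after minimising over $\lambda$. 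The main obstacle is the bookkeeping in the $\nu=1$ case: one must carefully track the interplay between the scaling parameter $a$ and the integration variable $s$ in the layer-cake integral so that the $s$-integral converges (this forces $\lambda > 3/(3+\gamma)$) and so that the Gamma-function factors assemble into exactly the stated constant; the $\nu<1$ case and part (a) are then routine specialisations of the same scheme.
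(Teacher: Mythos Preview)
Your proposal is correct and follows precisely the approach the paper indicates: the paper itself does not give a detailed proof of Theorem~\ref{CLR theorem} but states that it follows by combining Theorems~\ref{absolute value bounds theorem} and~\ref{t: l bound} with the CLR-type bounds of \cite{Frank2014CLR}, referring to \cite{MorozovMueller} for the analogous argument. Your sketch---passing from $\mathfrak H_+^\nu$ to the full space by the variational principle, applying the lower bounds on $|D^\nu|$, invoking the fractional CLR estimate, and for part~(b) running the layer-cake integration with a Beta-function evaluation (and in the critical case $\nu=1$ optimising over the auxiliary parameter $a$ and then over $\lambda$)---is exactly that argument.
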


It turns out that \eqref{CLR} cannot hold for $\nu =1$ for any constant in front of the integral, since $D^\nu(0)$ has a virtual level at zero. This is the result of the theorem below.

\begin{thm}\label{t: virtual level}
Suppose that $V =\begin{pmatrix}
                  V_{\mathrm{I,I}} & V_{\mathrm{I,II}}\\ V_{\mathrm{II,I}}& V_{\mathrm{II,II}}
                 \end{pmatrix}$
is a block matrix function on $\mathbb R^3$ with $2\times2$ blocks which satisfies the assumptions of Lemma \ref{l: defining corollary}. Using the spherical coordinates in $\mathbb R^3$ we define the $2\times2$ matrix functions
\begin{align*}
A_{(-1, -1)}(\rho, \theta, \phi)&:= \begin{pmatrix}
                                                                          \Big\langle\binom01, V_{\mathrm{I,I}}\binom01\Big\rangle & \Big\langle\binom0{\mathrm i}, V_{\mathrm{I,II}}\binom{\mathrm e^{-\mathrm i\phi}\sin\theta}{-\cos\theta}\Big\rangle\\ \Big\langle\binom{\mathrm e^{-\mathrm i\phi}\sin\theta}{-\cos\theta}, V_{\mathrm{II,I}}\binom0{\mathrm i}\Big\rangle & \Big\langle\binom{\mathrm e^{-\mathrm i\phi}\sin\theta}{-\cos\theta}, V_{\mathrm{II,II}}\binom{\mathrm e^{-\mathrm i\phi}\sin\theta}{-\cos\theta}\Big\rangle
                                                                         \end{pmatrix},\\
A_{(1, -1)}(\rho, \theta, \phi)&:= \begin{pmatrix}
                                                                          \Big\langle\binom10, V_{\mathrm{I,I}}\binom10\Big\rangle & \Big\langle\binom{\mathrm i}0, V_{\mathrm{I,II}}\binom{\cos\theta}{\mathrm e^{\mathrm i\phi}\sin\theta}\Big\rangle\\ \Big\langle\binom{\cos\theta}{\mathrm e^{\mathrm i\phi}\sin\theta}, V_{\mathrm{II,I}}\binom{\mathrm i}0\Big\rangle & \Big\langle\binom{\cos\theta}{\mathrm e^{\mathrm i\phi}\sin\theta}, V_{\mathrm{II,II}}\binom{\cos\theta}{\mathrm e^{\mathrm i\phi}\sin\theta}\Big\rangle
                                                                         \end{pmatrix},\\
A_{(-1, 1)}(\rho, \theta, \phi)&:= \begin{pmatrix}
                                                                          \Big\langle\binom{\mathrm e^{-\mathrm i\phi}\sin\theta}{-\cos\theta}, V_{\mathrm{I,I}}\binom{\mathrm e^{-\mathrm i\phi}\sin\theta}{-\cos\theta}\Big\rangle & \Big\langle\binom{\mathrm e^{-\mathrm i\phi}\sin\theta}{-\cos\theta}, V_{\mathrm{I,II}}\binom0{-\mathrm i}\Big\rangle\\ \Big\langle\binom0{-\mathrm i}, V_{\mathrm{II,I}}\binom{\mathrm e^{-\mathrm i\phi}\sin\theta}{-\cos\theta}\Big\rangle & \Big\langle\binom01, V_{\mathrm{II,II}}\binom01\Big\rangle
                                                                         \end{pmatrix},\\
A_{(1, 1)}(\rho, \theta, \phi)&:= \begin{pmatrix}
                                                                          \Big\langle\binom{\cos\theta}{\mathrm e^{\mathrm i\phi}\sin\theta}, V_{\mathrm{I,I}}\binom{\cos\theta}{\mathrm e^{\mathrm i\phi}\sin\theta}\Big\rangle & \Big\langle\binom{\cos\theta}{\mathrm e^{\mathrm i\phi}\sin\theta}, V_{\mathrm{I,II}}\binom{-\mathrm i}0\Big\rangle\\ \Big\langle\binom{-\mathrm i}0, V_{\mathrm{II,I}}\binom{\cos\theta}{\mathrm e^{\mathrm i\phi}\sin\theta}\Big\rangle & \Big\langle\binom10, V_{\mathrm{II,II}}\binom10\Big\rangle
                                                                         \end{pmatrix},
\end{align*}
where $V_{p,q}$ with $p,q\in\{\mathrm{I},\mathrm{II}\}$ are taken at $(\rho, \theta, \phi)$.
For $\mathbf j \in \{-1, 1\}^{2}$ and $\rho >0$ let
\begin{align*}
 W_{\mathbf j}(\rho) :=\int_0^\pi\int_0^{2\pi}A_{\mathbf j}(\rho, \theta, \phi)\mathrm d\phi\,\sin\theta\,\mathrm d\theta.
\end{align*}
If for some $\mathbf j \in \{-1, 1\}^{2}$ we have
\begin{align}\begin{split}\label{V integrability}
 &\|W_{\mathbf j}\|_{\mathbb C^{2\times2}}\in \mathsf L^1(\mathbb R_+, \rho^{2}\mathrm d\rho), \quad\int_0^\infty \Big\langle \binom1{j_2}, \big|W_{\mathbf j}(\rho)\big|\binom1{j_2}\Big\rangle\mathrm d\rho <\infty
\end{split}\end{align}
and
\begin{align}\label{VL condition}
\int_0^\infty\Big\langle\binom1{j_2}, W_{\mathbf j}(\rho)\binom1{j_2}\Big\rangle\mathrm d\rho >0,
\end{align}
then the operator $D^1(V)$ has at least one negative eigenvalue.
\end{thm}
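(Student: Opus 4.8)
The plan is to argue variationally: in the angular-momentum channel singled out by $\mathbf j$ I would construct a family of trial functions $\upsilon_L\in P_+^1\mathfrak D\big(|D^1|^{1/2}\big)$ on which $\mathfrak d^1(V)$ becomes negative, and then combine this with the fact that the essential spectrum of $D^1(V)$ is $[0,\infty)$. First I would perform the partial wave decomposition of $D^1$: $\mathsf L^2(\mathbb R^3,\mathbb C^4)$ splits orthogonally into channels indexed by the Dirac quantum numbers $(\kappa,m_j)$, on each of which $D^1$ acts as a radial operator on $\mathsf L^2(\mathbb R_+,\mathbb C^2)$; the constant and the $(\theta,\phi)$-dependent two-spinors appearing in the matrices $A_{\mathbf j}$ are, up to normalisation and phases, the spinor spherical harmonics of the four channels with $|\kappa|=1$, which are exactly the critical ones for $\nu=1$. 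Fixing the channel corresponding to $\mathbf j$ and restricting to $\mathfrak H_+^1$, I would identify the zero-energy resonance $\psi_*$ of $|D^1|$ in this channel: since $D^1$ is homogeneous of degree $-1$ the reduced operator is diagonalised by the Mellin transform, and one checks --- directly, or via the comparison with the Herbst operator $(-\Delta)^{1/2}-\tfrac2\pi|\cdot|^{-1}$ underlying \eqref{SSS lower bound} --- that $\psi_*$ behaves in Cartesian coordinates like $|\mathbf x|^{-1}$ times the channel spinor harmonics. The crucial point is that $\psi_*\notin\mathsf L^2$ yet lies locally in $\mathfrak D\big(|D^1|^{1/2}\big)$; here the criticality $\nu=1$, which enlarges this form domain beyond $\mathsf H^{1/2}$ near the origin (cf.\ Lemma \ref{Lemma_operator_core}), is essential.

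Next I would take $\upsilon_L:=\chi_L\psi_*$ with $\chi_L$ a (logarithmic) radial cut-off equal to $1$ on $\{|\mathbf x|\le 1\}$ and $0$ on $\{|\mathbf x|\ge L\}$, and estimate $\mathfrak d^1(V)[\upsilon_L]=\big\||D^1|^{1/2}\upsilon_L\big\|^2-\int_{\mathbb R^3}\langle\upsilon_L,V\upsilon_L\rangle$. For the potential part the angular integration collapses $\int_{\mathbb R^3}\langle\upsilon_L,V\upsilon_L\rangle$ to $\int_0^\infty|\chi_L(\rho)|^2\big\langle\binom1{j_2},W_{\mathbf j}(\rho)\binom1{j_2}\big\rangle\,\mathrm d\rho$: the $\rho^{-2}$ decay of $|\psi_*|^2$ cancels the $\rho^{2}$ of the volume element, and at the resonance the ratio of the upper to the lower radial components is precisely the direction $\binom1{j_2}$. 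By dominated convergence, justified by the unweighted integrability in \eqref{V integrability}, this term tends to $\delta:=\int_0^\infty\big\langle\binom1{j_2},W_{\mathbf j}(\rho)\binom1{j_2}\big\rangle\,\mathrm d\rho$, which is strictly positive by \eqref{VL condition}. (The weighted hypothesis $\|W_{\mathbf j}\|\in\mathsf L^1(\mathbb R_+,\rho^{2}\mathrm d\rho)$ is used to control the contribution near the origin and the relative form-compactness of $V$.) For the kinetic part one must show $\big\||D^1|^{1/2}\upsilon_L\big\|^2\to0$; working in the critical channel through the Mellin diagonalisation, $\chi_L\psi_*$ has a Mellin transform concentrating, with width tending to $0$, at the zero of the channel symbol, and the order of vanishing of that symbol at its minimum --- the borderline order, just as for the Herbst operator --- together with the logarithmic cut-off forces the kinetic contribution to $0$.

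For $L$ large it then follows that $\mathfrak d^1(V)[\upsilon_L]\le\big\||D^1|^{1/2}\upsilon_L\big\|^2-\delta/2<0$, so $\inf\sigma\big(D^1(V)\big)<0$. Since $V$ satisfies the hypotheses of Lemma \ref{l: defining corollary} --- in particular $\tr V_+^{3+\gamma}\in\mathsf L^1(\mathbb R^3)$ with $\gamma>0$, as $\nu=1$ --- it is a form-compact perturbation of $|D^1|$ restricted to $\mathfrak H_+^1$, whose spectrum is $[0,\infty)$; hence $\sigma_{\mathrm{ess}}\big(D^1(V)\big)=[0,\infty)$ and the negative part of the spectrum consists of eigenvalues, so $D^1(V)$ has at least one negative eigenvalue. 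I expect the kinetic estimate $\big\||D^1|^{1/2}\upsilon_L\big\|^2\to0$ to be the main obstacle: the crude bound $\langle\upsilon_L,|D^1|\upsilon_L\rangle\le\|\upsilon_L\|\,\|D^1\upsilon_L\|$ only yields $O(1)$, not $o(1)$, because $\psi_*$ decays exactly at the borderline rate, so one genuinely needs the explicit Mellin structure of $|D^1|$ in the critical channel and a carefully tuned cut-off; a secondary technical point is the precise identification of the $P_+^1$-resonance and the verification that it lies in $\mathfrak D\big(|D^1|^{1/2}\big)$ locally.
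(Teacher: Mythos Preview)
Your overall strategy coincides with the paper's: reduce to a single critical channel $(l,s)=\big((j_2+1)/2,-j_2/2\big)$ via the angular decomposition $\mathcal A$, identify the quadratic form on a channel trial function with $\langle\psi,(D^1_{(j_2+1)/2,-j_2/2}-W_{\mathbf j}/4\pi)\psi\rangle$, produce a $\psi$ making this negative, and then argue that the negative spectrum is discrete. The difference is in how the last two steps are carried out. The paper does not construct the trial function at all; it simply invokes Part~I of Theorem~2.3 in \cite{morozov2016virtual}, which supplies precisely such a $\psi$ under the hypotheses \eqref{V integrability}--\eqref{VL condition}. Your explicit resonance-plus-cutoff construction $\upsilon_L=\chi_L\psi_*$ is, in effect, a sketch of what lives inside that cited result, and the obstacle you flag --- the kinetic estimate $\big\||D^1|^{1/2}\upsilon_L\big\|^2\to 0$ via the Mellin symbol --- is indeed where the work lies; the paper avoids it entirely by citation.

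For discreteness of the negative spectrum the paper also takes a shorter route than you do: rather than arguing form-compactness of $V$ and invoking Weyl's theorem, it applies Theorem~\ref{CLR theorem}(b) (the Hardy--Lieb--Thirring inequality for $\nu=1$), which directly gives $\tr\big(D^1(V)\big)_-^\gamma<\infty$ and hence that the negative spectrum consists of eigenvalues. Your form-compactness claim is not quite justified by the stated hypotheses: Lemma~\ref{l: defining corollary} only assumes $\tr V_+^{3+\gamma}\in\mathsf L^1$ for the \emph{positive} part and mere relative form-boundedness (not compactness, and with no smallness of the bound) for $V_-$, so the essential-spectrum argument would need extra work. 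The Lieb--Thirring route sidesteps this cleanly.
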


In particular, $D^1(v\mathbb I_4)$ has negative eigenvalues for any non-zero continuous scalar function $v \geqslant 0$, which quickly decays at infinity.

The article is organised as follows: We do not provide the details of the proofs which are fully analogous to those of \cite{MorozovMueller}. This, in particular, applies to Lemma \ref{l: defining corollary} and Theorem \ref{CLR theorem}. In Section \ref{transforms section} we prepare useful representations of operators of interest with the help of certain unitary transforms. Section \ref{scalar section} is dedicated to the study of the operator $(-\Delta)^{1/2}-\alpha|\cdot|^{-1}$ in the representation, in which it can be relatively easily compared with $|D^\nu|$. Such comparison is done separately in different channels of the angular momentum decomposition in Section \ref{critical channels section}. Some less interesting technical parts of the proofs are relegated to the appendices. Finally, in Section \ref{main proofs section} we complete the proofs of Theorem \ref{absolute value bounds theorem}, Corollary \ref{c: massive estimates} and Theorems \ref{t: l bound} and \ref{t: virtual level}.

\paragraph{Acknowledgement:} S. M. was supported by the RSF grant 15-11-30007.

\section{Mellin, Fourier and related transforms in spherical coordinates}\label{transforms section}

Let $(\rho, \theta,\phi),(\varrho, \vartheta,\varphi)\in [0, \infty)\times[0,\pi)\times[0, 2\pi)$ be the spherical coordinates for $\mathbb R^3$ in coordinate and momentum spaces, respectively.

\paragraph{Fourier transform.}
We use the standard unitary Fourier transform in $\mathsf L^2(\mathbb R^3)$ given in the spherical coordinates for $\upsilon\in \mathsf L^1(\mathbb R^3)\cap \mathsf L^2(\mathbb R^3)$ by
\begin{align}\label{Fourier transform}
\begin{split}
 (\mathcal F\upsilon)(\varrho, \vartheta,\varphi):=& \frac1{(2\pi)^{3/2}}\int_0^\infty\int_0^{\pi}\int_0^{2\pi}\mathrm e^{-\mathrm i \varrho \rho(\sin\theta\sin\vartheta\cos(\phi- \varphi)+\cos\theta\cos\vartheta)}\\
 & \times\upsilon(\rho, \theta,\phi)\, \mathrm{d}\phi\, \sin\theta\,\mathrm{d}\theta\, \rho^2\mathrm{d}\rho.
\end{split}
\end{align}

Let $\mathbb S^2$ be the unit sphere in $\mathbb R^3$.
As an orthonormal basis in $\mathsf{L}^2(\mathbb{S}^{2};\mathbb{C}^{2})$ we use the spherical spinors $\Omega_{l,m,s}$, which are defined by (2.1.25) and (2.1.26) in \cite{BalinskyEvans}, with $l\in\mathbb{N}_{0}$, $\ m\in \{-l-1/2,\ldots,l+1/2\}$ and $s\in\{-1/2,1/2\}$. The corresponding  index set is denoted by 
\begin{align*}
\mathfrak{T}:=
                  \bigg\{(l,m,s): l\in \mathbb{N}_{0}, m\in \Big\{-l-\frac{1}{2},\ldots,l+
                  \frac{1}{2}\Big\},s=\pm\frac{1}{2},\Omega_{l,m,s}\neq 0 \bigg\},           
\end{align*}
see (2.1.27) in \cite{BalinskyEvans}.
\begin{lem}\label{Fourier channel lemma}
 For $(l,m,s)\in \mathfrak{T}$ and $\psi\in \mathsf C_0^\infty\big([0, \infty)\big)$ the Fourier transform of
\begin{equation}\label{auxiliary nonsense}
 \Psi_{l,m,s}(\rho, \theta,\phi): = \rho^{-1}\psi(\rho)\Omega_{l,m,s}(\theta,\phi)
\end{equation}
 is given in the spherical coordinates by
\begin{equation}\label{Fourier channel}
 (\mathcal F\Psi_{l,m,s})(\varrho, \vartheta,\varphi)= (-\mathrm{i})^{l}
 \int_0^\infty \sqrt{\frac{\rho}{\varrho}}J_{l+1/2}(\rho \varrho)\psi(\rho)\mathrm d\rho\mathrm \, \Omega_{l,m,s}(\vartheta,\varphi).
\end{equation}
Here $J_{l+1/2}$ is the Bessel function of the first kind. 
\end{lem}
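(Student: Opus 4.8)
The plan is to reduce the three-dimensional Fourier transform \eqref{Fourier transform} of the ansatz \eqref{auxiliary nonsense} to a one-dimensional Hankel (Fourier–Bessel) transform by exploiting the fact that the spherical spinors $\Omega_{l,m,s}$ are, up to the two-component structure, spherical harmonics of degree $l$, which are eigenfunctions of the Fourier transform. Concretely, I would first recall the plane-wave (Rayleigh/Gegenbauer) expansion
\begin{align*}
 \mathrm e^{-\mathrm i\varrho\rho\,\cos\gamma} = \sum_{k=0}^\infty (2k+1)(-\mathrm i)^k j_k(\rho\varrho)P_k(\cos\gamma),
\end{align*}
where $\gamma$ is the angle between the directions $(\theta,\phi)$ and $(\vartheta,\varphi)$, $j_k$ is the spherical Bessel function and $P_k$ the Legendre polynomial, so that $\cos\gamma=\sin\theta\sin\vartheta\cos(\phi-\varphi)+\cos\theta\cos\vartheta$ exactly matches the phase in \eqref{Fourier transform}. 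Inserting this expansion and \eqref{auxiliary nonsense} into \eqref{Fourier transform}, the radial and angular integrals decouple: the $\rho$-integral produces $\int_0^\infty j_k(\rho\varrho)\psi(\rho)\rho\,\mathrm d\rho$ and the $(\theta,\phi)$-integral produces $\int_{\mathbb S^2}P_k(\cos\gamma)\Omega_{l,m,s}(\theta,\phi)\,\mathrm d\omega$.

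The second, angular, integral is handled by the addition theorem for spherical harmonics together with the fact that each component of $\Omega_{l,m,s}$ is a spherical harmonic of degree $l$: the reproducing-kernel identity $\frac{2k+1}{4\pi}\int_{\mathbb S^2}P_k(\cos\gamma)Y(\theta,\phi)\,\mathrm d\omega = \delta_{kl}\,\frac1{?}\,Y(\vartheta,\varphi)$ — more precisely, $\int_{\mathbb S^2}P_k(\cos\gamma)\,Y_l(\theta,\phi)\,\mathrm d\omega = \frac{4\pi}{2l+1}\delta_{kl}\,Y_l(\vartheta,\varphi)$ — kills all terms except $k=l$ and reproduces $\Omega_{l,m,s}(\vartheta,\varphi)$. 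The surviving prefactor is $(-\mathrm i)^l(2l+1)\cdot\frac1{(2\pi)^{3/2}}\cdot\frac{4\pi}{2l+1}\int_0^\infty j_l(\rho\varrho)\psi(\rho)\rho^2\,\mathrm d\rho\,\cdot\rho^{-1}$, i.e. after collecting constants $(-\mathrm i)^l\sqrt{2/\pi}\int_0^\infty j_l(\rho\varrho)\psi(\rho)\rho\,\mathrm d\rho$. Finally I would use the elementary relation between the spherical Bessel function and the half-integer Bessel function, $j_l(t)=\sqrt{\pi/(2t)}\,J_{l+1/2}(t)$, to rewrite $\sqrt{2/\pi}\,j_l(\rho\varrho)\rho = \sqrt{\rho/\varrho}\,J_{l+1/2}(\rho\varrho)$, which is precisely the kernel in \eqref{Fourier channel}. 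Since $\psi\in\mathsf C_0^\infty([0,\infty))$, the function $\Psi_{l,m,s}$ lies in $\mathsf L^1\cap\mathsf L^2(\mathbb R^3,\mathbb C^4)$ and all interchanges of sum and integral are justified by dominated convergence using the uniform bound $|j_k(t)|\leqslant 1$ and the fast decay of $\psi$.

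The only mild obstacle is bookkeeping of normalisation constants — the $(2\pi)^{-3/2}$ from the Fourier transform, the $2l+1$ from the Gegenbauer expansion, and the $4\pi/(2l+1)$ from the addition theorem must cancel to leave exactly the stated kernel, and one must be careful that the two-component nature of $\Omega_{l,m,s}$ plays no role since the Fourier transform acts componentwise and the angular integral treats each of the two components identically (both being degree-$l$ spherical harmonics). A clean alternative that avoids the explicit constant chase is to note that the Fourier transform in $\mathbb R^3$ restricted to the subspace $\{\rho^{-1}\psi(\rho)\Omega_{l,m,s}(\theta,\phi):\psi\in\mathsf L^2(\mathbb R_+,\mathrm d\rho)\}$ is unitarily equivalent to the modified Hankel transform $\psi\mapsto\int_0^\infty\sqrt{\rho/\varrho}\,J_{l+1/2}(\rho\varrho)\psi(\rho)\,\mathrm d\rho$ up to the phase $(-\mathrm i)^l$; this is a classical fact (see e.g. the spherical-harmonic decomposition of $\mathsf L^2(\mathbb R^3)$), and one then only needs to check the phase and normalisation on a single convenient test function, e.g. $\psi(\rho)=\rho^{l+1}\mathrm e^{-\rho^2/2}$, for which both sides are computed in closed form.
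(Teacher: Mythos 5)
Your proof is correct and is essentially the same as the paper's: the paper also substitutes the Gegenbauer/Rayleigh plane-wave expansion into \eqref{Fourier transform} (the paper writes it directly in terms of spherical harmonics via DLMF 10.60.7 and the addition theorem 14.30.9, which is exactly your Legendre-polynomial expansion with the addition theorem folded in), then uses orthogonality of spherical harmonics together with the fact that each component of $\Omega_{l,m,s}$ has degree $l$ to collapse the sum to the $k=l$ term, and converts $j_l$ to $J_{l+1/2}$. Your bookkeeping of constants ($4\pi/(2\pi)^{3/2}=\sqrt{2/\pi}$ and $\sqrt{2/\pi}\,\sqrt{\pi/(2\rho\varrho)}\,\rho=\sqrt{\rho/\varrho}$) is accurate and matches \eqref{Fourier channel}.
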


\begin{proof}
 According to 10.60.7, 14.7.1, 14.7.17, 14.30.9 and 10.47.3 in \cite{dlmf} the relation 
\begin{align}\label{aux1}
\begin{split}
&\exp\Big(-\mathrm i \varrho \rho\big(\sin\theta\sin\vartheta\cos(\phi- \varphi)+\cos\theta\cos\vartheta\big)\Big)\\&=(2\pi)^{3/2} \sum\limits_{l=0}^{\infty}
(-\mathrm{i})^{l} (\rho \varrho  )^{-1/2}J_{l+1/2}(\rho\varrho)
 \sum\limits_{m=-l}^{l} \overline{Y_{l,m}(\theta,\phi)}Y_{l,m}(\vartheta,\varphi)
 \end{split}
\end{align}
holds. Here for $l\in \mathbb{N}_{0}$ and $m\in \{-l,-l+1,\dots,l-1,l\}$ the functions $Y_{l,m}$ are the spherical harmonics, see 14.30.1 in \cite{dlmf}.
Substituting \eqref{auxiliary nonsense} into \eqref{Fourier transform} and using \eqref{aux1} together with (2.1.25), (2.1.26) in \cite{BalinskyEvans} and 14.30.8 in \cite{dlmf} we obtain \eqref{Fourier channel}.
\end{proof}

\paragraph{Mellin transform.} Let $\mathcal M$ be the unitary Mellin transform, first defined on $\mathsf C_0^\infty(\mathbb R_+)$ by
\begin{equation}\label{Mellin transform}
 (\mathcal M\psi)(\tau):= \frac1{\sqrt{2\pi}}\int_0^\infty r^{-1/2- \mathrm i\tau}\psi(r)\mathrm dr,
\end{equation}
and then extended to a unitary operator $\mathcal M:\mathsf L^2(\mathbb R_+)\to \mathsf L^2(\mathbb R)$, see e.g. \cite{YaouancOliverRaynal}.

\begin{definition}\label{D lambda definition}
For $\lambda\in \mathbb R\setminus\{0\}$ let $\mathfrak D^\lambda$ be the set of functions $\psi\in \mathsf L^2(\mathbb R)$ such that there exists $\Psi$ analytic in the strip $\mathfrak S^\lambda:= \big\{z\in\mathbb C: \Im z/\lambda\in (0, 1)\big\}$ with the properties
\begin{enumerate}
 \item $\underset{t\to +0}\Ltwolim\ \Psi(\cdot+ \mathrm it\lambda)= \psi$;
 \item there exists $\underset{t\to 1-0}{\Ltwolim}\ \Psi(\cdot+ \mathrm it\lambda)$;
 \item $\sup\limits_{t\in (0, 1)}\displaystyle\int_{\mathbb R}\big|\Psi(\tau+ \mathrm it\lambda)\big|^2\mathrm d\tau< \infty$.
\end{enumerate}
\end{definition}

For $\lambda\in \mathbb R$ let the operator of multiplication by $r^\lambda$ in $\mathsf L^2(\mathbb R_+, \textrm dr)$ be defined on its maximal domain $\mathsf L^2\big(\mathbb R_+, (1+ r^{2\lambda})\mathrm dr\big)$.
Let $R^\lambda: \mathfrak D^\lambda\to \mathsf L^2(\mathbb R)$ be the linear operator defined by
\begin{equation*}
 R^\lambda\psi:= \begin{cases}\underset{t\to 1-0}{\Ltwolim}\ \Psi(\cdot+ \mathrm it\lambda), &\lambda \neq 0;\\ \psi, &\lambda =0,\end{cases}
\end{equation*}
with $\Psi$ as in Definition \ref{D lambda definition}.
According to (14) in \cite{MorozovMueller},
\begin{equation}\label{complex shifted}
 \mathcal Mr^\lambda\mathcal M^*= R^\lambda
\end{equation}
holds for all $\lambda\in\mathbb R$ (see also \cite{YaouancOliverRaynal}, Section II). 

The following lemma can be proved in the same way as Lemma 8 in \cite{MorozovMueller}.
\begin{lem}\label{Mellin-Bessel lemma}
Let $l\in\mathbb{N}_{0}$. The relation
\begin{equation*}
 \bigg(\mathcal M\Big((-\mathrm i)^l\int_0^\infty\sqrt{\cdot r}J_{l+1/2}(\cdot r)\psi(r)\mathrm dr\Big)\bigg)(\tau) = \Xi_l(\tau)(\mathcal M\psi)(-\tau)
\end{equation*}
holds for every $\psi\in C_0^\infty\big([0, \infty)\big)$ and $\tau\in\mathbb R$ with
\begin{equation}\label{Xi}
 \Xi_l(\tau):= (-\mathrm i)^{l}2^{-\mathrm i\tau}\dfrac{\Gamma\big((l+ 3/2- \mathrm i\tau)/2\big)}{\Gamma\big((l+ 3/2+ \mathrm i\tau)/2\big)}.
\end{equation}
\end{lem}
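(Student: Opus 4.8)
The plan is to argue as in the proof of Lemma~8 in \cite{MorozovMueller}. Write $(\mathcal H_l\psi)(s):=\int_0^\infty\sqrt{sr}\,J_{l+1/2}(sr)\psi(r)\,\mathrm dr$ for the unitarily normalised Hankel transform of order $l+1/2$ occurring on the left-hand side. First I would record the behaviour of $\mathcal H_l\psi$ that makes both sides well defined for $\psi\in C_0^\infty\big([0,\infty)\big)$: since $J_{l+1/2}(z)=O(z^{l+1/2})$ as $z\to0$, one has $(\mathcal H_l\psi)(s)=O(s^{l+1})$ as $s\to+0$, so that $s^{-1/2-\mathrm i\tau}(\mathcal H_l\psi)(s)$ is integrable near the origin for every $l\in\mathbb N_0$; and inserting the large-argument expansion $J_{l+1/2}(z)=\sqrt{2/(\pi z)}\cos\big(z-(l+1)\pi/2\big)+O(z^{-3/2})$ and integrating by parts repeatedly in $r$ (admissible because $\psi$ is smooth with compact support) shows that $\mathcal H_l\psi$ decays faster than any power of $s$ as $s\to\infty$. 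Hence the Mellin integral defining the left-hand side converges absolutely for every $\tau\in\mathbb R$, and the right-hand side is manifestly well defined.

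The computation itself is short. Substituting the definitions of $\mathcal M$ and $\mathcal H_l$, interchanging the order of integration, and using the substitution $u=sr$ in the inner integral,
\begin{align*}
 \Big(\mathcal M\big((-\mathrm i)^l\mathcal H_l\psi\big)\Big)(\tau)
 &=\frac{(-\mathrm i)^l}{\sqrt{2\pi}}\int_0^\infty\psi(r)\sqrt r\int_0^\infty s^{-\mathrm i\tau}J_{l+1/2}(sr)\,\mathrm ds\,\mathrm dr\\
 &=(-\mathrm i)^l\bigg(\int_0^\infty u^{-\mathrm i\tau}J_{l+1/2}(u)\,\mathrm du\bigg)\,\frac1{\sqrt{2\pi}}\int_0^\infty r^{-1/2+\mathrm i\tau}\psi(r)\,\mathrm dr.
\end{align*}
The last factor equals $(\mathcal M\psi)(-\tau)$ by \eqref{Mellin transform}. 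For the middle factor I would invoke the classical formula for the Mellin transform of the Bessel function (10.22.43 in \cite{dlmf}; see also \cite{YaouancOliverRaynal}), which applies because $-(l+1/2)<\Re(1-\mathrm i\tau)=1<3/2$ and gives $\int_0^\infty u^{-\mathrm i\tau}J_{l+1/2}(u)\,\mathrm du=2^{-\mathrm i\tau}\,\Gamma\big((l+3/2-\mathrm i\tau)/2\big)/\Gamma\big((l+3/2+\mathrm i\tau)/2\big)$. Collecting the three factors reproduces $\Xi_l(\tau)$ as defined in \eqref{Xi}.

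The one delicate point --- and the step I expect to carry the real weight of the argument --- is the interchange of the two integrations, since the inner integral $\int_0^\infty s^{-\mathrm i\tau}J_{l+1/2}(sr)\,\mathrm ds$ is only conditionally convergent (its integrand has modulus $\asymp s^{-1/2}$ for large $s$). Following \cite{MorozovMueller}, I would regularise it by inserting a factor $\mathrm e^{-\varepsilon s}$ with $\varepsilon>0$: the resulting double integral converges absolutely, so Fubini applies, and after the substitution $u=sr$ one is left with $\int_0^\infty u^{-\mathrm i\tau}\mathrm e^{-\varepsilon u/r}J_{l+1/2}(u)\,\mathrm du$ under the $r$-integral over the compact support of $\psi$. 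Passing $\varepsilon\to+0$, the outer side converges to $\big(\mathcal M((-\mathrm i)^l\mathcal H_l\psi)\big)(\tau)$ by dominated convergence --- justified by the decay of $\mathcal H_l\psi$ established above --- while the regularised Bessel integral tends to the value quoted from \cite{dlmf}, uniformly for $r$ in compact subsets of $(0,\infty)$; combined with the harmless $O(u^{l+1/2})$ behaviour of the integrand near $u=0$, which controls the contribution of $r$ near the origin, this lets one pass the limit under the $r$-integral and establishes the identity for all $\tau\in\mathbb R$.
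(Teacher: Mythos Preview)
Your proof is correct and follows precisely the route the paper intends: the paper gives no argument of its own here but simply refers to Lemma~8 of \cite{MorozovMueller}, and your computation (Mellin of the Hankel transform, Fubini after an $\mathrm e^{-\varepsilon s}$ regularisation, the substitution $u=sr$, and the Weber--Schafheitlin formula 10.22.43 of \cite{dlmf}) reproduces that proof. One small remark: for general $\psi\in C_0^\infty\big([0,\infty)\big)$ that does not vanish near the origin, repeated integration by parts in $r$ produces boundary terms at $r=0$, so $\mathcal H_l\psi$ need only decay like $O(s^{-1})$ rather than faster than every power; but $O(s^{-1})$ already makes $s^{-1/2}|\mathcal H_l\psi(s)|$ integrable at infinity, so both the absolute convergence of the Mellin integral and the dominated-convergence step remain valid.
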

\begin{rem}\label{Xi bar and inverse remark}
For any $l \in\mathbb N_0$ the function $\Xi_l$ introduced in \eqref{Xi} allows a unique analytic continuation to $\mathbb C\setminus \big(-\mathrm i(l+3/2 +2\mathbb N_0)\big)$, whereas
\begin{align*}
 \Xi_l^{-1} =\overline{\Xi_l(\overline\cdot)}
\end{align*}
allows a unique analytic continuation to $\mathbb C\setminus \big(\mathrm i(l +3/2 +2\mathbb N_0)\big)$.
\end{rem}
Note that for all $l\in\mathbb{N}_0$ the function $\Xi_l^{-1}$ has no pole in $\overline{\mathfrak{S}^1}$. This situation is different (and simpler) than the one investigated in \cite{MorozovMueller}. The reason can be traced back to the non-existence of the Hardy inequality in $\mathbb{R}^2$ in opposite to $\mathbb{R}^3$. As a consequence, $\psi\in\mathfrak D^1$ always implies $\Xi_l^{-1}\psi\in \mathfrak D^1$. 
Thus the assumptions in the following lemma can be relaxed in comparison to Lemma 10 and Corollary 11 in \cite{MorozovMueller}. The arguments used in the proofs of these statements can also be successfully applied here.
\begin{lem}\label{V corollary}
For $l\in \mathbb N_0$ and $\psi\in\mathfrak D^1$ the identity
\begin{align*}
 \Xi_lR^1\Xi_l^{-1}\psi =V_{l}(\cdot +\mathrm i/2)R^1\psi
\end{align*}
holds with
\begin{equation}\label{V}
 V_l(z):=\frac{\Gamma\big((l+1+ \mathrm iz)/2\big)\Gamma\big((l+1- \mathrm iz)/2\big)}{2\Gamma\big((l+2+ \mathrm iz)/2\big)\Gamma\big((l+2- \mathrm iz)/2\big)},
\end{equation}
for $z\in (\mathbb C\setminus \mathrm i\mathbb Z)\cup\{0\}$.
\end{lem}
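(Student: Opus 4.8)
The plan is to combine the two identities recorded above for the complex-shifted Mellin multipliers. First, observe that $R^1$ is essentially $\mathcal M r\mathcal M^*$ by \eqref{complex shifted}, so on the Mellin side the content of the lemma is the operator identity $\Xi_l \cdot (\text{shift by }\mathrm i)\cdot \Xi_l^{-1} = V_l(\cdot+\mathrm i/2)\cdot(\text{shift by }\mathrm i)$, where the shift acts on analytic representatives. First I would fix $\psi\in\mathfrak D^1$ and let $\Psi$ be its analytic representative in $\overline{\mathfrak S^1}$ as in Definition \ref{D lambda definition}. Since, as noted just before the lemma, $\Xi_l^{-1}$ has no pole in $\overline{\mathfrak S^1}$ (this is exactly where the three-dimensional Hardy inequality enters and simplifies matters relative to \cite{MorozovMueller}), the product $\Xi_l^{-1}\Psi$ is again analytic in the open strip, its boundary values at $\Im z=0$ and $\Im z=1$ exist in $\mathsf L^2$, and the uniform-in-$t$ $\mathsf L^2$ bound of Definition \ref{D lambda definition}(3) survives because $\Xi_l^{-1}$ is bounded on every horizontal line inside the strip (it is a ratio of Gamma factors whose growth cancels on vertical lines, a standard Stirling estimate). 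Hence $\Xi_l^{-1}\psi\in\mathfrak D^1$ and $R^1(\Xi_l^{-1}\psi)$ is the boundary value $\big(\Xi_l^{-1}\Psi\big)(\cdot+\mathrm i)$.

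Next I would evaluate that boundary value pointwise: $\big(\Xi_l^{-1}\Psi\big)(\tau+\mathrm i) = \Xi_l^{-1}(\tau+\mathrm i)\,\Psi(\tau+\mathrm i) = \Xi_l^{-1}(\tau+\mathrm i)\,(R^1\psi)(\tau)$, using that $\Psi(\cdot+\mathrm i)=R^1\psi$ by definition. Applying $\Xi_l$ on the outside (here one must check $\Xi_l R^1\Xi_l^{-1}\psi$ really is the stated function, i.e. that multiplying the boundary value by $\Xi_l(\tau)$ — the value on the \emph{real} line — is legitimate; this follows because $\Xi_l R^1\eta$ for $\eta\in\mathfrak D^1$ is by construction $\Xi_l(\cdot)$ times the upper boundary value of the analytic representative of $\eta$) gives
\begin{align*}
 \big(\Xi_l R^1\Xi_l^{-1}\psi\big)(\tau) = \Xi_l(\tau)\,\Xi_l^{-1}(\tau+\mathrm i)\,(R^1\psi)(\tau).
\end{align*}
So everything reduces to the scalar identity $\Xi_l(\tau)\,\Xi_l^{-1}(\tau+\mathrm i) = V_l(\tau+\mathrm i/2)$.

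Finally I would verify this scalar identity by direct computation with the explicit formulas \eqref{Xi} and \eqref{V}. Using $\Xi_l^{-1}=\overline{\Xi_l(\overline\cdot)}$ from Remark \ref{Xi bar and inverse remark}, one has $\Xi_l^{-1}(\tau+\mathrm i) = \overline{\Xi_l(\tau-\mathrm i)}$, and plugging into \eqref{Xi} the phase $2^{-\mathrm i\tau}$ and the $(-\mathrm i)^l$ factors cancel against their conjugates, leaving a ratio of four Gamma functions:
\begin{align*}
 \Xi_l(\tau)\,\Xi_l^{-1}(\tau+\mathrm i) = \frac{\Gamma\big((l+3/2-\mathrm i\tau)/2\big)}{\Gamma\big((l+3/2+\mathrm i\tau)/2\big)}\cdot\frac{\Gamma\big((l+3/2+\mathrm i(\tau-\mathrm i))/2\big)}{\Gamma\big((l+3/2-\mathrm i(\tau-\mathrm i))/2\big)}.
\end{align*}
Writing $\mathrm i(\tau-\mathrm i)=\mathrm i\tau+1$ and setting $z=\tau+\mathrm i/2$ so that $\mathrm i z = \mathrm i\tau-1/2$, the four arguments become $(l+1\mp\mathrm i z)/2$ in the numerator and $(l+2\pm\mathrm i z)/2$ in the denominator, which is precisely $V_l(z)$ as in \eqref{V}; the case $\tau=0$ (i.e. $z=\mathrm i/2$, excluded from the clean domain of $V_l$) is handled by the continuity noted in the statement. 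I expect the only genuine subtlety — the "hard part" — to be the first paragraph: justifying that $\Xi_l^{-1}\Psi$ lies in the Hardy-type class $\mathfrak D^1$, i.e. the uniform $\mathsf L^2$ bound across the strip, which rests on the absence of poles of $\Xi_l^{-1}$ in $\overline{\mathfrak S^1}$ together with Stirling asymptotics for the Gamma quotient; the rest is bookkeeping with the explicit multipliers, and can be imported essentially verbatim from the proofs of Lemma 10 and Corollary 11 in \cite{MorozovMueller}.
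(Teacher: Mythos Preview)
Your approach is correct and coincides with the paper's intended argument (the paper defers the proof to Lemma~10 and Corollary~11 of \cite{MorozovMueller}, whose method is exactly what you outline: use the absence of poles of $\Xi_l^{-1}$ in $\overline{\mathfrak S^1}$ to keep $\Xi_l^{-1}\psi$ in $\mathfrak D^1$, read off $R^1$ as the upper boundary value, and reduce to the scalar multiplier identity).

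Two computational slips in your displayed scalar identity should be corrected before the argument is complete. First, the prefactors do not simply cancel: $2^{-\mathrm i\tau}\cdot\overline{2^{-\mathrm i(\tau-\mathrm i)}}=2^{-\mathrm i\tau}\cdot 2^{\mathrm i\tau-1}=\tfrac12$, and this residual $\tfrac12$ is exactly the $\tfrac12$ in the definition \eqref{V} of $V_l$. Second, in computing $\overline{\Xi_l(\tau-\mathrm i)}$ you conjugated the prefactor but not the Gamma arguments; the correct second factor is
\[
\frac{\Gamma\big((l+\tfrac12+\mathrm i\tau)/2\big)}{\Gamma\big((l+\tfrac52-\mathrm i\tau)/2\big)},
\]
and with $z=\tau+\mathrm i/2$ (so $\mathrm iz=\mathrm i\tau-\tfrac12$) the four arguments then become precisely $(l+1\pm\mathrm iz)/2$ in the numerator and $(l+2\pm\mathrm iz)/2$ in the denominator, yielding $V_l(z)$ as claimed. (Incidentally, $z=\mathrm i/2\notin\mathrm i\mathbb Z$, so no separate continuity argument is needed at $\tau=0$.)
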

The following lemma can be proved analogously to Lemma 12 in \cite{MorozovMueller}.
\begin{lem}\label{V monotonicity lemma}
For every $l\in \mathbb N_0$ the function \eqref{V} is analytic in $(\mathbb C\setminus \mathrm i\mathbb Z)\cup\{0\}$ and has the following properties:
\begin{enumerate}
 \item $V_l(z) =V_l(-z)$, for all $z\in (\mathbb C\setminus \mathrm i\mathbb Z)\cup\{0\}$;
 \item $V_l(\tau)$ is positive and strictly monotonously decreasing for $\tau\in \mathbb R_+$;
 \item $V_l(\mathrm i\zeta)$ is positive and strictly monotonously increasing for $\zeta\in [0, 1)$;
 \item The relation
\begin{equation}\label{V via 1/V}
 (z^2 +(l+1)^2)V_{l}(z) = \big(V_{l +1}(z)\big)^{-1}
\end{equation}
holds for all $z\in (\mathbb C\setminus \mathrm i\mathbb Z)\cup\{0\}$.
\end{enumerate}
\end{lem}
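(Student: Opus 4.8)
The plan is to establish all four properties of $V_l$ by reducing each one to an elementary fact about the Gamma function; none of them needs the Mellin–Bessel machinery of the preceding lemmas.

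\emph{Analyticity and symmetry.} Since $\Gamma$ is holomorphic and zero-free on $\mathbb C\setminus(-\mathbb N_0)$ with simple poles exactly at $-\mathbb N_0$, the two numerator factors $\Gamma\big((l+1\pm\mathrm iz)/2\big)$ in \eqref{V} produce poles of $V_l$ only at $z\in\mp\mathrm i\big(l+1+2\mathbb N_0\big)$, while the zero-free denominator factors $\Gamma\big((l+2\pm\mathrm iz)/2\big)$ only produce zeros of $V_l$, at $z\in\mp\mathrm i\big(l+2+2\mathbb N_0\big)$. Because $l\geqslant 0$, all of these points lie in $\mathrm i\mathbb Z\setminus\{0\}$, so $V_l$ is holomorphic on $(\mathbb C\setminus\mathrm i\mathbb Z)\cup\{0\}$. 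Property~1 is immediate, since replacing $z$ by $-z$ merely interchanges the two numerator factors and the two denominator factors in \eqref{V}.

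\emph{Recurrence and positivity.} Applying $\Gamma(w+1)=w\Gamma(w)$ with $w=(l+1\pm\mathrm iz)/2$ gives $\Gamma\big((l+3\pm\mathrm iz)/2\big)=\tfrac{l+1\pm\mathrm iz}{2}\,\Gamma\big((l+1\pm\mathrm iz)/2\big)$; inserting this into the expression \eqref{V} for $V_{l+1}$ and cancelling yields $V_l(z)V_{l+1}(z)=\big((l+1+\mathrm iz)(l+1-\mathrm iz)\big)^{-1}=\big(z^2+(l+1)^2\big)^{-1}$, which is \eqref{V via 1/V} (an identity of functions meromorphic on $(\mathbb C\setminus\mathrm i\mathbb Z)\cup\{0\}$). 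For $\tau\in\mathbb R$ the arguments $(l+1\pm\mathrm i\tau)/2$ are complex conjugates with positive real part, so the numerator of \eqref{V} equals $\big|\Gamma\big((l+1+\mathrm i\tau)/2\big)\big|^2>0$ and the denominator equals $2\big|\Gamma\big((l+2+\mathrm i\tau)/2\big)\big|^2>0$, whence $V_l(\tau)>0$; for $z=\mathrm i\zeta$ with $\zeta\in[0,1)$ all four arguments $(l+1\pm\zeta)/2$ and $(l+2\pm\zeta)/2$ are positive reals (here $l\geqslant 0$ is used), whence $V_l(\mathrm i\zeta)>0$.

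\emph{Monotonicity (the main point).} Set $a:=(l+1)/2$, so that $(l+2)/2=a+\tfrac12>a$, and use the Gauss product $\Gamma(w)=\lim_{N\to\infty}N!\,N^{w}\big/\prod_{n=0}^{N}(w+n)$. For $\tau\in\mathbb R$, writing $t:=\tau/2$, this gives
\[
 V_l(\tau)=\frac{\Gamma(a)^2}{2\,\Gamma(a+\tfrac12)^2}\prod_{n=0}^{\infty}\frac{1+t^2/(a+n+\tfrac12)^2}{1+t^2/(a+n)^2},
\]
the product converging absolutely and locally uniformly since each factor is $1+O(n^{-3})$. As $a+n+\tfrac12>a+n>0$, each factor has the form $(1+ct^2)/(1+dt^2)$ with $0<c<d$, hence is $<1$ and strictly decreasing in $t^2$; thus $V_l(\tau)$ is strictly decreasing on $\mathbb R_+$, which is property~2. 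For $z=\mathrm i\zeta$, writing $s:=\zeta/2\in[0,\tfrac12)$, the same computation gives
\[
 V_l(\mathrm i\zeta)=\frac{\Gamma(a)^2}{2\,\Gamma(a+\tfrac12)^2}\prod_{n=0}^{\infty}\frac{1-s^2/(a+n+\tfrac12)^2}{1-s^2/(a+n)^2};
\]
every factor is positive because $s<a+n$ for all $n\geqslant0$ (automatic for $n\geqslant1$ and for $l\geqslant1$, and equivalent to $\zeta<1$ in the remaining case $l=0$, $n=0$), and has the form $(1-cs^2)/(1-ds^2)$ with $0<c<d$, hence is $>1$ and strictly increasing in $s^2$; thus $V_l(\mathrm i\zeta)$ is strictly increasing on $[0,1)$, which is property~3. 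The only genuinely delicate point is this monotonicity step: one must justify the passage from the Gauss products to the convergent infinite products, and above all keep track of the degenerate case $l=0$, $\zeta\uparrow1$, where the $n=0$ factor of the second product blows up — this is precisely why property~3 is asserted only on $[0,1)$. An alternative to the product bookkeeping is to differentiate $\log V_l$ and use the trigamma series $\sum_{n\geqslant0}(x+n)^{-2}$, or to use the Beta-integral representation $V_l(z)=(2\pi)^{-1}B\big(\tfrac{l+1+\mathrm iz}{2},\tfrac12\big)B\big(\tfrac{l+1-\mathrm iz}{2},\tfrac12\big)$, which exhibits $V_l(\mathrm i\zeta)$ as the integral of $\cosh\!\big(\tfrac{\zeta}{2}\log(u/v)\big)$ against a fixed positive weight and thereby makes property~3 transparent. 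This parallels, but is simpler than, the proof of Lemma~12 in \cite{MorozovMueller}.
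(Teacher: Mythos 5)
Your proof is correct, and it is genuinely self-contained, whereas the paper itself does not prove this lemma at all: it merely states that it ``can be proved analogously to Lemma~12 in \cite{MorozovMueller}'' and moves on. You thus supply a complete argument where the paper supplies a pointer. The key steps all check out: the pole/zero locations of the four Gamma factors all lie on $\mathrm i\mathbb Z\setminus\{0\}$ because $l\geqslant 0$, so holomorphy on $(\mathbb C\setminus\mathrm i\mathbb Z)\cup\{0\}$ is clear; the symmetry under $z\mapsto -z$ is a visible relabelling; the recurrence \eqref{V via 1/V} follows by computing $V_l(z)V_{l+1}(z)$ and applying $\Gamma(w+1)=w\Gamma(w)$ twice, exactly as you do; positivity on $\mathbb R$ is $|\Gamma(\cdot)|^2$-positivity (via $\overline{\Gamma(w)}=\Gamma(\overline w)$), and positivity on $\mathrm i[0,1)$ is positivity of $\Gamma$ on $\mathbb R_+$ together with $\zeta< 1\leqslant l+1$.

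The one substantive step is monotonicity, and your reduction via the Euler--Gauss limit
\[
V_l(\tau)=\frac{\Gamma(a)^2}{2\,\Gamma(a+\tfrac12)^2}\prod_{n\geqslant 0}\frac{1+t^2/(a+n+\tfrac12)^2}{1+t^2/(a+n)^2},\qquad a=\tfrac{l+1}{2},\ t=\tfrac{\tau}{2},
\]
is sound: every factor has the form $(1+ct^2)/(1+dt^2)$ with $0<c<d$, hence is strictly decreasing in $t^2$, and the product converges locally uniformly since each term is $1+O(n^{-3})$, so the limit is strictly decreasing; the hyperbolic case $(1-cs^2)/(1-ds^2)$ with $s=\zeta/2<a$ is handled identically and explains precisely why the statement is restricted to $\zeta\in[0,1)$ (the $l=0$, $n=0$ factor blows up as $\zeta\uparrow 1$). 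The alternatives you mention — differentiating $\log V_l$ and invoking the trigamma series $\sum_{n\geqslant 0}(x+n)^{-2}$, or the Beta-integral representation — are equally valid and closer in spirit to what one expects the cited Lemma~12 does; your product version is the most elementary of the three. One cosmetic remark: in the analyticity paragraph the statement ``the zero-free denominator factors only produce zeros of $V_l$'' reads slightly oddly; it is of course $1/\Gamma$, not $\Gamma$, that is entire and has zeros at $-\mathbb N_0$, and the Gamma factors in the denominator are not zero-free (they have poles). Rephrasing to ``the reciprocal Gamma factors in the denominator produce zeros of $V_l$'' would be cleaner, but the mathematics is unaffected.
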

\paragraph{Angular decomposition.}

We can represent arbitrary $u\in\mathsf{L}^2(\mathbb{R}^3,\mathbb{C}^2)$ in the spherical coordinates as 
\begin{align}\label{spherical series}
         u(\rho, \theta,\phi)&=\sum\limits_{(l,m,s)\in \mathfrak{T}}\rho^{-1} u_{l,m,s}(\rho) \Omega_{l,m,s}(\theta,\phi);
\end{align}
with 
\begin{align*}
         u_{l,m,s}(\rho)&:=\rho\int\limits_{0}^{2\pi}\int\limits_{0}^{\pi}\big\langle \Omega_{l,m,s}
           (\theta,\phi),u(\rho, \theta,\phi)\big\rangle_{\mathbb C^2} 
           \sin(\theta)\,\mathrm{d}\theta\,\mathrm{d}\phi. 
\end{align*}

The angular momentum decomposition associated to \eqref{Dirac symbol} is given by
\begin{align}\label{A}
         \mathcal A: \mathsf{L}^2(\mathbb{R}^3,\mathbb{C}^4) \rightarrow \bigoplus\limits_{(l,m,s)\in\mathfrak{T}}
         \mathsf{L}^2(\mathbb{R}_{+},\mathbb{C}^2);\quad \begin{pmatrix}
         \psi_1 \\ \psi_2 \\ \psi_3 \\ \psi_4
         \end{pmatrix} \mapsto 
         \bigoplus_{(l,m,s)\in\mathfrak{T}}
         \begin{pmatrix}
                \begin{pmatrix}
                        \psi_1 \\ \psi_2 
                \end{pmatrix}_{l,m,s} \\ 
                 \mathrm{i}\begin{pmatrix}
                        \psi_3\\ \psi_4
                \end{pmatrix}_{l+2s,m,-s}
         \end{pmatrix}.
\end{align}
Let $\nu\in [0, 1]$. For $l\in\mathbb{N}_{0}$ and $s\in\{-1/2,1/2\}$ we define the operators $\widetilde{D}_{l,s}^\nu$ in $\mathsf L^2(\mathbb R_+, \mathbb C^2)$ by the differential expressions
\begin{equation}\label{d_kappa}
 \begin{pmatrix}
                   -\dfrac\nu r& -\dfrac{\mathrm d}{\mathrm dr}- \dfrac{2sl+s+1/2}{r}\\ \dfrac{\mathrm d}{\mathrm dr}- \dfrac{2sl+s+1/2}{r} & -\dfrac\nu r
                  \end{pmatrix}
\end{equation}
on $\mathsf{C}_{0}^{\infty}(\mathbb{R}_{+},\mathbb{C}^2)$. Furthermore, we introduce 
$\widetilde{D}^\nu$ as the operator corresponding to \eqref{Dirac symbol} on the domain $\mathsf C_0^\infty\big(\mathbb R^3\setminus\{0\},\mathbb{C}^4\big)$.

The next lemma follows from Section 2.1 in \cite{BalinskyEvans}.
\begin{lem}\label{angular Dirac lemma}
For $\nu\in[0,1]$ the operator $(\widetilde{D}^\nu)^{*}$ satisfies
\begin{equation*}
 \mathcal A\,(\widetilde{D}^\nu)^{*}\,\mathcal A^*= \underset{(l,m,s)\in \mathfrak{T}}\bigoplus (\widetilde{D}_{l,s}^\nu)^{*}.
\end{equation*}
\end{lem}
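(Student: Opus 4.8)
The plan is to reduce the statement to three ingredients: the separation of variables for the Coulomb-Dirac differential expression acting on the natural core (which is the content of Section~2.1 in \cite{BalinskyEvans}); the elementary identity $\big(\bigoplus_j T_j\big)^*=\bigoplus_j T_j^*$ for an orthogonal direct sum of densely defined operators, each acting in its own summand; and the fact that conjugation by a unitary operator commutes with taking adjoints, so that $\mathcal A(\widetilde D^\nu)^*\mathcal A^*=(\mathcal A\widetilde D^\nu\mathcal A^*)^*$.

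First I would describe the action of $\mathcal A\widetilde D^\nu\mathcal A^*$ on the natural core. For $\psi\in\mathsf C_0^\infty(\mathbb R^3\setminus\{0\},\mathbb C^4)$, each angular component of the upper and of the lower pair of spinor components of $\psi$ is a function in $\mathsf C_0^\infty(\mathbb R_+)$, so $\mathcal A\psi$ lies in the Hilbert space direct sum with every component in $\mathsf C_0^\infty(\mathbb R_+,\mathbb C^2)$, although in general infinitely many components are non-zero. Using the standard formulas for the action of $-\mathrm i\boldsymbol\alpha\cdot\nabla$ on a product of a radial function with a spherical spinor $\Omega_{l,m,s}$ --- which exchange the index $(l,m,s)$ with its partner $(l+2s,m,-s)$ and reproduce exactly the radial coefficients in \eqref{d_kappa} --- one gets, as in Section~2.1 of \cite{BalinskyEvans},
\[
 \mathcal A\,\widetilde D^\nu\psi=\bigoplus_{(l,m,s)\in\mathfrak T}\widetilde D_{l,s}^\nu\,(\mathcal A\psi)_{l,m,s}.
\]
Conversely, fix $(l,m,s)\in\mathfrak T$ and $\chi=\binom{\chi^{(1)}}{\chi^{(2)}}\in\mathsf C_0^\infty(\mathbb R_+,\mathbb C^2)$; the element of the direct sum equal to $\chi$ in channel $(l,m,s)$ and to $0$ elsewhere is $\mathcal A$ applied to the $\mathbb C^4$-valued function whose upper pair of components is $\rho^{-1}\chi^{(1)}(\rho)\,\Omega_{l,m,s}$ and whose lower pair is $-\mathrm i\rho^{-1}\chi^{(2)}(\rho)\,\Omega_{l+2s,m,-s}$. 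Since $(l+2s,m,-s)\in\mathfrak T$ as well, both spherical spinors are smooth and non-zero on $\mathbb S^2$, and since $\chi^{(1)},\chi^{(2)}$ are supported away from the origin, this function lies in $\mathsf C_0^\infty(\mathbb R^3\setminus\{0\},\mathbb C^4)$. Hence the operator $S:=\bigoplus_{(l,m,s)\in\mathfrak T}\widetilde D_{l,s}^\nu$ with domain the algebraic direct sum of the cores $\mathsf C_0^\infty(\mathbb R_+,\mathbb C^2)$ is a restriction of $\mathcal A\widetilde D^\nu\mathcal A^*$.

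Next I would pass to the adjoints. By the direct sum identity, $S^*=\bigoplus_{(l,m,s)\in\mathfrak T}(\widetilde D_{l,s}^\nu)^*$, and $g=\bigoplus g_{l,m,s}$ lies in $\mathfrak D(S^*)$ if and only if $g_{l,m,s}\in\mathfrak D\big((\widetilde D_{l,s}^\nu)^*\big)$ for every $(l,m,s)$ and $\sum_{(l,m,s)}\big\|(\widetilde D_{l,s}^\nu)^*g_{l,m,s}\big\|^2<\infty$. From $S\subseteq\mathcal A\widetilde D^\nu\mathcal A^*$ we get $(\mathcal A\widetilde D^\nu\mathcal A^*)^*\subseteq S^*$. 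For the reverse inclusion, take $g\in\mathfrak D(S^*)$ and $f=\bigoplus f_{l,m,s}\in\mathcal A\,\mathsf C_0^\infty(\mathbb R^3\setminus\{0\},\mathbb C^4)$ with $f_{l,m,s}\in\mathsf C_0^\infty(\mathbb R_+,\mathbb C^2)$, and put $\psi:=\mathcal A^*f$. Then $\widetilde D^\nu\psi\in\mathsf L^2$, so $\sum_{(l,m,s)}\big\|\widetilde D_{l,s}^\nu f_{l,m,s}\big\|^2=\|\widetilde D^\nu\psi\|^2<\infty$, which together with $\sum_{(l,m,s)}\|f_{l,m,s}\|^2=\|\psi\|^2<\infty$ and $\sum_{(l,m,s)}\big\|(\widetilde D_{l,s}^\nu)^*g_{l,m,s}\big\|^2<\infty$ makes the series below absolutely convergent by the Cauchy--Schwarz inequality:
\begin{align*}
 \big\langle\mathcal A\widetilde D^\nu\mathcal A^*f,g\big\rangle
 &=\sum_{(l,m,s)\in\mathfrak T}\big\langle\widetilde D_{l,s}^\nu f_{l,m,s},g_{l,m,s}\big\rangle\\
 &=\sum_{(l,m,s)\in\mathfrak T}\big\langle f_{l,m,s},(\widetilde D_{l,s}^\nu)^*g_{l,m,s}\big\rangle
 =\big\langle f,S^*g\big\rangle.
\end{align*}
This shows $g\in\mathfrak D\big((\mathcal A\widetilde D^\nu\mathcal A^*)^*\big)$ and $(\mathcal A\widetilde D^\nu\mathcal A^*)^*g=S^*g$, so $(\mathcal A\widetilde D^\nu\mathcal A^*)^*=S^*=\bigoplus_{(l,m,s)\in\mathfrak T}(\widetilde D_{l,s}^\nu)^*$. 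Since $\mathcal A$ is unitary and conjugation by a unitary commutes with the adjoint, $\mathcal A(\widetilde D^\nu)^*\mathcal A^*=(\mathcal A\widetilde D^\nu\mathcal A^*)^*$, which is the claim.

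I expect the only genuinely non-routine point to be the channel-wise action used in the second paragraph: the explicit separation of variables for $-\mathrm i\boldsymbol\alpha\cdot\nabla-\nu|\cdot|^{-1}$ on products of radial functions with spherical spinors, including checking that it yields precisely the off-diagonal coefficients $\pm(2sl+s+1/2)/r$ of \eqref{d_kappa} and the index pairing built into the definition \eqref{A} of $\mathcal A$. This is classical and is carried out in Section~2.1 of \cite{BalinskyEvans}, so it can be quoted rather than reproved; everything else is elementary bookkeeping about orthogonal direct sums and unitary equivalence of (unbounded) operators.
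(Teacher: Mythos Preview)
Your proposal is correct and takes essentially the same approach as the paper, which merely states that the lemma follows from Section~2.1 in \cite{BalinskyEvans} without further details. Your argument makes explicit the functional-analytic bookkeeping (direct-sum adjoints, unitary conjugation commuting with the adjoint, and the Cauchy--Schwarz treatment of elements with infinitely many non-zero angular components) that the paper leaves implicit in that citation.
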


\begin{rem}\label{r: D^nu}
By Theorem 4 in \cite{EstebanLoss} there exists a unique self-adjoint extension $D^{\nu, 1}$ of $\widetilde{D}^\nu +\beta$ which has a positive Schur complement for $\nu \in [0,1]$. Moreover, for $\nu \in [0,1)$ the operator $D^{\nu} :=D^{\nu, 1} -\beta$ is the only self-adjoint extension of $\widetilde{D}^\nu$ with the property that every function in its domain possesses finite kinetic energy, i.e. belongs to $\mathsf H^{1/2}(\mathbb R^3, \mathbb C^4)$ (see Corollary 6 in \cite{Mueller}).  
\end{rem}

The following property of $D^{\nu}$ follows from Theorem 5 in \cite{Mueller}.
\begin{lem}\label{Lemma_operator_core}
Let $\nu\in [0,1]$. The set
\begin{align*} 
             \mathfrak{C}^{\nu}:= \mathsf{C}_{0}^{\infty}\big(\mathbb{R}^3\setminus\{0\},\mathbb{C}^{4}\big)\dot{+}\begin{cases} \{0\},&\!\!\!\text{if }\nu\in[0,\sqrt{3}/2];\\ 
             \Span\big\{\Psi_{\mathbf k}^{\nu} : \mathbf k \in \{-1/2,1/2\}^{2}\big\},&\!\!\!\text{otherwise}
             \end{cases}
\end{align*} 
with
\begin{align}\label{special_function}
             \Psi_{\mathbf k}^{\nu}(\rho,\theta,\phi)&:= \sqrt{2\pi}
             \mathrm{e}^{-\rho} \rho^{\Upsilon_\nu -1}
             \begin{pmatrix}
             \nu \Omega_{\frac{1}{2}+k_2, k_1, -k_2}
             (\theta,\phi)\\
             -\mathrm{i}\big(\Upsilon_\nu +(-1)^{\frac{1}{2} -k_2}\big)  \Omega_{\frac{1}{2} -k_2,k_1,k_2}(\theta,\phi)
             \end{pmatrix}
\end{align} 
is an operator core for $D^{\nu}$.
\end{lem}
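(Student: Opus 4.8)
The plan is to deduce the claim from the radial angular decomposition of $\widetilde D^\nu$ together with Theorem 5 in \cite{Mueller}, which already carries essentially all of the analysis; the remaining work is a translation into the notation used here. For $\nu\in[0,\sqrt3/2]$ there is nothing to add: as recalled in the introduction, $\widetilde D^\nu$ is essentially self-adjoint on $\mathsf C_0^\infty(\mathbb R^3\setminus\{0\},\mathbb C^4)$, so, being a self-adjoint extension of $\widetilde D^\nu$, the operator $D^\nu$ coincides with $\overline{\widetilde D^\nu}$ and $\mathsf C_0^\infty(\mathbb R^3\setminus\{0\},\mathbb C^4)$ is already an operator core.

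Now let $\nu\in(\sqrt3/2,1]$. By Lemma \ref{angular Dirac lemma}, and since $D^\nu$ and $D^{\nu,1}=D^\nu+\beta$ are the distinguished extensions, which are constructed channel by channel, both decompose under $\mathcal A$ into orthogonal sums over $(l,m,s)\in\mathfrak T$ of the radial operators generated by \eqref{d_kappa} (with the mass term $\diag(1,-1)$ added in the case of $D^{\nu,1}$). For $|\nu|\leqslant1$ the radial operator in the channel $(l,m,s)$ is essentially self-adjoint on $\mathsf C_0^\infty(\mathbb R_+,\mathbb C^2)$ whenever $|2sl+s+1/2|\geqslant2$ and has deficiency indices $(1,1)$ exactly when $|2sl+s+1/2|=1$; using $2sl+s+1/2=-2k_2$ for $(l,s)=(1/2+k_2,-k_2)$, these four singular channels are precisely those with $(l,m,s)=(1/2+k_2,k_1,-k_2)$, $\mathbf k=(k_1,k_2)\in\{-1/2,1/2\}^2$. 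In each of them the indicial equation of \eqref{d_kappa} at $\rho=0$ reads $\gamma^2=(2sl+s+1/2)^2-\nu^2=\Upsilon_\nu^2$, with Frobenius exponents $\pm\Upsilon_\nu$, and Theorem 5 in \cite{Mueller} characterises the distinguished extension by the boundary condition selecting the less singular exponent $+\Upsilon_\nu$. In particular it yields
\begin{align*}
 \mathfrak D(D^\nu)=\mathfrak D(\overline{\widetilde D^\nu})\,\dot{+}\,\Span\big\{g_{\mathbf k}:\mathbf k\in\{-1/2,1/2\}^2\big\}
\end{align*}
for any functions $g_{\mathbf k}\in\mathfrak D(D^\nu)$ whose radial expansion in the channel $(1/2+k_2,k_1,-k_2)$ has a non-vanishing coefficient in front of $\rho^{\Upsilon_\nu}$. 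Since $\beta$ is bounded, $D^\nu$ and $D^{\nu,1}$ have the same domain and equivalent graph norms, so it is immaterial whether \cite{Mueller} is stated for the massless or the massive operator; and since $\mathfrak D(\overline{\widetilde D^\nu})$ is the graph-closure of $\mathsf C_0^\infty(\mathbb R^3\setminus\{0\},\mathbb C^4)$, to which adjoining finitely many vectors produces again a closed subspace, it remains only to verify that the four $\Psi_{\mathbf k}^\nu$ are an admissible choice of the $g_{\mathbf k}$.

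That verification is a direct computation. Applying $\mathcal A$ to \eqref{special_function} and using that in \eqref{A} the lower two spinor components enter the channel $(l,m,s)$ paired with $\Omega_{l+2s,m,-s}$, one finds that $\mathcal A\Psi_{\mathbf k}^\nu$ is supported in the single channel $(l,m,s)=(1/2+k_2,k_1,-k_2)$, with radial two-spinor
\begin{align*}
 \sqrt{2\pi}\,\mathrm e^{-\rho}\rho^{\Upsilon_\nu}\binom{\nu}{\Upsilon_\nu+(-1)^{1/2-k_2}}.
\end{align*}
In this channel $(-1)^{1/2-k_2}=-(2sl+s+1/2)=2k_2$, so the constant vector above is a non-zero multiple of the eigenvector of the indicial problem of \eqref{d_kappa} belonging to the exponent $+\Upsilon_\nu$. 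Consequently, applying the radial expression (with the mass term) to this two-spinor cancels the $\rho^{\Upsilon_\nu-1}$-contributions and leaves only terms of order $\rho^{\Upsilon_\nu}$, which together with the factor $\mathrm e^{-\rho}$ (also providing decay as $\rho\to\infty$) is square integrable. Hence $\Psi_{\mathbf k}^\nu\in\mathfrak D(D^\nu)=\mathfrak D(D^{\nu,1})$ with leading coefficient $\sqrt{2\pi}\neq0$ in its channel and none in the others; the four functions therefore live in distinct channels, are linearly independent modulo $\mathfrak D(\overline{\widetilde D^\nu})$, and form an admissible choice of the $g_{\mathbf k}$. This proves that $\mathfrak C^\nu$ is an operator core for $D^\nu$.

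I expect the only genuine difficulty to be one of conventions rather than of substance: the nontrivial input — essential self-adjointness away from the four singular channels, the limit-circle analysis inside them, and the identification of the distinguished boundary condition with the exponent $+\Upsilon_\nu$ — is entirely supplied by \cite{Mueller}, and the delicate point here is merely to keep track of the spinor conventions in \eqref{A} and \eqref{special_function} carefully enough to be certain that the coefficient vector in the displayed radial two-spinor lands on the $+\Upsilon_\nu$ eigenvector and not on its $-\Upsilon_\nu$ companion.
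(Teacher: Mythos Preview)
Your argument is essentially correct but takes a different route from the paper's proof. You invoke the full domain decomposition $\mathfrak D(D^\nu)=\mathfrak D(\overline{\widetilde D^\nu})\dot+\Span\{g_{\mathbf k}\}$ coming from the deficiency-index picture, and then verify by a Frobenius analysis that the $\Psi_{\mathbf k}^\nu$ lie in $\mathfrak D(D^\nu)$ with the correct leading behaviour in their respective channels. The paper, by contrast, never touches the boundary-condition characterisation. It reads Theorem~5 in \cite{Mueller} as directly supplying an operator core of the same shape as $\mathfrak C^\nu$ but with a compactly supported cutoff $\xi(\rho)$ in place of $\mathrm e^{-\rho}$, and then argues that the two cores are interchangeable: by the Hardy inequality the graph norm of $D^\nu$ is dominated by the $\mathsf H^1$-norm, so $\mathsf H^1(\mathbb R^3,\mathbb C^4)$ lies in the graph-closure of $\mathsf C_0^\infty(\mathbb R^3\setminus\{0\},\mathbb C^4)$; since the difference functions $\Phi_{\mathbf k}^\nu$ (with radial profile $\mathrm e^{-\rho}-\xi(\rho)$, hence vanishing near the origin) are shown to belong to $\mathsf H^1$ by computing $\|D^0\Phi_{\mathbf k}^\nu\|$, the swap is justified.

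Your structural approach has the merit of explaining \emph{why} the $\Psi_{\mathbf k}^\nu$ are the natural extra functions---they hit the $+\Upsilon_\nu$ eigenvector of the indicial problem---but it leans on a stronger reading of \cite{Mueller} (a domain characterisation rather than just an explicit core). The paper's substitution trick is more economical: it uses Theorem~5 verbatim and reduces everything to an $\mathsf H^1$-membership check, at the cost of obscuring the underlying extension theory.
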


\begin{proof}
It is known that $\mathsf C_0^\infty\big(\mathbb R^3\setminus\{0\}, \mathbb C^4\big)$ is dense in $\mathsf H^1(\mathbb R^3, \mathbb C^4)$, combine e.g. Theorem 3.23 and Example 5.26 in \cite{AdamsFournier} with Proposition 9 of Appendix A in \cite{Rauch}.
By the Hardy inequality the graph norm of $\widetilde{D}^\nu$ is subordinate to the norm of $\mathsf H^{1}(\mathbb R^3, \mathbb C^4)$. Hence the closure of $\mathsf C_0^\infty\big(\mathbb R^3\setminus\{0\}, \mathbb C^4\big)$ with respect to the graph norm of $D^\nu$ contains $\mathsf H^{1}(\mathbb R^3, \mathbb C^4)$.

By Theorem 5 in \cite{Mueller}, \eqref{operator core} would be an operator core for $D^{\nu}$ if we replace $\mathrm{e}^{-\rho}$ in \eqref{special_function} by $\xi(\rho)$, where $\xi\in\mathsf C^\infty\big(\mathbb R_+, [0, 1]\big)$ satisfies $\xi(r) =1$ for $r\in (0, 1)$ and $\xi(r) =0$ for $r >2$.

Now it is enough to establish that for $\mathbf k \in \{-1/2,1/2\}^{2}$ the functions
\begin{align*}
 \Phi_{\mathbf k}^{\nu}(\rho,\theta,\phi)&:= 
             \big(\mathrm{e}^{-\rho} -\xi(\rho)\big) \rho^{\Upsilon_\nu -1}
             \begin{pmatrix}
             \nu \Omega_{\frac{1}{2}+k_2,k_1,-k_2}
             (\theta,\phi)\\
             -\mathrm{i}\big(\Upsilon_\nu +(-1)^{\frac{1}{2}-k_2}\big)  \Omega_{\frac{1}{2}-k_2,k_1,k_2}(\theta,\phi)
             \end{pmatrix}
\end{align*}
 belong to $\mathsf H^{1}(\mathbb R^3, \mathbb C^4)$. To do this we observe that by Lemma \ref{angular Dirac lemma}
\begin{align*}
 \|\nabla\Phi_{\mathbf k}^{\nu}\|^2 =\|D^0\Phi_{\mathbf k}^{\nu}\|^2 =\bigg\|D^0_{\frac{1}{2}+k_2,k_1,-k_2}\big(\mathrm{e}^{-\rho} -\xi(\rho)\big) \rho^{\Upsilon_\nu}\binom{\nu}{\Upsilon_\nu +(-1)^{\frac{1}{2} -k_2}}\bigg\|^2
\end{align*}
holds, where the finiteness of the right hand side follows by a straightforward calculation based on \eqref{d_kappa}.
\end{proof}

Let $(l,m,s)\in\mathfrak{T}$ and $\nu\in [0, 1]$. Lemma \ref{angular Dirac lemma} implies that there is a unique self-adjoint extension $D_{l,s}^\nu$ of $\widetilde{D}_{l,s}^\nu$ such 
that
\begin{align}\label{D nu}
            D_{l,s}^\nu= \big(\mathcal{A}D^\nu \mathcal A^*\big)_{l,m,s}
\end{align}
holds. By Lemma \ref{Lemma_operator_core} we conclude that 
\begin{equation}\label{operator core}
 \mathfrak C^\nu_{l,s}:= \mathsf C_0^\infty(\mathbb R_+, \mathbb C^2) \dot + \begin{cases}
 \Span\{\psi^\nu_s\}, &\text{if }\nu\in(\sqrt{3}/2,1]\text{ and }l =0, s=1/2\\
                      &\text{or }l=1,s=-1/2;\\ \{0\}, &\text{otherwise},
                                                                       \end{cases}
\end{equation}
with
\begin{align}\label{extra functions on the half line}
 \psi^\nu_{s}(\rho):=\sqrt{2\pi}\mathrm e^{-\rho} \rho^{\Upsilon_\nu}
 \binom{\nu}{\Upsilon_\nu +(-1)^{\frac{1}{2}+s}}\text{ for }\rho\in\mathbb R_+
\end{align}
is an operator core for $D_{l,s}^\nu$.

\paragraph{MWF-transform.}
We now introduce the unitary transform 
\begin{equation*}
\mathcal T:\mathsf L^2(\mathbb R^3,\mathbb{C}^4)\to \underset{(l,m,s)\in\mathfrak{T}}\bigoplus\mathsf L^2(\mathbb R,\mathbb{C}^2), \quad \mathcal T:= \mathcal M\mathcal A\diag(1,\, 1,\, -\mathrm i,\, -\mathrm i)\mathcal F,
\end{equation*}
where $\mathcal M$ acts fibre-wise.
A direct calculation using Lemmata \ref{Fourier channel lemma} and \ref{Mellin-Bessel lemma} gives
\begin{equation*}
\mathcal T\begin{pmatrix}\upsilon_1\\ \upsilon_2\\ \upsilon_3\\ \upsilon_4\end{pmatrix}= \underset{(l,m,s)\in\mathfrak{T}}\bigoplus\begin{pmatrix}
                                                                                                                                    \mathcal T_l\displaystyle\binom{\upsilon_1}{\upsilon_2}_{l,m,s}\\ \mathcal T_{l +2s}\displaystyle\binom{\upsilon_3}{\upsilon_4}_{l+2s,m,-s}
                                                                                                                                   \end{pmatrix}
,
\end{equation*}
where for $(l,m,s)\in\mathfrak{T}$ the operators $\mathcal T_l: \mathsf L^2(\mathbb R_+)\to \mathsf L^2(\mathbb R)$ are given by
\begin{equation}\label{T components}
 (\mathcal T_l\psi)(\tau):= \Xi_l(\tau)(\mathcal M\psi)(-\tau) \quad \textrm{for any }\psi\in \mathsf L^2(\mathbb R_+).
\end{equation}

In the following two lemmata we study the actions of several operators in the MWF-representation.
\begin{lem}\label{transformed operators lemma}
 The relations
\begin{equation*}
\mathcal T(-\mathrm i\boldsymbol\alpha\cdot\nabla)\mathcal T^*= \underset{
(l,m,s)\in\mathfrak{T}}\bigoplus (R^1\otimes\sigma_1)
\end{equation*}
and for any $\lambda\in \mathbb R$
\begin{equation*}
 \mathcal T\big((-\Delta)^{\lambda/2}\otimes \mathds{1}_{\mathbb{C}^4}\big)\mathcal T^*= \underset{(l,m,s)\in\mathfrak{T}}\bigoplus R^\lambda\otimes \mathds{1}_{\mathbb{C}^2}
\end{equation*}
hold.
\end{lem}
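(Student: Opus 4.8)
The plan is to verify both identities by first diagonalising the relevant operators channel-by-channel and then identifying the fibre action, reusing the Fourier and Mellin lemmas already established. For the second relation, start from the factorisation $\mathcal T =\mathcal M\mathcal A\diag(1,1,-\mathrm i,-\mathrm i)\mathcal F$. Since $(-\Delta)^{\lambda/2}\otimes\mathds 1_{\mathbb C^4}$ is a scalar operator, it commutes with $\diag(1,1,-\mathrm i,-\mathrm i)$, and in the Fourier representation it acts as multiplication by $\varrho^\lambda$. Because this multiplication is rotation-invariant, it is block-diagonal under $\mathcal A$ and acts in each channel as multiplication by $\varrho^\lambda$ on $\mathsf L^2(\mathbb R_+,\mathbb C^2)$, i.e. (after pulling the factor $\varrho^{-1}$ out of the radial part as in \eqref{auxiliary nonsense}) as the operator "multiplication by $\varrho^\lambda$'' on $\mathsf L^2(\mathbb R_+)$. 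Conjugating this by $\mathcal M$ and invoking \eqref{complex shifted} turns it into $R^\lambda$; the $\mathbb C^2$ factor is untouched, giving $R^\lambda\otimes\mathds 1_{\mathbb C^2}$ in every channel. A small bookkeeping point is that $\mathcal T_l$ involves $(\mathcal M\psi)(-\tau)$ rather than $(\mathcal M\psi)(\tau)$, but since $\psi\mapsto(\mathcal M\psi)(-\cdot)$ is the unitary induced by $r\mapsto r^{-1}$ and $R^\lambda$ is covariant under this inversion up to replacing $\lambda$ by $-\lambda$ — more precisely one checks directly from \eqref{T components} and \eqref{complex shifted} that $\mathcal T_l r^\lambda =R^{-\lambda}\mathcal T_l$ is \emph{not} what we want; rather one should observe that $\varrho^\lambda$ acts on the \emph{momentum}-side variable, so after the Mellin transform in $\varrho$ one directly gets $R^\lambda$ acting on the $\tau$ variable of $\mathcal T_l\psi$ — I will make sure to track which variable the dilation acts on, as this is the only place a sign could slip.

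For the first relation, the key observation is the classical fact that in the angular-momentum decomposition \eqref{A} the free Dirac operator $-\mathrm i\boldsymbol\alpha\cdot\nabla$ becomes, in each channel $(l,m,s)$, the radial operator $\widetilde D^0_{l,s}$ of \eqref{d_kappa} with $\nu=0$, namely $\begin{pmatrix} 0 & -\frac{\mathrm d}{\mathrm dr}-\frac{\kappa}{r}\\ \frac{\mathrm d}{\mathrm dr}-\frac{\kappa}{r} & 0\end{pmatrix}$ with $\kappa =2sl+s+1/2$; this is exactly the content of Lemma \ref{angular Dirac lemma} together with \eqref{D nu} specialised to $\nu=0$. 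So it suffices to show that $\mathcal M\,(\text{radial Dirac block in channel }l)\,\mathcal M^*$, after the extra conjugation by $\diag(1,-\mathrm i)$ coming from the $\diag(1,1,-\mathrm i,-\mathrm i)$ and the index shift $l\mapsto l+2s$ in the lower pair, equals $R^1\otimes\sigma_1$. Concretely: the off-diagonal radial first-order operators $\pm\frac{\mathrm d}{\mathrm dr}-\frac{\kappa}{r}$ are precisely the intertwiners between Bessel orders $l+1/2$ and $l+3/2$ (i.e. between channels $l$ and $l+1$), and under $\mathcal T_l$ the multiplication by $\varrho$ on the momentum side corresponds to $R^1$. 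The cleanest way is to test on $\Psi_{l,m,s}$ of the form \eqref{auxiliary nonsense} with $\psi\in\mathsf C_0^\infty$, apply $-\mathrm i\boldsymbol\alpha\cdot\nabla$, use that $\mathcal F(-\mathrm i\boldsymbol\alpha\cdot\nabla)\mathcal F^*$ is multiplication by $\boldsymbol\alpha\cdot\boldsymbol\varrho$, combine with Lemma \ref{Fourier channel lemma} to get the Bessel-order shift, and then Lemma \ref{Mellin-Bessel lemma} to land in Mellin space; keeping track of the factors $(-\mathrm i)^l$ and the $\diag(1,-\mathrm i)$ conjugation, the $\varrho$ collapses to the $R^1$ and the $2\times 2$ structure collapses to $\sigma_1$. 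One then extends from this dense set by closure, using that $\mathcal T$ is unitary and both sides are closed (indeed self-adjoint on their natural domains).

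I expect the main obstacle to be the careful bookkeeping in the first identity: matching the phase factors $(-\mathrm i)^l$ from \eqref{Fourier channel} against the $(-\mathrm i)^l$ hidden in $\Xi_l$ (see \eqref{Xi}) and against the $\diag(1,1,-\mathrm i,-\mathrm i)$ in the definition of $\mathcal T$, and verifying that the index shift $l\mapsto l+2s$ in the lower spinor pair in \eqref{A} combines with the "raising/lowering'' nature of $\pm\frac{\mathrm d}{\mathrm dr}-\frac{\kappa}{r}$ to produce exactly $R^1$ (and not, say, $R^1$ composed with some residual multiplication operator) in \emph{both} off-diagonal entries, with the correct relative sign so that the resulting $2\times 2$ matrix is $\sigma_1$ rather than $\mathrm i\sigma_2$ or $\sigma_3$. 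The recursion relations for Bessel functions (the analogue of 10.6.1 and related formulas in \cite{dlmf}) and the functional equation for $\Xi_l$ — essentially $\Xi_{l+1}(\tau)/\Xi_l(\tau)$ expressed via Gamma functions, cf. \eqref{V via 1/V} — are the tools that make this work, but the sign/phase audit is where care is needed. The $(-\Delta)^{\lambda/2}$ identity is comparatively routine once one has \eqref{complex shifted} and the scalar, rotation-invariant nature of the operator.
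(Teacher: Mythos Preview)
Your proposal is correct and follows essentially the same route as the paper, which dispatches the lemma in one line by citing Lemma 8 in \cite{Mueller} together with \eqref{complex shifted}; your sketch is simply an unpacking of what that citation contains. The second identity is exactly as you say --- $(-\Delta)^{\lambda/2}$ becomes multiplication by $\varrho^\lambda$ under $\mathcal F$, this passes through $\mathcal A$ and the diagonal phase unchanged, and \eqref{complex shifted} converts it to $R^\lambda$ fibre-wise --- and your momentary hesitation about the sign of $\lambda$ is resolved correctly once you note that $\mathcal M$ in the definition of $\mathcal T$ acts on the \emph{momentum} radial variable $\varrho$, so $\mathcal M\varrho^\lambda\mathcal M^* =R^\lambda$ applies directly without any inversion. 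For the first identity your plan (angular decomposition of the free Dirac operator via Lemma \ref{angular Dirac lemma} at $\nu=0$, then Bessel recursion and the $\Xi_l$ functional equation to convert the off-diagonal first-order operators into $R^1$) is precisely the computation behind the cited lemma, and the phase/sign bookkeeping you flag is real but routine.
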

\noindent Lemma \ref{transformed operators lemma} follows immediately by Lemma 8 in \cite{Mueller} and \eqref{complex shifted}. The next lemma can be proved as Lemma 19 in \cite{MorozovMueller}.
\begin{lem}\label{1/r with T lemma}
The relation
\begin{equation*}
\mathcal T\big(|\cdot|^{-1}\otimes \mathds{1}_{\mathbb{C}^4}\big)\mathcal T^*= \underset{(l,m,s)\in\mathfrak{T}}\bigoplus \begin{pmatrix}
                                                                                                                           \Xi_lR^1\Xi_l^{-1} &0 \\ 0 &\Xi_{l +2s}R^1\Xi_{l +2s}^{-1}
                                                                                                                          \end{pmatrix}
\end{equation*}
holds.
\end{lem}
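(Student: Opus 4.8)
The plan is to reduce the asserted operator identity to a one-dimensional statement in each angular momentum channel and to verify the latter by a short computation on the Mellin side, following the scheme of the proof of Lemma 19 in \cite{MorozovMueller}. Since multiplication by $|\cdot|^{-1}$ is a scalar operator it commutes with $\diag(1, 1, -\mathrm i, -\mathrm i)$, and it leaves each angular channel in \emph{position} space invariant: for $(l, m, s)\in\mathfrak T$ and $\psi\in\mathsf C_0^\infty(\mathbb R_+)$ one has $|\cdot|^{-1}(\rho^{-1}\psi(\rho)\Omega_{l,m,s}) =\rho^{-1}(\rho^{-1}\psi(\rho))\Omega_{l,m,s}$, which is again a channel function of the form \eqref{auxiliary nonsense}. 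By the explicit action of $\mathcal T$ obtained via Lemmata \ref{Fourier channel lemma} and \ref{Mellin-Bessel lemma}, $\mathcal T$ sends such a function to an element of the $(l, m, s)$-summand whose radial part is $\mathcal T_l$ (resp.\ $\mathcal T_{l +2s}$ for the lower components) applied to the radial part; cf.\ \eqref{T components}. Hence $\mathcal T(|\cdot|^{-1}\otimes\mathds{1}_{\mathbb C^4})\mathcal T^*$ is block-diagonal with respect to the direct sum, with $(l, m, s)$-block $\diag(\mathcal T_l r^{-1}\mathcal T_l^*,\ \mathcal T_{l +2s}r^{-1}\mathcal T_{l +2s}^*)$, where $r^{-1}$ is multiplication by $r^{-1}$ in $\mathsf L^2(\mathbb R_+)$. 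It therefore remains to prove $\mathcal T_l r^{-1}\mathcal T_l^* =\Xi_l R^1\Xi_l^{-1}$ for every $l\in\mathbb N_0$.

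To establish this I would first work on the core $\mathsf C_0^\infty(\mathbb R_+)$, on which $\mathcal M\psi$ extends to an entire function decaying rapidly along horizontal lines, so that all complex shifts below make sense. By \eqref{complex shifted} with $\lambda =-1$ — equivalently, the Mellin-multiplier relation $(\mathcal M(r^{-1}\psi))(\sigma) =(\mathcal M\psi)(\sigma -\mathrm i)$ — together with \eqref{T components} one obtains $(\mathcal T_l(r^{-1}\psi))(\tau) =\Xi_l(\tau)(\mathcal M\psi)(-\tau -\mathrm i)$. On the other hand $(\Xi_l^{-1}\mathcal T_l\psi)(\tau) =(\mathcal M\psi)(-\tau)$, and since $\tau\mapsto(\mathcal M\psi)(-\tau)$ continues analytically, the operator $R^1$ acts on it as the shift $\tau\mapsto\tau +\mathrm i$, so that $(R^1\Xi_l^{-1}\mathcal T_l\psi)(\tau) =(\mathcal M\psi)(-\tau -\mathrm i)$; multiplying by $\Xi_l(\tau)$ recovers $(\mathcal T_l(r^{-1}\psi))(\tau)$. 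This proves $\mathcal T_l r^{-1}\mathcal T_l^* =\Xi_l R^1\Xi_l^{-1}$ on the core.

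Finally one passes from the core to the full domain by density and closability, which is simpler here than in the two-dimensional case: as observed after Remark \ref{Xi bar and inverse remark}, $\Xi_l^{-1}$ has no pole in $\overline{\mathfrak S^1}$, so multiplication by $\Xi_l^{-1}$ preserves $\mathfrak D^1$ and the right-hand side $\Xi_l R^1\Xi_l^{-1}$ is unambiguously defined; the closure argument is then identical to that of Lemma 19 in \cite{MorozovMueller}. I expect the main obstacle to be precisely this domain bookkeeping: specifying in which sense $|\cdot|^{-1}$ is realised (on the core $\mathsf C_0^\infty(\mathbb R^3\setminus\{0\},\mathbb C^4)$ and its angular components, or through its quadratic form, which by Kato's inequality is controlled by that of $(-\Delta)^{1/2}$), checking that the functions appearing above genuinely lie in the relevant $\mathfrak D^{\pm1}$ so that the shifts are justified, and verifying that $\mathsf C_0^\infty(\mathbb R^3\setminus\{0\},\mathbb C^4)$ and $\mathsf C_0^\infty(\mathbb R_+,\mathbb C^2)$ are cores — all routine and fully parallel to \cite{MorozovMueller}.
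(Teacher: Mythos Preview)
Your proposal is correct and follows essentially the same approach as the paper, which simply refers to Lemma 19 in \cite{MorozovMueller}: you reduce to channels using the explicit form of $\mathcal T$, verify $\mathcal T_l r^{-1}\mathcal T_l^* =\Xi_l R^1\Xi_l^{-1}$ on the core via the Mellin shift identity \eqref{complex shifted}, and then close, exploiting that $\Xi_l^{-1}$ has no poles in $\overline{\mathfrak S^1}$. The domain bookkeeping you flag as the residual work is exactly what the reference handles, so nothing substantive is missing.
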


\paragraph{U-transform.}

Let $(l,m,s)\in \mathfrak{T}$. We define the following unitary operator 
\begin{align}\label{Upsi}
\mathcal U_{l,s}:\mathsf L^2(\mathbb R_+, \mathbb C^2)\to \mathsf L^2(\mathbb R, \mathbb C^2);\quad
 \mathcal U_{l,s}\binom{\psi_1}{\psi_2}:=\binom{\mathcal T_{l}\psi_1}{-\mathrm i\mathcal T_{l+2s}\psi_2}.
\end{align}
Note the relation 
\begin{align}\label{Upsi extra}
\mathcal U_{l,s}\binom{\psi_1}{\psi_2} =\Bigg(\mathcal{TA}^*\bigoplus_{(l',m',s')\in\mathfrak{T}}\delta_{l', l}\delta_{m', m}\delta_{s', s}\binom{\psi_1}{\psi_2}\Bigg)_{l,m,s}.
\end{align}

A straightforward calculation involving \eqref{T components}, \eqref{Mellin transform}, \eqref{V}, \eqref{V via 1/V} and the elementary properties of the gamma function delivers
\begin{lem}\label{T transformed extra functions lemma}
Let $\nu\in (\sqrt{3}/2, 1]$. For $s\in\{-1/2, 1/2\}$ the functions \eqref{extra functions on the half line} from the operator core $\mathfrak C^\nu_{1/2 -s, s}$ of $D^\nu_{1/2 -s, s}$ satisfy the relation
\begin{equation*}
 \mathcal U_{1/2 -s, s}\psi^\nu_{s} =\binom{\xi_s^\nu}{\eta_s^\nu} + \chi_s^\nu\binom{1}{\nu V_{1/2 -s}(\mathrm i\Upsilon_\nu)}
\end{equation*}
where we have introduced
\begin{align}
 \chi_s^\nu(\tau) := &\nu\Xi_{1/2 -s}\big(\mathrm i(\Upsilon_\nu +1/2)\big)\frac{(\tau -\mathrm i)\Gamma(\mathrm i\tau+ \Upsilon_\nu +1/2)}{\mathrm i(\Upsilon_\nu -1/2)},\label{chi}\\
 \xi_s^\nu(\tau) := &\nu\Gamma(\mathrm i\tau+ \Upsilon_\nu+ 1/2)\Big(\Xi_{1/2 -s}(\tau) -\frac{(\tau -\mathrm i)\Xi_{1/2 -s}\big(\mathrm i(\Upsilon_\nu +1/2)\big)}{\mathrm i(\Upsilon_\nu -1/2)}\Big),\label{xi}\\
 \eta_s^\nu(\tau) := &\mathrm i(2s -\Upsilon_\nu)\Gamma(\mathrm i\tau+ \Upsilon_\nu+ 1/2)\Big(\Xi_{s +1/2}(\tau) -\frac{(\tau -\mathrm i)\Xi_{s +1/2}\big(\mathrm i(\Upsilon_\nu +1/2)\big)}{\mathrm i(\Upsilon_\nu -1/2)}\Big).\label{eta}
\end{align}
\end{lem}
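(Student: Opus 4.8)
The plan is to compute $\mathcal U_{1/2-s,s}\psi^\nu_s$ directly from the definitions and then match the outcome against the claimed expression term by term. Since $\Upsilon_\nu\in[0,1/2)$ for $\nu\in(\sqrt3/2,1]$ (recall \eqref{beta}), the Mellin transform \eqref{Mellin transform} of the scalar profile $\sqrt{2\pi}\,\mathrm e^{-\rho}\rho^{\Upsilon_\nu}$ converges absolutely and, by the integral representation of the gamma function, equals $\Gamma(\Upsilon_\nu+1/2-\mathrm i\tau)$. Inserting this into \eqref{T components}, recalling that $\mathcal U_{1/2-s,s}\binom{\psi_1}{\psi_2}=\binom{\mathcal T_{1/2-s}\psi_1}{-\mathrm i\mathcal T_{1/2+s}\psi_2}$ (note $(1/2-s)+2s=1/2+s$), and using $\Upsilon_\nu+(-1)^{1/2+s}=\Upsilon_\nu-2s$, one obtains
\begin{equation*}
 \mathcal U_{1/2-s,s}\psi^\nu_s=\begin{pmatrix}\nu\,\Xi_{1/2-s}(\cdot)\,\Gamma(\mathrm i\cdot+\Upsilon_\nu+1/2)\\ \mathrm i(2s-\Upsilon_\nu)\,\Xi_{1/2+s}(\cdot)\,\Gamma(\mathrm i\cdot+\Upsilon_\nu+1/2)\end{pmatrix}.
\end{equation*}

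For the first component, I would compare this with \eqref{chi} and \eqref{xi}: the two summands proportional to $(\tau-\mathrm i)\,\Xi_{1/2-s}\big(\mathrm i(\Upsilon_\nu+1/2)\big)/\big(\mathrm i(\Upsilon_\nu-1/2)\big)$ occurring in $\xi_s^\nu$ and in $\chi_s^\nu$ cancel identically, leaving exactly $\nu\,\Xi_{1/2-s}(\tau)\,\Gamma(\mathrm i\tau+\Upsilon_\nu+1/2)$. Thus the first component needs no further input.

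For the second component I would substitute \eqref{eta} and \eqref{chi} into $\eta_s^\nu+\nu V_{1/2-s}(\mathrm i\Upsilon_\nu)\chi_s^\nu$. The term carrying $\Xi_{1/2+s}(\tau)$ already reproduces $\mathrm i(2s-\Upsilon_\nu)\,\Xi_{1/2+s}(\tau)\,\Gamma(\mathrm i\tau+\Upsilon_\nu+1/2)$, so it remains to show that the leftover terms cancel; after dividing by the common factor $(\tau-\mathrm i)\,\Gamma(\mathrm i\tau+\Upsilon_\nu+1/2)/\big(\mathrm i(\Upsilon_\nu-1/2)\big)$ this reduces to the pointwise identity
\begin{equation*}
 \nu^2\,V_{1/2-s}(\mathrm i\Upsilon_\nu)\,\Xi_{1/2-s}\big(\mathrm i(\Upsilon_\nu+1/2)\big)=\mathrm i(2s-\Upsilon_\nu)\,\Xi_{1/2+s}\big(\mathrm i(\Upsilon_\nu+1/2)\big).
\end{equation*}

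This last identity is where the actual work lies. I would evaluate $\Xi_l$ from \eqref{Xi} and $V_l$ from \eqref{V} at the indicated points, turning both sides into explicit ratios of gamma functions whose arguments differ by integers and half-integers, and then verify equality using the functional equation $\Gamma(x+1)=x\Gamma(x)$ together with $\nu^2=1-\Upsilon_\nu^2=(1-\Upsilon_\nu)(1+\Upsilon_\nu)$; the cases $s=1/2$ and $s=-1/2$ are treated separately but are identical in spirit (in each case the factor $\nu^2$ on the left is exactly what is produced by $(1-\Upsilon_\nu)(1+\Upsilon_\nu)$ after the gamma-recurrences on the right). I expect this gamma-function bookkeeping, rather than any conceptual difficulty, to be the only real obstacle; it is precisely the "straightforward calculation" referred to in the statement of the lemma.
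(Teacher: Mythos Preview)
Your proposal is correct and follows exactly the route the paper indicates: compute the Mellin transform of $\sqrt{2\pi}\,\mathrm e^{-\rho}\rho^{\Upsilon_\nu}$ via the gamma integral, apply \eqref{T components} and \eqref{Upsi}, and then verify the resulting gamma-function identity using $\Gamma(x+1)=x\Gamma(x)$ together with $\nu^2=1-\Upsilon_\nu^2$. The only minor shortcut you do not mention is that the paper also invokes \eqref{V via 1/V}, which lets you rewrite $\nu^2 V_{1/2-s}(\mathrm i\Upsilon_\nu)=\big(V_{3/2-s}(\mathrm i\Upsilon_\nu)\big)^{-1}$ and thereby streamlines the final check; either way the verification is elementary.
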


\section{On the Friedrichs extension of the relativistic Coulomb operator in the Fourier-Mellin space}\label{scalar section}

For $\alpha\in \mathbb R$ in $\mathsf L^2(\mathbb R^3, \mathbb C)$ consider the symmetric operator
\begin{equation*}
 \widetilde H^\alpha :=(-\Delta)^{1/2} - \alpha|\cdot|^{-1}
\end{equation*}
on the domain $\mathsf H^1(\mathbb R^3, \mathbb C)$.
Let $\mathfrak A:= \big\{(l, m)\in \mathbb Z^2: l \geqslant 0, -l \leqslant m\leqslant l\big\}$. For every scalar function $u\in \mathsf L^2(\mathbb R^3, \mathbb C)$ the decomposition in the spherical harmonics (see Section 14.30 in \cite{dlmf}) is given by
\begin{align}\label{scalar spherical}
 u(\rho, \theta, \phi) &=\sum_{(l, m)\in\mathfrak A}\rho^{-1}u_{l, m}(\rho)Y_{l,m}(\theta, \phi),\\
 u_{l, m}(\rho) &:=\rho \int_0^{2\pi}\int_0^\pi\overline{Y_{l, m}(\theta, \phi)} u(\rho, \theta, \phi)\sin\theta\,\mathrm d\theta\,\mathrm d\phi.
\end{align}
Introducing the corresponding unitary operator
\begin{align}
 \mathcal R: \mathsf L^2(\mathbb R^3, \mathbb C)\to \underset{(l, m)\in \mathfrak A}\bigoplus \mathsf L^2(\mathbb R_+, \mathbb C),\quad \mathcal R: u\mapsto \underset{(l, m)\in \mathfrak A}\bigoplus u_{l, m}
\end{align}
we observe the relation (cf. the beginning of Section 3 in \cite{MorozovMueller})
\begin{equation}\label{MWF Herbst}
 (\mathcal{MRF})\widetilde H^\alpha(\mathcal{MRF})^*= \underset{(l, m)\in \mathfrak A}\bigoplus \big( 1 -\alpha V_{l}(\cdot + \mathrm i/2)\big)R^1 =:\underset{(l, m)\in \mathfrak A}\bigoplus\widetilde H^\alpha_{l},
\end{equation}
where the right hand side is an orthogonal sum of operators in $\mathsf L^2(\mathbb R, \mathbb C)$ densely defined on $\mathfrak D^1$.

In the next lemma, which follows from Lemmata \ref{V corollary} and \ref{V monotonicity lemma} in the same way as Lemma 21 in \cite{MorozovMueller}, we state the optimal condition on $\alpha$ such that $\widetilde H^\alpha_l$ is bounded from below.
\begin{lem}\label{critical constants lemma}
 For $l\in \mathbb{N}_{0}$ and $\alpha \in\mathbb R$ the operator $\widetilde H^\alpha_l$ is symmetric. It is bounded below (and non-negative) in $\mathsf L^2(\mathbb R)$ if and only if
\begin{equation*}
 \alpha\leqslant\alpha_l:= \frac1{V_{l}(0)}= \frac{2\Gamma^2\big((l +2)/2\big)}{\Gamma^2\big((l+ 1)/2\big)}.
\end{equation*}
\end{lem}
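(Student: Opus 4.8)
The plan is to reduce the claim to a statement about the symbol $1 - \alpha V_l(\cdot + \mathrm i/2)$ on the critical line and then translate the positivity (or lack thereof) of this symbol into boundedness from below of the operator $\widetilde H^\alpha_l = (1 - \alpha V_l(\cdot + \mathrm i/2))R^1$. First I would recall from \eqref{MWF Herbst} that $\widetilde H^\alpha_l$ acts on $\mathfrak D^1$ and that $R^1 = \mathcal M r \mathcal M^*$ in the Mellin picture; concretely, for $\psi \in \mathfrak D^1$ with analytic continuation $\Psi$ in the strip $\mathfrak S^1$, the quadratic form $\langle \psi, \widetilde H^\alpha_l \psi\rangle$ can be computed by shifting the contour from $\Im z = 0$ to $\Im z = 1/2$ (the center of the strip), where $V_l(\cdot + \mathrm i/2)$ evaluated on that shifted line becomes $V_l(\tau)$ for real $\tau$, a real, positive, even function. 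This is exactly the computation carried out for Lemma 21 in \cite{MorozovMueller}, and Remark \ref{Xi bar and inverse remark} together with the observation that $\Xi_l^{-1}$ has no pole in $\overline{\mathfrak S^1}$ guarantees the contour shift is legitimate for every $\psi \in \mathfrak D^1$.

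Next, on the shifted contour the form becomes $\int_{\mathbb R} (1 - \alpha V_l(\tau)) |\widetilde\Psi(\tau)|^2 \, \mathrm d\tau$ for a suitable representative $\widetilde\Psi$ on the center line (with $\|\widetilde\Psi\|_{\mathsf L^2} $ comparable to $\|\psi\|$ via the $\mathsf L^2$-boundedness of the complex shift on $\mathfrak D^1$). By part 2 of Lemma \ref{V monotonicity lemma}, $V_l(\tau)$ is positive and strictly decreasing for $\tau \in \mathbb R_+$ and even, so its supremum over $\tau \in \mathbb R$ equals $V_l(0)$, attained only at $\tau = 0$, and $V_l(\tau) \to 0$ as $|\tau| \to \infty$. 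Therefore if $\alpha \leqslant \alpha_l := 1/V_l(0)$ we get $1 - \alpha V_l(\tau) \geqslant 1 - \alpha/\alpha_l \geqslant 0$ pointwise, hence $\widetilde H^\alpha_l \geqslant 0$; this proves the "if" direction and simultaneously shows non-negativity. For the converse, when $\alpha > \alpha_l$ the function $1 - \alpha V_l(\tau)$ is strictly negative on a neighbourhood of $\tau = 0$; choosing test functions $\psi$ whose transforms $\widetilde\Psi$ concentrate near $\tau = 0$ (e.g. approximate delta sequences realised within $\mathfrak D^1$, which is possible since the center-line values of elements of $\mathfrak D^1$ sweep out a dense set) drives the quadratic form to $-\infty$ while the norm stays bounded, contradicting semiboundedness.

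The symmetry of $\widetilde H^\alpha_l$ on $\mathfrak D^1$ follows directly from \eqref{MWF Herbst}: $(-\Delta)^{1/2} - \alpha|\cdot|^{-1}$ is symmetric on $\mathsf H^1(\mathbb R^3)$ and the decomposition into channels is unitary, so each fibre is symmetric on the image of $\mathsf H^1$ under $\mathcal{MRF}$, which contains $\mathfrak D^1$; alternatively one checks directly that $V_l(\cdot + \mathrm i/2)$ is real on the relevant contour after the shift.

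The main obstacle I anticipate is the rigorous justification of the contour shift for \emph{all} $\psi \in \mathfrak D^1$ rather than a dense subclass — one must control the behaviour of $\Psi$ near the boundary lines $\Im z = 0$ and $\Im z = 1$ of the strip and verify that $1 - \alpha V_l$ does not introduce spurious poles inside $\mathfrak S^1$ (it does not, since $V_l(z)$ is analytic in $(\mathbb C \setminus \mathrm i\mathbb Z)\cup\{0\}$ by Lemma \ref{V monotonicity lemma}, and $\mathrm i\mathbb Z \cap \mathfrak S^1 = \varnothing$). The remaining delicacy is to arrange the $-\infty$-driving test functions genuinely inside $\mathfrak D^1$ for the necessity direction; here I would lean on the same construction as in the proof of Lemma 21 in \cite{MorozovMueller}, for which, as the excerpt notes, the arguments apply verbatim (in fact more easily, because $\Xi_l^{-1}$ has no pole in $\overline{\mathfrak S^1}$, so no additional subtraction of boundary terms is needed).
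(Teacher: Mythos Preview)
Your proposal is correct and follows essentially the same route as the paper: the paper states that the lemma follows from Lemmata \ref{V corollary} and \ref{V monotonicity lemma} in the same way as Lemma 21 in \cite{MorozovMueller}, which is precisely the contour-shift-plus-monotonicity argument you outline, with the simplification (which you also note) that here $\Xi_l^{-1}$ has no pole in $\overline{\mathfrak S^1}$. For the necessity direction one can alternatively invoke the homogeneity of $(-\Delta)^{1/2} - \alpha|\cdot|^{-1}$ under dilations to turn a single negative expectation value into unboundedness from below, but your concentration argument (deferring to the construction in \cite{MorozovMueller}) is equivalent.
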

Given $l\in \mathbb N_{0}$, Lemma \ref{critical constants lemma} allows us for $\alpha\leqslant\alpha_l$ to pass from the symmetric operator $\widetilde H^\alpha_l$ to the self-adjoint operator $H^\alpha_l$ by Friedrichs extension \cite{Friedrichs}. 

\begin{lem}\label{critical lower bound lemma}
 For $\lambda\in (0, 1)$ the inequality
 \begin{equation}\label{critical lower bound with operators}
  H_0^{\alpha_0}\geqslant L_\lambda a^{\lambda -1}R^\lambda -a^{-1}
 \end{equation}
 holds for all $a >0$ with $L_\lambda$ as in \eqref{SSS lower bound}.
\end{lem}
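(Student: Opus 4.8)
The plan is to obtain \eqref{critical lower bound with operators} by transporting the sharp inequality \eqref{SSS lower bound} for the full critical Herbst operator to the lowest angular channel in the Mellin--Fourier picture. The first ingredient is the elementary identity $\alpha_0 =2/\pi$: by \eqref{V} one has $V_0(0) =\Gamma(1/2)^2/\big(2\Gamma(1)^2\big) =\pi/2$, so $\alpha_0 =1/V_0(0) =2/\pi$ in the notation of Lemma \ref{critical constants lemma}. Comparing \eqref{MWF Herbst} at $l =0$, $\alpha =2/\pi$ with the definition of $H_0^{\alpha_0}$ as the Friedrichs extension of $\widetilde H_0^{2/\pi} =\big(1 -(2/\pi)V_0(\cdot +\mathrm i/2)\big)R^1$, we see that $H_0^{\alpha_0}$ is precisely the lowest angular channel of $\widetilde H^{2/\pi} =(-\Delta)^{1/2} -(2/\pi)|\cdot|^{-1}$, transported to $\mathsf L^2(\mathbb R)$ by the $(0,0)$-component of $\mathcal{MRF}$.

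Next I would note that \eqref{SSS lower bound} is a form inequality that holds at least on $\mathsf H^1(\mathbb R^3, \mathbb C)$, this space being contained in the form domain of the left-hand side by the Kato--Herbst inequality $(-\Delta)^{1/2} \geqslant (2/\pi)|\cdot|^{-1}$. The operators $(-\Delta)^{1/2}$, $|\cdot|^{-1}$ and $(-\Delta)^{\lambda/2}$ are rotationally invariant and hence reduced by the angular decomposition \eqref{scalar spherical}. Restricting the inequality to the radial (i.e.\ $l =0$) reducing subspace and applying $\mathcal{MRF}$, and using that $(-\Delta)^{1/2}$ and $(-\Delta)^{\lambda/2}$ are carried into $R^1$ and $R^\lambda$, respectively (the scalar analogue of Lemma \ref{transformed operators lemma}, by the same Bessel--Mellin computation that underlies \eqref{MWF Herbst}), together with \eqref{MWF Herbst} itself, we obtain
\begin{equation*}
 \big\langle\psi, \widetilde H_0^{2/\pi}\psi\big\rangle \geqslant L_\lambda a^{\lambda -1}\big\|R^{\lambda/2}\psi\big\|^2 -a^{-1}\|\psi\|^2, \qquad a >0,
\end{equation*}
for every $\psi$ in the $\mathcal{MRF}$-image of the radial subspace of $\mathsf H^1(\mathbb R^3, \mathbb C)$, which is exactly the operator domain $\mathfrak D^1$ of $\widetilde H_0^{2/\pi}$.

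It then remains to pass to the Friedrichs extension. Since $\mathfrak D^1$ is a form core for $H_0^{\alpha_0}$, given $\psi\in \mathfrak D\big(|H_0^{\alpha_0}|^{1/2}\big)$ we choose $\psi_n\in \mathfrak D^1$ converging to $\psi$ in the form norm of $H_0^{\alpha_0}$; then $\big\langle\psi_n, \widetilde H_0^{2/\pi}\psi_n\big\rangle \to \big\langle\psi, H_0^{\alpha_0}\psi\big\rangle$, while $\psi_n\mapsto \big\|R^{\lambda/2}\psi_n\big\|^2 =\big\langle\psi_n, R^\lambda\psi_n\big\rangle$, being the closed non-negative quadratic form of the self-adjoint operator $R^\lambda$, is lower semicontinuous with respect to $\mathsf L^2$-convergence. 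Letting $n\to \infty$ in the displayed inequality (in particular one concludes $\mathfrak D\big(|H_0^{\alpha_0}|^{1/2}\big)\subseteq \mathfrak D(R^{\lambda/2})$, so that the right-hand side makes sense) yields exactly \eqref{critical lower bound with operators}.

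The main obstacle I anticipate is the bookkeeping in the second step: identifying the $\mathcal{MRF}$-image of the radial part of $\mathsf H^1(\mathbb R^3, \mathbb C)$ with the operator domain $\mathfrak D^1$ of $\widetilde H_0^{2/\pi}$ and, above all, confirming that $\mathfrak D^1$ is a form core for its Friedrichs extension $H_0^{\alpha_0}$ — this is where the structure of $\mathfrak D^1$ enters, in particular the absence of poles of $\Xi_0^{\pm1}$ in the closed strip $\overline{\mathfrak S^1}$ recorded after Remark \ref{Xi bar and inverse remark} (which is what makes the three-dimensional situation simpler than the one in \cite{MorozovMueller}). Once these identifications are in place, the statement follows directly from \eqref{SSS lower bound}, \eqref{MWF Herbst} and the elementary value $\alpha_0 =2/\pi$.
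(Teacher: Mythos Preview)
Your proposal is correct and follows essentially the same route as the paper: identify $\alpha_0=2/\pi$, transport the inequality \eqref{SSS lower bound} to the $l=0$ channel via $\mathcal{MRF}$ to obtain the form bound on $\mathfrak D^1$, and then close up using that $H_0^{\alpha_0}$ is the Friedrichs extension of $\widetilde H_0^{\alpha_0}$. Your treatment of the closure step (form core plus lower semicontinuity of the closed form of $R^\lambda$) is in fact more explicit than the paper's, which simply invokes the Friedrichs extension.
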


\begin{proof}
 For any $\upsilon\in \mathfrak D^{1}$ we have
\begin{equation}\label{form of H_m tilde}
 \langle \upsilon, \widetilde H_0^{\alpha_0}\upsilon\rangle = \langle \upsilon, R^1\upsilon\rangle - \alpha_0\langle \upsilon, V_{0}(\cdot +\mathrm i/2)R^1\upsilon\rangle.
\end{equation}
Letting
\begin{equation*}
 \Phi:= (\mathcal{MRF})^*\bigoplus_{(l,m)\in\mathfrak{A}}\delta_{l, 0}\delta_{m, 1/2}\upsilon \in\mathsf H^1(\mathbb R^3, \mathbb C),
\end{equation*}
by scalar analogues of Lemmata \ref{transformed operators lemma} and \ref{1/r with T lemma} together with Lemma \ref{V corollary} we obtain
\begin{align*}
\langle \upsilon, R^1\upsilon\rangle&=\langle \Phi, (-\Delta)^{1/2}\Phi\rangle; \\
 \langle \upsilon, V_{0}(\cdot +\mathrm i/2)R^1\upsilon\rangle &=\langle \Phi, r^{-1}\Phi\rangle.
\end{align*}
Thus \eqref{SSS lower bound} implies that \eqref{form of H_m tilde} can be estimated from below by
\begin{equation*}
 \langle \Phi, \big(L_\lambda a^{\lambda -1}(-\Delta)^{\lambda/2}-a^{-1}\big)\Phi \rangle.
\end{equation*}
Using the scalar version of Lemma \ref{transformed operators lemma} again and that $H_0^{\alpha_0}$ is the Friedrichs extension of $\widetilde H_0^{\alpha_0}$ we conclude \eqref{critical lower bound with operators}.
\end{proof}

The following description of the domains of $H^\alpha_l$ with $\alpha \leqslant \alpha_l$ follows analogously to Theorem 2 and Corollary 2 in \cite{YaouancOliverRaynal} (see also Section 2.2.3 of \cite{BalinskyEvans}), where the case $l =0$ is studied.

\begin{lem}\label{H alpha l domains lemma}
 Let $l\in\mathbb N_{0}$. 
 \begin{enumerate}
\item[(i)] For $\alpha < V^{-1}_{l}(\mathrm{i}/2)$ the operator $\widetilde H^\alpha_l$ is self-adjoint, i.e. $\widetilde H^\alpha_l=H^\alpha_l$.
\item[(ii)] The operator $\widetilde H^{V^{-1}_{l}(\mathrm{i}/2)}_l$ is essentially self-adjoint, i.e. the relation \\ $\big(\widetilde H^{V^{-1}_{l}(\mathrm{i}/2)}_l\big)^{*}=H^{V^{-1}_{l}(\mathrm{i}/2)}_l$ holds.
\item[(iii)] For $\alpha \in (V^{-1}_{l}(\mathrm{i}/2), \alpha_l]$ the Friedrichs extension $H^\alpha_l$ of $\widetilde H^\alpha_l$ is a restriction of
\begin{equation}\label{H_l^alpha^*}
 (\widetilde H_l^\alpha)^*= R^1\big(1 -\alpha V_{l}(\cdot\, -\mathrm i/2)\big)
\end{equation}
to
\begin{equation*}
 \mathfrak D(H^\alpha_l) =\mathfrak D^1\dot{+} \Span\big\{(\cdot\, -\mathrm i/2 +\mathrm i\zeta_{l, \alpha})^{-1}\big\},
\end{equation*}
where $\zeta_{l, \alpha}$ is the unique solution of
\begin{equation*}
 1- \alpha V_{l}(-\mathrm i\zeta_{l, \alpha}) =0
\end{equation*}
in $(-1/2, 0]$.
\end{enumerate} 
\end{lem}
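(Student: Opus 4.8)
The plan is to transfer the analysis from the half-line operator $\widetilde H^\alpha_l$ to its Mellin picture, where by \eqref{MWF Herbst} it becomes multiplication by the symbol $1-\alpha V_l(\cdot+\mathrm i/2)$ on $\mathfrak D^1$. First I would record, using \eqref{complex shifted} and Lemma \ref{V corollary}, that the adjoint is given by \eqref{H_l^alpha^*}: since $\widetilde H^\alpha_l$ acts on $\mathfrak D^1$ as $(1-\alpha V_l(\cdot+\mathrm i/2))R^1$ and $V_l$ is real and even on the real axis, integration by parts in the Mellin variable (equivalently the unitarity relation $\mathcal M r^\lambda\mathcal M^*=R^\lambda$ together with the analytic shift properties from Definition \ref{D lambda definition}) identifies $(\widetilde H^\alpha_l)^*$ with $R^1(1-\alpha V_l(\cdot-\mathrm i/2))$ on its maximal domain, namely those $\psi\in\mathsf L^2(\mathbb R)$ for which $(1-\alpha V_l(\cdot-\mathrm i/2))\psi$ lies in the range of $R^1$ acting on $\mathsf L^2$. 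The key analytic input is Lemma \ref{V monotonicity lemma}: $V_l(\mathrm i\zeta)$ is positive and strictly increasing on $[0,1)$, with $V_l(0)=1/\alpha_l$, so the equation $1-\alpha V_l(-\mathrm i\zeta)=0$ has a unique root $\zeta_{l,\alpha}\in(-1/2,0]$ precisely when $V_l^{-1}(\mathrm i/2)\le\alpha\le\alpha_l$, and $\zeta_{l,\alpha}=0$ exactly at $\alpha=\alpha_l$.

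For part (i), when $\alpha<V_l^{-1}(\mathrm i/2)$ I would show the symbol $1-\alpha V_l(\cdot-\mathrm i/2)$ is bounded away from zero on the relevant strip $\overline{\mathfrak S^1}$ — here one uses that $V_l$ restricted to horizontal lines $\Im z=t$ attains its maximum modulus on the imaginary axis (a consequence of the explicit Gamma-function form \eqref{V} and the reflection/monotonicity of Lemma \ref{V monotonicity lemma}), so $|V_l(\tau-\mathrm i/2)|\le V_l(\mathrm i/2)$ and hence the symbol is invertible with bounded inverse. This forces $\mathfrak D((\widetilde H^\alpha_l)^*)\subseteq\mathfrak D^1$ via the analyticity clauses of Definition \ref{D lambda definition}, giving self-adjointness. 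For part (ii), at the threshold $\alpha=V_l^{-1}(\mathrm i/2)$ the symbol vanishes at the single point $z=\mathrm i/2$ on the boundary of the strip but with a simple zero; the candidate extra function $(\cdot-\mathrm i/2+\mathrm i\zeta)^{-1}$ degenerates (its pole would sit on the boundary line $\Im z=1$ rather than strictly inside), so it fails to be square-integrable along that line and no genuine extension beyond $\mathfrak D^1$ survives, yielding essential self-adjointness. For part (iii), when $\alpha\in(V_l^{-1}(\mathrm i/2),\alpha_l]$ the root $\zeta_{l,\alpha}\in(-1/2,0]$ lies strictly inside, so $g_{l,\alpha}(z):=(z-\mathrm i/2+\mathrm i\zeta_{l,\alpha})^{-1}$ has its pole at $\Im z=1/2-\zeta_{l,\alpha}\in[1/2,1)$, strictly inside the strip $\mathfrak S^1$; one checks $g_{l,\alpha}\notin\mathfrak D^1$ (the pole obstructs clause 2 of Definition \ref{D lambda definition}) but $(1-\alpha V_l(\cdot-\mathrm i/2))g_{l,\alpha}$ is analytic across that pole because the zero of the symbol exactly cancels it, hence lands in $\mathsf L^2$ and shows $g_{l,\alpha}\in\mathfrak D((\widetilde H^\alpha_l)^*)$. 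A deficiency-index count (the defect is at most one-dimensional, again from the simple-zero structure) gives $\mathfrak D((\widetilde H^\alpha_l)^*)=\mathfrak D^1\dot+\Span\{g_{l,\alpha}\}$, and identifying the Friedrichs extension among the one-parameter family of self-adjoint restrictions follows by checking that $g_{l,\alpha}$ (rather than a function with a pole on the lower boundary) gives the form-bounded extension — this is where one invokes that the Friedrichs extension is characterised by the least boundary condition, precisely as in Theorem 2 and Corollary 2 of \cite{YaouancOliverRaynal} for $l=0$.

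The main obstacle I anticipate is part (iii): correctly pinning down which element of the deficiency space corresponds to the Friedrichs extension, as opposed to the Krein or other intermediate extensions. This requires either a direct quadratic-form computation showing $g_{l,\alpha}$ has finite energy in the form sense while the alternative does not, or a careful limiting argument as $\alpha\uparrow\alpha_l$; both are delicate because near $\zeta_{l,\alpha}=0$ the two candidate boundary conditions nearly coincide. The remaining ingredient that needs care throughout is the precise correspondence, under $\mathcal M$, between poles of a meromorphic function inside the strip $\mathfrak S^1$ and failure of membership in $\mathfrak D^1$ — but this is exactly the bookkeeping carried out in \cite{YaouancOliverRaynal} and in the two-dimensional predecessor \cite{MorozovMueller}, so it can be cited rather than redone.
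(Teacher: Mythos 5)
The paper offers no proof of this lemma: it simply records that the result ``follows analogously to Theorem 2 and Corollary 2 in \cite{YaouancOliverRaynal}'' (which treat $l=0$). Your proposal reconstructs precisely that argument — adjoint via the Mellin symbol, analytic continuation across the strip, simple zero of the symbol cancelling a pole of the extra function — so the approach matches the one the authors intend. The overall strategy is sound and matches the cited route.

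However, there is a genuine internal inconsistency in your treatment of part (iii). You claim $\mathfrak D\big((\widetilde H^\alpha_l)^*\big)=\mathfrak D^1\dot+\Span\{g_{l,\alpha}\}$ with a single extra vector, and in the same breath speak of ``the one-parameter family of self-adjoint restrictions'' and ``a function with a pole on the lower boundary,'' which presupposes at least two extra vectors. These two statements cannot both be right. Since $\overline{\widetilde H^\alpha_l}$ has domain $\mathfrak D^1$ (the symbol $1-\alpha V_l(\cdot+\mathrm i/2)$ is non-vanishing on $\mathbb R$ for $\alpha\neq V_l^{-1}(\mathrm i/2)$) and the deficiency indices $(n,n)$ satisfy $\dim\big(\mathfrak D(A^*)/\mathfrak D(\overline A)\big)=2n$, your one-dimensional count would force $n=\tfrac12$, which is impossible. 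In fact $V_l$ is even, so the symbol $1-\alpha V_l(z-\mathrm i/2)$ has \emph{two} simple zeros on the imaginary axis inside $\mathfrak S^1$, at $\Im z=1/2\pm|\zeta_{l,\alpha}|$. The correct picture is therefore $\mathfrak D\big((\widetilde H^\alpha_l)^*\big)=\mathfrak D^1\dot+\Span\{g_1,g_2\}$ (deficiency indices $(1,1)$), where $g_1$ is the function in the lemma and $g_2=(\cdot-\mathrm i/2-\mathrm i\zeta_{l,\alpha})^{-1}$ has its pole below $\Im z=1/2$. The Friedrichs extension then restricts to $\mathfrak D^1\dot+\Span\{g_{l,\alpha}\}$, discarding $g_2$. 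Fixing the count repairs the inconsistency and makes the Friedrichs-selection step — which you correctly flag as the delicate part — a genuine choice among the von Neumann extensions rather than a vacuous one. The rest of the outline (part (i) via non-vanishing of the symbol on $\overline{\mathfrak S^1}$ when $\alpha<V_l^{-1}(\mathrm i/2)$, part (ii) via degeneration of the pole onto the boundary line where $(\cdot-\mathrm i)^{-1}$ is not square-integrable) is correct and in line with the cited reference.
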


We now make a crucial observation concerning the functions \eqref{extra functions on the half line} transformed in Lemma \ref{T transformed extra functions lemma}. The following lemma can be proved in the same way as Lemma 24 in \cite{MorozovMueller} using Lemma \ref{H alpha l domains lemma}.

\begin{lem}\label{domain connection lemma}
 Let $\nu\in (\sqrt{3}/2,1]$. For $s =\pm1/2$ the functions \eqref{chi}, \eqref{xi} and \eqref{eta} satisfy:
\begin{enumerate}
  \item $\xi^\nu_s$ and $\eta^\nu_s$ belong to $\mathfrak D^1$;
  \item $\chi^\nu_s$ belongs to $\mathfrak D\big(H_0^{(V_{0}(\mathrm i\Upsilon_\nu))^{-1}}\big)\cap\mathfrak D\big(H_1^{(V_{1}(\mathrm i\Upsilon_\nu))^{-1}}\big)$.
\end{enumerate}
\end{lem}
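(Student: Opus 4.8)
The plan is to reduce the analytic statement about the transformed boundary functions $\xi^\nu_s$, $\eta^\nu_s$, $\chi^\nu_s$ to membership statements about explicit meromorphic functions in the strip $\mathfrak{S}^1$, exactly as in the proof of Lemma 24 of \cite{MorozovMueller}, but now taking advantage of the simplification noted before Lemma \ref{V corollary} (namely that $\Xi_l^{-1}$ has no pole in $\overline{\mathfrak{S}^1}$, so that $\mathfrak D^1$ is preserved under multiplication by $\Xi_l^{\pm1}$). First I would recall from Lemma \ref{T transformed extra functions lemma} that $\psi^\nu_s$ from the operator core $\mathfrak C^\nu_{1/2-s,s}$ transforms under $\mathcal U_{1/2-s,s}$ into $\binom{\xi^\nu_s}{\eta^\nu_s} + \chi^\nu_s\binom{1}{\nu V_{1/2-s}(\mathrm i\Upsilon_\nu)}$, with the three scalar functions given by \eqref{chi}, \eqref{xi}, \eqref{eta} in terms of $\Gamma$-functions and the functions $\Xi_l$ of \eqref{Xi}.

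For part (1), the key point is that $\xi^\nu_s$ and $\eta^\nu_s$ are, by construction, exactly the ``regular'' part of $\psi^\nu_s$: the subtraction of the term $\tfrac{(\tau-\mathrm i)\Xi(\mathrm i(\Upsilon_\nu+1/2))}{\mathrm i(\Upsilon_\nu-1/2)}$ inside the brackets of \eqref{xi} and \eqref{eta} is designed precisely to cancel the pole of $\Gamma(\mathrm i\tau+\Upsilon_\nu+1/2)$ at $\tau = \mathrm i(\Upsilon_\nu+1/2)$, which is the only singularity of the relevant product inside $\overline{\mathfrak{S}^1}$ (note $\Upsilon_\nu+1/2\in(1/2,3/2)$ for $\nu\in(\sqrt3/2,1]$, wait — actually $\Upsilon_\nu<1/2$ here, so $\Upsilon_\nu+1/2\in(0,1)$ lies strictly inside the strip). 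So I would verify: $\Gamma(\mathrm i\tau+\Upsilon_\nu+1/2)$ times $\Xi_{1/2-s}(\tau)$ (resp. $\Xi_{s+1/2}(\tau)$) is meromorphic on a neighbourhood of $\overline{\mathfrak{S}^1}$ with a single simple pole at $\mathrm i(\Upsilon_\nu+1/2)$; the subtracted linear-over-the-same-$\Gamma$ term is chosen to have matching residue; hence the difference is analytic in $\mathfrak{S}^1$, and one checks the three defining properties of $\mathfrak D^1$ (Definition \ref{D lambda definition}) via the known $\mathsf L^2$-decay of $\Gamma$ along horizontal lines and the Stirling-type bounds on $\Xi_l$ — these are the routine estimates already carried out in \cite{MorozovMueller} and I would simply cite them. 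This shows $\xi^\nu_s,\eta^\nu_s\in\mathfrak D^1$.

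For part (2), the claim is that $\chi^\nu_s$ — an explicit constant multiple of $(\tau-\mathrm i)\Gamma(\mathrm i\tau+\Upsilon_\nu+1/2)$ — lies in the domain of the Friedrichs extensions $H_0^{(V_0(\mathrm i\Upsilon_\nu))^{-1}}$ and $H_1^{(V_1(\mathrm i\Upsilon_\nu))^{-1}}$. Here I would invoke Lemma \ref{H alpha l domains lemma}(iii): for $l\in\{0,1\}$ the coupling $\alpha = (V_l(\mathrm i\Upsilon_\nu))^{-1}$ lies in $(V_l^{-1}(\mathrm i/2),\alpha_l]$ because $\Upsilon_\nu\in(0,1/2)$ and $V_l(\mathrm i\zeta)$ is strictly increasing on $[0,1)$ by Lemma \ref{V monotonicity lemma}(3) (so $V_l(\mathrm i\Upsilon_\nu)<V_l(\mathrm i/2)$, giving $\alpha > V_l^{-1}(\mathrm i/2)$, and $V_l(\mathrm i\Upsilon_\nu)>V_l(0)$, giving $\alpha<\alpha_l$). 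Then $\mathfrak D(H_l^\alpha) = \mathfrak D^1 \dot+ \Span\{(\cdot-\mathrm i/2+\mathrm i\zeta_{l,\alpha})^{-1}\}$ where $\zeta_{l,\alpha}$ solves $1-\alpha V_l(-\mathrm i\zeta_{l,\alpha})=0$, i.e. $V_l(-\mathrm i\zeta_{l,\alpha}) = V_l(\mathrm i\Upsilon_\nu)$; by the evenness and monotonicity of $V_l$ the unique solution in $(-1/2,0]$ is $\zeta_{l,\alpha} = -\Upsilon_\nu$. Hence $\mathfrak D(H_l^\alpha) = \mathfrak D^1\dot+\Span\{(\cdot - \mathrm i/2 - \mathrm i\Upsilon_\nu)^{-1}\}$, and the singular generator $(\tau-\mathrm i/2-\mathrm i\Upsilon_\nu)^{-1}$ has its pole at exactly $\mathrm i(1/2+\Upsilon_\nu)$ — the same location as the pole of $\chi^\nu_s$. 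So it remains to split $\chi^\nu_s = c\,(\tau-\mathrm i/2-\mathrm i\Upsilon_\nu)^{-1} + (\text{analytic remainder})$ for a suitable constant $c$, and check the remainder lies in $\mathfrak D^1$; since this remainder is, up to the pole cancellation, again a product of $(\tau-\mathrm i)$ with $\Gamma(\mathrm i\tau+\Upsilon_\nu+1/2)$ minus its principal part, the same Stirling/$\Gamma$-decay estimates apply. The main obstacle — and the only place genuine work is hidden — is this pole-matching bookkeeping: verifying that the residue of $\chi^\nu_s$ at $\mathrm i(1/2+\Upsilon_\nu)$ is nonzero (so that the singular part is really present in both channels $l=0$ and $l=1$, which is what ties $\psi^\nu_s$ to the Friedrichs domains rather than to $\mathfrak D^1$ alone) and confirming that no other singularities of the explicit $\Gamma$-$\Xi$ expressions intrude into $\overline{\mathfrak{S}^1}$; all of this parallels Lemma 24 of \cite{MorozovMueller} and I would present it by reference to that computation together with the simplification that $\Xi_l^{-1}$ is pole-free in $\overline{\mathfrak{S}^1}$.
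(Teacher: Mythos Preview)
Your approach is correct and is precisely the one the paper intends: the paper's own proof is merely the sentence ``can be proved in the same way as Lemma 24 in \cite{MorozovMueller} using Lemma \ref{H alpha l domains lemma}'', and what you have written is exactly the unfolding of that reference together with the simplification (no poles of $\Xi_l^{-1}$ in $\overline{\mathfrak S^1}$) flagged before Lemma \ref{V corollary}. Two minor remarks: (i) for $\nu\in(\sqrt3/2,1]$ one has $\Upsilon_\nu\in[0,1/2)$, so the pole sits at height $\Upsilon_\nu+1/2\in[1/2,1)$ rather than all of $(0,1)$ --- your self-correction gets this right, just tighten the interval; (ii) the nonvanishing of the residue of $\chi^\nu_s$ is not needed for the bare membership statement of the lemma (a zero residue would simply land $\chi^\nu_s$ in $\mathfrak D^1\subset\mathfrak D(H_l^\alpha)$), though it is of course what makes the decomposition in Lemma \ref{maximal subdomain transformed lemma} nontrivial.
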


\section{Channel-wise estimates}\label{critical channels section}
We will now prove lower bounds on $|D^{\nu}_{l,s}|$ for $(l, m, s) \in\mathfrak T$ and $\nu\in [0, 1]$. For $(\nu, l, s)\in(\sqrt3/2, 1]\times \big\{(0, 1/2), (1, -1/2)\big\}$ the operator $\big(D^{\nu}_{l,s}\big)^2$ cannot be bounded from below by an operator unitarily equivalent to $R^2$, but for all other $(\nu, l,s)$ this is possible. Thus we distinguish two different types of channels.

\paragraph{Critical channels.}
For $\nu\in [0, 1]$ and $s =\pm1/2$ we introduce the $(2\times 2)$-matrix-valued function on $\mathbb R$:
\begin{equation}\label{M kappa nu}
 M_{s}^\nu:= \begin{pmatrix}-\nu V_{1/2 -s}(\cdot\, +\mathrm i/2) & 1\\ 1 & -\nu V_{1/2 +s}(\cdot\, +\mathrm i/2)\end{pmatrix}.
\end{equation}

\begin{lem}\label{maximal subdomain transformed lemma}
Let $\nu\in(\sqrt{3}/2,1]$. For $s =\pm1/2$ and any $\Psi\in\mathfrak C^\nu_{1/2 -s, s}$ there exists a decomposition
\begin{align}\label{U Psi representation}
 \mathcal U_{1/2 -s, s}\Psi =\binom{\zeta}\upsilon +a\chi_s^\nu\binom1{\nu V_{1/2 -s}(\mathrm i\Upsilon_\nu)}
\end{align}
with $\zeta, \upsilon\in \mathfrak D^1$ and $a\in \mathbb C$. 
Moreover, the representation
\begin{equation*}
\begin{split}
 \mathcal U_{1/2 -s, s}D^\nu_{1/2 -s, s}\Psi = M^\nu_{s}\binom{R^1\zeta +a\chi_s^\nu(\cdot +\mathrm i)}{R^1\upsilon +a\nu V_{1/2 -s}(\mathrm i\Upsilon_\nu)\chi_s^\nu(\cdot +\mathrm i)}
\end{split}\end{equation*}
holds.
\end{lem}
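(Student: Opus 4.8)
The plan is to take an arbitrary $\Psi\in\mathfrak C^\nu_{1/2-s,s}$, split off the singular part, and track what the transform $\mathcal U_{1/2-s,s}$ together with $D^\nu_{1/2-s,s}$ does to each piece. Recall from \eqref{operator core} that $\mathfrak C^\nu_{1/2-s,s}=\mathsf C_0^\infty(\mathbb R_+,\mathbb C^2)\dot+\Span\{\psi^\nu_s\}$ in the relevant cases, so write $\Psi=\Psi_0+a\psi^\nu_s$ with $\Psi_0\in\mathsf C_0^\infty(\mathbb R_+,\mathbb C^2)$ and $a\in\mathbb C$. For the regular part I would apply the transformed operators from Lemma \ref{transformed operators lemma} (the $(-\mathrm i\boldsymbol\alpha\cdot\nabla)$ part, which in each fibre is $R^1\otimes\sigma_1$) and Lemma \ref{1/r with T lemma} (the Coulomb part, which gives $\Xi_lR^1\Xi_l^{-1}$ in the two components with $l=1/2-s$ and $l=1/2+s$), then rewrite $\Xi_lR^1\Xi_l^{-1}$ via Lemma \ref{V corollary} as multiplication by $V_l(\cdot+\mathrm i/2)$ applied to $R^1$. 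Combining the kinetic and potential contributions and reading off the matrix structure from \eqref{d_kappa} and \eqref{M kappa nu}, one gets exactly $M^\nu_s\binom{R^1\zeta_0}{R^1\upsilon_0}$ where $\binom{\zeta_0}{\upsilon_0}:=\mathcal U_{1/2-s,s}\Psi_0\in\mathfrak D^1\oplus\mathfrak D^1$ (since smooth compactly supported functions land in $\mathfrak D^1$ under the MWF-type transform).

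For the singular part, Lemma \ref{T transformed extra functions lemma} gives the explicit decomposition $\mathcal U_{1/2-s,s}\psi^\nu_s=\binom{\xi^\nu_s}{\eta^\nu_s}+\chi^\nu_s\binom{1}{\nu V_{1/2-s}(\mathrm i\Upsilon_\nu)}$, and by Lemma \ref{domain connection lemma}(1) we have $\xi^\nu_s,\eta^\nu_s\in\mathfrak D^1$. Setting $\zeta:=\zeta_0+a\xi^\nu_s$, $\upsilon:=\upsilon_0+a\eta^\nu_s$ (both in $\mathfrak D^1$) yields the claimed representation \eqref{U Psi representation}. It remains to compute $\mathcal U_{1/2-s,s}D^\nu_{1/2-s,s}\psi^\nu_s$. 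Here I would \emph{not} try to apply the transformed Coulomb operator blindly to $\chi^\nu_s$ (which is not in $\mathfrak D^1$); instead, since $\psi^\nu_s$ lies in the operator core, $D^\nu_{1/2-s,s}\psi^\nu_s$ is computed directly from the differential expression \eqref{d_kappa} — a routine calculation using $\rho^{\Upsilon_\nu}$, $\mathrm e^{-\rho}$ and the relation $\Upsilon_\nu^2=1-\nu^2$ — and then transformed. The key structural identity to exploit is that $\chi^\nu_s$ belongs to $\mathfrak D(H_0^{1/V_0(\mathrm i\Upsilon_\nu)})\cap\mathfrak D(H_1^{1/V_1(\mathrm i\Upsilon_\nu)})$ by Lemma \ref{domain connection lemma}(2), so that $(\widetilde H_l^\alpha)^*$ as in \eqref{H_l^alpha^*} may legitimately be applied; concretely, for $l=1/2-s$ the action of the Coulomb term on $\chi^\nu_s$ (through the analytic continuation $\Xi_lR^1\Xi_l^{-1}$) produces $\chi^\nu_s(\cdot+\mathrm i)$ times the appropriate factor, and the shift $\cdot\mapsto\cdot+\mathrm i$ is exactly what appears inside the matrix $M^\nu_s$ acting on $\binom{R^1\zeta+a\chi^\nu_s(\cdot+\mathrm i)}{R^1\upsilon+a\nu V_{1/2-s}(\mathrm i\Upsilon_\nu)\chi^\nu_s(\cdot+\mathrm i)}$.

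The main obstacle is the bookkeeping around the singular component: one must verify that the boundary values picked up when passing $R^1$ through $\Xi_l\,\cdot\,\Xi_l^{-1}$ on $\chi^\nu_s$ are precisely absorbed by the $(\cdot+\mathrm i)$-shift, i.e. that the naive formal manipulation $\Xi_lR^1\Xi_l^{-1}\chi^\nu_s=V_l(\cdot+\mathrm i/2)\,\chi^\nu_s(\cdot+\mathrm i)\cdot(\text{const})$ is justified despite $\chi^\nu_s\notin\mathfrak D^1$; this is where Lemma \ref{domain connection lemma} and the explicit pole structure of $\Xi_l^{-1}$ (which, as noted before Lemma \ref{V corollary}, has no pole in $\overline{\mathfrak S^1}$) do the work. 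Since this argument is strictly parallel to the two-dimensional case, I would carry out the computation exactly as in the proof of the corresponding statement (Lemma 23 of \cite{MorozovMueller}), pointing out that the $\mathbb R^3$ Hardy inequality makes the analytic-continuation step here \emph{simpler} — no extra terms from poles of $\Xi_l^{-1}$ in the closed strip arise — and then assembling the regular and singular contributions into the stated matrix identity.
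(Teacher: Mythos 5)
Your argument for the decomposition \eqref{U Psi representation} coincides with the paper's: it is exactly the combination of \eqref{operator core}, Lemma \ref{T transformed extra functions lemma} and Lemma \ref{domain connection lemma}(1). For the operator identity your route differs in implementation from the paper's, though it rests on the same ingredients. You propose to compute $D^\nu_{1/2-s,s}\psi^\nu_s$ directly from \eqref{d_kappa} and then transform; this is clean and does work out --- the $\rho^{\Upsilon_\nu-1}$ coefficients cancel identically once one uses $\Upsilon_\nu^2=1-\nu^2$ together with $4s^2=1$, leaving $D^\nu_{1/2-s,s}\psi^\nu_s=\sqrt{2\pi}\,\mathrm e^{-\rho}\rho^{\Upsilon_\nu}\binom{\Upsilon_\nu-2s}{-\nu}$ --- but you then still have to identify $\mathcal U_{1/2-s,s}$ of this function with $M^\nu_s$ applied to the stated vector, which requires essentially the same Gamma-function bookkeeping as Lemma \ref{T transformed extra functions lemma}. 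The paper instead argues weakly: it pairs $D^\nu_{1/2-s,s}\Psi$ against $(\varpi,\varsigma)\in\mathsf C^\infty_0(\mathbb R_+,\mathbb C^2)$, moves the symmetric operator onto the test functions using \eqref{D nu} and \eqref{Upsi extra}, applies Lemmata \ref{transformed operators lemma}, \ref{1/r with T lemma}, \ref{V corollary} and the adjoint formula \eqref{H_l^alpha^*} on the test-function side where everything is unambiguously defined, and closes by density --- thereby never transforming $D^\nu_{1/2-s,s}\psi^\nu_s$ explicitly; the analytic-continuation bookkeeping is carried entirely by $(\widetilde H^\alpha_l)^*$ and Lemma \ref{domain connection lemma}(2). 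Note also that your second paragraph hedges between the direct-computation route and applying $(\widetilde H^\alpha_l)^*$ to $\chi^\nu_s$; both can be made to work, but you should commit to one rather than invoking both. (The paper's cross-reference is to Lemma 24 of \cite{MorozovMueller}, not 23.)
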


\begin{proof}
 The prove follows the same lines as the one of Lemma 24 in \cite{MorozovMueller}.
 The decomposition \eqref{U Psi representation} follows from \eqref{operator core}, Lemma \ref{T transformed extra functions lemma}, and Lemma \ref{domain connection lemma}.
 For any $(\varpi, \varsigma)\in \mathsf C^\infty_0\big(\mathbb R_+, \mathbb C^2\big)$ using \eqref{D nu}, \eqref{Upsi extra}, Lemmata \ref{transformed operators lemma} and \ref{1/r with T lemma}, Lemma \ref{V corollary} and \eqref{H_l^alpha^*} we obtain
\begin{equation*}
\begin{split}
 &\Big\langle D^\nu_{1/2 -s, s}\Psi, \binom{\varpi}{\varsigma}\Big\rangle =\Big\langle \mathcal U_{1/2 -s, s}^*M^\nu_{s}\binom{R^1\zeta +a\chi_s^\nu(\cdot +\mathrm i)}{R^1\upsilon +a\nu V_{1/2 -s}(\mathrm i\Upsilon_\nu)\chi_s^\nu(\cdot +\mathrm i)}, \binom{\varpi}{\varsigma}\Big\rangle.
\end{split}\end{equation*}
By density of $\mathsf C_0^\infty\big(\mathbb R_+, \mathbb C^2)$ the claim follows.
\end{proof}

The proof of the following lemma can be found in Appendix \ref{a: eta_nu}.

\begin{lem}\label{l: eta_nu}
The map $\nu\mapsto \eta_\nu$ given by \eqref{eta_nu} is real-analytic and mo\-no\-to\-no\-us\-ly decreasing on $[0,1]$.
\end{lem}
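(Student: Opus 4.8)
Throughout put $\Upsilon:=\Upsilon_\nu=\sqrt{1-\nu^2}$, $G(t):=1-2t\cot(\pi t/2)$, and
\[
f(\nu):=\sqrt{9+4\nu^2}-4\nu,\qquad g(\nu):=3G(\Upsilon),
\]
so that $\eta_\nu=f(\nu)/g(\nu)$ on $[0,1)\setminus\{\sqrt3/2\}$.

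\emph{Real-analyticity.} The one non-elementary observation is that $t\mapsto t\cot(\pi t/2)$ extends to a real-analytic \emph{even} function near $t=0$ with value $2/\pi$ there (the simple pole of $\cot$ at $0$ being cancelled). Hence $G$ is real-analytic and even on a neighbourhood of $0$, so $\Upsilon\cot(\pi\Upsilon/2)$ is an analytic function of $\Upsilon^2=1-\nu^2$ near $\Upsilon=0$; consequently $g$ is real-analytic on $[0,1)$ and also at $\nu=1$, while $f$ is real-analytic on $\mathbb R$. Moreover $g$ vanishes on $[0,1]$ only at $\nu=\sqrt3/2$: one has $G'(t)=\bigl(\pi t-\sin\pi t\bigr)/\sin^2(\pi t/2)>0$ for $t\in(0,1]$, so $G$ is strictly increasing there with unique zero $t=1/2$, and $\nu\mapsto\Upsilon_\nu$ is a strictly decreasing bijection of $[0,1]$ onto $[0,1]$ with $\Upsilon_{\sqrt3/2}=1/2$ and $G(0)=1-4/\pi\neq0$. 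Thus $\eta_\nu=f/g$ is real-analytic on $[0,1]\setminus\{\sqrt3/2\}$, with $f(1)/g(1)=\pi(4-\sqrt{13})/\bigl(3(4-\pi)\bigr)$ as in \eqref{eta_nu}. Finally both $f$ and $g$ have simple zeros at $\nu=\sqrt3/2$ — $f'(\sqrt3/2)=-3$, and $g'(\sqrt3/2)=-3\sqrt3\,(\pi-2)$ since $G'(1/2)=\pi-2$ — so $f/g$ extends real-analytically across $\sqrt3/2$ with value $f'(\sqrt3/2)/g'(\sqrt3/2)=1/\bigl(\sqrt3(\pi-2)\bigr)$, again as in \eqref{eta_nu}. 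Hence $\eta$ is real-analytic on $[0,1]$.

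\emph{Monotonicity.} Pass to $w:=\Upsilon^2=1-\nu^2\in[0,1]$. Rationalising, $f(\nu)=\sqrt{13-4w}-4\sqrt{1-w}=3(4w-1)/D_1(w)$ with $D_1(w):=\sqrt{13-4w}+4\sqrt{1-w}\geqslant3$, while $g(\nu)=3G(\sqrt w)=(4w-1)\,\widehat q(w)$ with $\widehat q(w):=3G(\sqrt w)/(4w-1)$, which by the analyticity and sign analysis above is real-analytic and strictly positive on $[0,1]$. Therefore
\[
\eta_\nu=\widehat\eta(w):=\frac{3}{D_1(w)\,\widehat q(w)},\qquad w=1-\nu^2 .
\]
Since $\nu\mapsto w$ is strictly decreasing, it suffices to show that $\widehat\eta$ is strictly increasing on $[0,1]$, i.e. that $D_1\widehat q$ is strictly decreasing, i.e. that
\[
-\bigl(\ln D_1\bigr)'(w)\;>\;\bigl(\ln\widehat q\bigr)'(w)\qquad\text{for all }w\in[0,1],
\]
where the left-hand side equals the elementary positive quantity $\bigl(2/\sqrt{13-4w}+2/\sqrt{1-w}\bigr)\big/D_1(w)$.

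\emph{The main obstacle.} This last inequality is the heart of the matter. Its right-hand side is the only transcendental ingredient: after dividing out the common simple zero of $g$ and $4w-1$ at $w=1/4$ one can express $\bigl(\ln\widehat q\bigr)'$ through $G'/G$ (equivalently through $\cot$), and the inequality can then be established from elementary bounds on $\cot$ on $(0,\pi/2)$ together with routine algebraic estimates for the remaining factors. Working in the variable $w=1-\nu^2$ is essential: because $f$ and $g$ are even in $\Upsilon$, the analogous derivative comparison in $\nu$ (or in $\Upsilon$) degenerates to an equality at $\nu=1$ (respectively $\Upsilon=0$), whereas in $w$ it remains strict throughout $[0,1]$ — which already delivers strict monotonicity of $\widehat\eta$, and hence of $\nu\mapsto\eta_\nu$ on $[0,1]$, with no further appeal to analyticity.
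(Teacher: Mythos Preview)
Your real-analyticity argument is fine and essentially the same as the paper's: both $f$ and $g$ are analytic on $[0,1]$, share a simple zero at $\nu=\sqrt3/2$, and the quotient extends analytically across it with the stated value.

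The monotonicity argument, however, is incomplete. You correctly reduce the claim to the inequality
\[
-(\ln D_1)'(w)\;>\;(\ln\widehat q)'(w)\qquad\text{for all }w\in[0,1],
\]
and you yourself call this ``the heart of the matter'' --- but then you do not prove it. The sentence ``can then be established from elementary bounds on $\cot$ on $(0,\pi/2)$ together with routine algebraic estimates'' is not a proof; it is a promise. Nothing in your write-up supplies those bounds, identifies which form they must take, or shows how the algebraic and transcendental pieces are to be balanced against each other on the whole interval. That this step is genuinely nontrivial is illustrated by the paper's own treatment: its monotonicity argument also reduces to a single analytic inequality (concavity of the denominator), and the verification of that inequality occupies most of Appendix~\ref{a: eta_nu}, with carefully chosen Taylor truncations for $\sin$ and $\cos$. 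There is no reason to expect your log-derivative inequality to be qualitatively easier; if anything, it mixes the transcendental part with the algebraic factor $D_1$, which may make a clean estimate harder to isolate.

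For comparison, the paper's route is different in structure: it shows that the numerator $(9+4\nu^2)^{1/2}-4\nu$ is convex and the denominator $3\bigl(1-2\Upsilon_\nu\cot(\pi\Upsilon_\nu/2)\bigr)$ is concave on $[0,1]$, both decreasing with a common zero at $\sqrt3/2$. Convexity/concavity then force $|f(\nu)|/|\nu-\sqrt3/2|$ to be decreasing and $|g(\nu)|/|\nu-\sqrt3/2|$ to be increasing, whence $f/g$ is decreasing. This decouples the two factors completely, so the only hard step is the concavity of the denominator, which is then handled by explicit Taylor estimates. Your change of variable to $w=1-\nu^2$ and factoring out $4w-1$ is a reasonable alternative framework, but to turn it into a proof you must actually carry out the estimate you announce.
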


The next lemma is the most technically demanding.

\begin{lem}\label{critical Dirac via Kato bound lemma}
For $\nu\in [0, 1]$ define the function
\begin{align}\label{K}
 K^\nu:= \Big| 1 - \big(V_{0}(\mathrm i\Upsilon_\nu)\big)^{-1}V_{0}(\cdot +\mathrm i/2)\Big|^2
\end{align}
on $\mathbb R$.
Then for $s =\pm1/2$ the maximal value of $\eta_\nu$ for which the lower bound
\begin{align}\label{critical Dirac via Kato bound}
 (M_{s}^\nu)^*M_{s}^\nu \geqslant \eta_{\nu}^2 K^{\nu}\otimes \mathds{1}_{\mathbb{C}^2}
\end{align}
holds point-wise on $\mathbb R$ is given by \eqref{eta_nu}.
\end{lem}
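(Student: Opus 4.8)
The plan is to reduce the matrix inequality \eqref{critical Dirac via Kato bound} to a scalar one and then to a single elementary inequality in one real variable. The case $\nu = 0$ is trivial ($M_s^0 = \sigma_1$, $K^0\equiv 1$, $\eta_0 = 1$), so assume $\nu > 0$. Since $M_{-1/2}^\nu = \sigma_1 M_{1/2}^\nu\sigma_1$ we have $(M_{-1/2}^\nu)^*M_{-1/2}^\nu = \sigma_1(M_{1/2}^\nu)^*M_{1/2}^\nu\sigma_1$ and $\sigma_1(K^\nu\mathds{1}_{\mathbb C^2})\sigma_1 = K^\nu\mathds{1}_{\mathbb C^2}$, so it is enough to treat $s = 1/2$; write $M := M_{1/2}^\nu$ and $v_l := V_l(\tau + \mathrm i/2)$ for $\tau\in\mathbb R$. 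As $M$ is complex symmetric, $M^*M = \overline M\,M$ is Hermitian with $\tr(M^*M) = 2 + \nu^2(|v_0|^2 + |v_1|^2)$ and $\det(M^*M) = |\det M|^2 = |1 - \nu^2 v_0v_1|^2$. I will show that the largest admissible $\eta_\nu$ equals $\big(\inf\{\mu_-(\tau)/K^\nu(\tau) : \tau\in\mathbb R,\ K^\nu(\tau) > 0\}\big)^{1/2}$, where $\mu_-(\tau) := \tfrac12\big(\tr(M^*M) - \sqrt{(\tr(M^*M))^2 - 4\det(M^*M)}\big)$ is the smaller eigenvalue of $(M^*M)(\tau)$, and that this infimum equals $\eta_\nu^2$.

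First I would collect explicit formulae. By \eqref{V via 1/V} at $z = \tau + \mathrm i/2$ one gets $v_0 v_1 = ((\tau + \mathrm i/2)^2 + 1)^{-1}$, hence $\det(M^*M) = |(\tau + \mathrm i/2)^2 + \Upsilon_\nu^2|^2\,|(\tau + \mathrm i/2)^2 + 1|^{-2}$; the reflection and recursion formulae for $\Gamma$ applied to \eqref{V} give
\[
 V_0(\tau + \mathrm i/2) = (1/2 - \mathrm i\tau)^{-1}\,\big(\cosh(\pi\tau/2) - \mathrm i\sinh(\pi\tau/2)\big)\big(\cosh(\pi\tau/2) + \mathrm i\sinh(\pi\tau/2)\big)^{-1},
\]
so that $|v_0|^2 = (1/4 + \tau^2)^{-1}$, $\Re v_0 = (1/2 + \tau\sinh(\pi\tau))\big((1/4 + \tau^2)\cosh(\pi\tau)\big)^{-1}$, and $|v_1|^2 = |v_0 v_1|^2|v_0|^{-2}$ are elementary; the same formulae give $(V_0(\mathrm i\Upsilon_\nu))^{-1} = \Upsilon_\nu\cot(\pi\Upsilon_\nu/2) =: \alpha^\nu$, which lies in $[0, 2/\pi]$ since $y\cot y\leqslant 1$ on $(0,\pi/2]$. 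Thus $K^\nu(\tau) = |1 - \alpha^\nu v_0|^2 = 1 - 2\alpha^\nu\Re v_0 + (\alpha^\nu)^2|v_0|^2$, and $K^\nu(\tau)\leqslant 1$ reduces to $\alpha^\nu\cosh(\pi\tau)\leqslant 1 + 2\tau\sinh(\pi\tau)$, which holds because $\alpha^\nu\leqslant 1$ and $\cosh(\pi\tau)\leqslant 1 + 2\tau\sinh(\pi\tau)$ (an elementary inequality). Consequently, for $\eta\leqslant 1$ the Hermitian matrix $(M^*M)(\tau) - \eta^2 K^\nu(\tau)\mathds{1}_{\mathbb C^2}$ has non-negative trace (as $\tr(M^*M)\geqslant 2\geqslant 2\eta^2 K^\nu$), so it is positive semidefinite iff its determinant $\eta^4 (K^\nu)^2 - \eta^2 K^\nu\tr(M^*M) + \det(M^*M)$ is non-negative; viewed as a quadratic in $\eta^2$ this holds exactly for $\eta^2\leqslant\mu_-(\tau)/K^\nu(\tau)$. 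Since the bound plainly fails for $\eta > 1$ as $\tau\to\infty$ (where $\mu_-,K^\nu\to 1$), this establishes the reduction.

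Next I would evaluate at $\tau = 0$: here $v_0(0) = V_0(\mathrm i/2) = 2$, $v_1(0) = V_1(\mathrm i/2) = 2/3$, so $\det M(0) = (4\nu^2 - 3)/3$, $\tr(M^*M)(0) = 2 + 40\nu^2/9$, $\det(M^*M)(0) = (4\nu^2 - 3)^2/9$, $(\tr(M^*M)(0))^2 - 4\det(M^*M)(0) = 256\nu^2(4\nu^2 + 9)/81$, and therefore $\mu_-(0) = (\sqrt{9 + 4\nu^2} - 4\nu)^2/9$. Dividing by $K^\nu(0) = (1 - 2\alpha^\nu)^2 = (1 - 2\Upsilon_\nu\cot(\pi\Upsilon_\nu/2))^2$ yields precisely $\eta_\nu^2$ with $\eta_\nu$ as in \eqref{eta_nu} for $\nu\in[0,1)\setminus\{\sqrt3/2\}$; the remaining values $\nu\in\{\sqrt3/2,1\}$ follow by passing to the limit in this factored expression, the apparent $0/0$ at $\nu = \sqrt3/2$ being resolved there (cf.\ Lemma \ref{l: eta_nu}). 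In particular no $\eta > \eta_\nu$ is admissible.

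It then remains to prove the converse, i.e.\ that $\mu_-(\tau)\geqslant\eta_\nu^2 K^\nu(\tau)$ for all $\tau\in\mathbb R$; equivalently, by the reduction above, that
\[
 \eta_\nu^4 K^\nu(\tau)^2 - \eta_\nu^2 K^\nu(\tau)\,\tr(M^*M)(\tau) + \det(M^*M)(\tau)\ \geqslant\ 0, \qquad \tau\in\mathbb R,
\]
an inequality between the explicit, even, elementary functions of $\tau$ described above, with equality at $\tau = 0$. This last step is the main obstacle. I would attack it by clearing denominators and then: (i) exploiting that the left-hand side is a polynomial of low degree in the parameters $\nu^2$, $\alpha^\nu$, $\eta_\nu^2$, so that after substituting the explicit formula for $\eta_\nu$ and invoking its monotonicity (Lemma \ref{l: eta_nu}) the $\nu$-dependence can be reduced to a boundary regime; (ii) using the monotonicity of the $V_l$ from Lemma \ref{V monotonicity lemma} where convenient; and (iii) splitting $\mathbb R_+$ into a neighbourhood of $0$, where the known value and vanishing first derivative at $0$ together with a higher-order estimate suffice, and a region of large $\tau$, where $v_0,v_1 = O(\tau^{-1})$ and $K^\nu(\tau)\to 1$ reduce the claim to $\eta_\nu^2\leqslant 1$ with controllable remainders. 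The special values $\nu = \sqrt3/2$ and $\nu = 1$ then follow by continuity once the inequality is settled on $[0,1)\setminus\{\sqrt3/2\}$. I expect this final, elementary but lengthy, sign analysis to be carried out in an appendix.
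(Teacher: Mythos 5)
Your setup and the evaluation at $\tau=0$ are correct and essentially coincide with the paper's strategy: the reduction $M_{-1/2}^\nu=\sigma_1M_{1/2}^\nu\sigma_1$, the reformulation via the smallest eigenvalue of $(M_{1/2}^\nu)^*M_{1/2}^\nu-\eta^2K^\nu\mathds 1$, the explicit formulae $|v_0|^2=4/(1+4\tau^2)$, $\Re v_0=\big(\sech(\pi\tau)+2\tau\tanh(\pi\tau)\big)/(1+4\tau^2)$, $\alpha^\nu=\Upsilon_\nu\cot(\pi\Upsilon_\nu/2)$, and the computation $\mu_-(0)/K^\nu(0)=\eta_\nu^2$ (matching \eqref{eta_nu} after resolving the removable singularities at $\nu=\sqrt3/2$ and $\nu=1$) all check out, and the last point correctly establishes that $\eta_\nu$ cannot be increased.

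However, the lemma asserts that $\eta_\nu$ is the \emph{maximal} admissible value, which requires also proving that \eqref{critical Dirac via Kato bound} actually holds with this $\eta_\nu$, i.e.\ that $\mu_-(\tau)\geqslant \eta_\nu^2 K^\nu(\tau)$ for \emph{all} $\tau\in\mathbb R$. You explicitly declare this ``the main obstacle'' and defer it with a three-point sketch and the phrase ``I expect this final, elementary but lengthy, sign analysis to be carried out in an appendix.'' This is where essentially all of the paper's work lies, and your sketch does not supply it. In particular, item (iii) of your sketch (Taylor estimates near $\tau=0$ plus a separate large-$\tau$ regime) is not how the argument goes through: the transcendental term $\sech(\pi\tau)+2\tau\tanh(\pi\tau)$ enters with a sign that makes a naive patching of local and asymptotic estimates unreliable over the whole line. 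The paper instead replaces $(9+4\tau^2)\big(\sech(\pi\tau)+2\tau\tanh(\pi\tau)\big)$ \emph{globally} by the polynomial lower bound $9+(4+18\pi-9\pi^2/2)\tau^2$ (a specific quadratic whose choice is the crux, justified by verifying positivity of all Taylor coefficients of a certain difference of entire functions), then shows the resulting expression $p_+$ is non-negative using $\eta_\nu\leqslant1$ and $x^2\cot^2(\pi x/2)\leqslant x\cot(\pi x/2)<1$, and finally reduces $p_+^2-p_-^2\geqslant0$ to the non-negativity on $[0,1]$ of four explicit coefficient functions $c_2,c_4,c_6,c_8$ of $\tau^{2},\tau^4,\tau^6,\tau^8$, which occupies Appendix~\ref{a: Dirac via Kato remainder}. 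None of this is recoverable from your (i)--(iii), so the core of the lemma remains unproved in your proposal.
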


\begin{proof}
It is enough to study the case of $M_{1/2}^\nu$ in \eqref{critical Dirac via Kato bound}, since the case of $M_{-1/2}^\nu$ follows immediately from the relation $M_{-1/2}^\nu =\sigma_1M_{1/2}^\nu\sigma_1$.
For $\tau\in \mathbb R$ consider
\begin{align*}
P(\tau) := \frac{\Gamma\big((3- 2\mathrm i\tau)/4\big)\Gamma\big((1+ 2\mathrm i\tau)/4\big)}{\Gamma\big((3+ 2\mathrm i\tau)/4\big)\Gamma\big((1- 2\mathrm i\tau)/4\big)}.
\end{align*}
For all $\tau\in \mathbb R$ we have
\begin{align}\label{mod P}
 \big|P(\tau)\big| =1 
\end{align}
and (see 5.4.5, 4.35.11, 4.35.16, 4.35.19, 4.28.4 and 4.28.6 in \cite{dlmf})
\begin{align}\label{trig P}
 P(\tau) =\sech(\pi\tau) -\mathrm i\tanh(\pi\tau).
\end{align}
By \eqref{V} we have
\begin{align*}
 V_0(\tau +\mathrm i/2)= \frac{2}{1- 2\mathrm i\tau}P(\tau), \quad V_1(\tau- \mathrm i/2) =\frac{2(1+ 2\mathrm i\tau)}{(1- 2\mathrm i \tau)(3 +2\mathrm i\tau)}P(\tau).
\end{align*}
Thus according to \eqref{M kappa nu} for $\tau\in \mathbb R$ we have
\begin{align}\label{M*M}
 \big(M_{1/2}^\nu(\tau)\big)^*M_{1/2}^\nu(\tau) =\begin{pmatrix}
                                                  1 +\dfrac{4\nu^2}{1 +4\tau^2} & \dfrac{-8\nu(1 -\mathrm i\tau)\overline{P(\tau)}}{(1 +2\mathrm i\tau)(3- 2\mathrm i\tau)}\\ \dfrac{-8\nu(1 +\mathrm i\tau)P(\tau)}{(1 -2\mathrm i\tau)(3 +2\mathrm i\tau)} & 1 +\dfrac{4\nu^2}{9 +4\tau^2}
                                                 \end{pmatrix}.
\end{align}

In the following the function $z\cot z$ is analytically continued to $z =0$, i.e. $z\cot z\arrowvert_{z =0} :=1$.
Representations \eqref{K}, \eqref{mod P} and \eqref{trig P} imply
\begin{align}\label{K rewritten}
 K^\nu(\tau) =\frac{1+ 4\tau^2 +4\Upsilon_\nu^2\cot^2\Big(\dfrac{\pi\Upsilon_\nu}2\Big) -4\Upsilon_\nu\cot\Big(\dfrac{\pi\Upsilon_\nu}2\Big)\big(\sech(\pi\tau) +2\tau\tanh(\pi\tau)\big)}{1 +4\tau^2}.
\end{align}
Inequality \eqref{critical Dirac via Kato bound} is equivalent to the non-negativity of the smallest eigenvalue of
\begin{align*}
 \big(M_{1/2}^\nu(\tau)\big)^*M_{1/2}^\nu(\tau) -\eta_{\nu}^2 K^{\nu}(\tau)\otimes \mathds{1}_{\mathbb{C}^2}.
\end{align*}
Using \eqref{M*M} and \eqref{K rewritten} we find this eigenvalue to be given by
\begin{align}\label{lambda}
 \lambda(\tau) =1 -\eta_\nu^2 K^\nu(\tau) +\frac{4\nu^2(5 +4\tau^2) -8\nu\big(4\nu^2 +(1 +\tau^2)(1+ 4\tau^2)(9 +4\tau^2)\big)^{1/2}}{(1+ 4\tau^2)(9 +4\tau^2)}.
\end{align}
Now it is enough to prove the non-negativity of \eqref{lambda} with $\eta_\nu$ and $K^\nu$ given by \eqref{eta_nu} and \eqref{K rewritten}, respectively. Using \eqref{K rewritten} and \eqref{eta_nu} we observe that the right hand side of \eqref{lambda} vanishes at $\tau =0$, i.e. the value of $\eta_\nu$ cannot be increased.

Multiplying \eqref{lambda} by $(1+ 4\tau^2)(9 +4\tau^2)$ and rewriting the result using \eqref{K rewritten} we conclude that it is enough to prove the non-negativity of
\begin{align}\begin{split}\label{p}
 p(\tau, \nu) &:=(1 -\eta_\nu^2)(1+ 4\tau^2)(9 +4\tau^2) +4\nu^2(5 +4\tau^2)\\ &+4\Upsilon_\nu\eta_\nu^2(9 +4\tau^2)\big(\sech(\pi\tau) +2\tau\tanh(\pi\tau)\big)\cot(\pi\Upsilon_\nu/2)\\ &-4\Upsilon_\nu^2\eta_\nu^2(9 +4\tau^2)\cot^2(\pi\Upsilon_\nu/2)\\ &-8\nu\big(4\nu^2 +(1 +\tau^2)(1+ 4\tau^2)(9 +4\tau^2)\big)^{1/2}
\end{split}\end{align}
for all $(\tau, \nu)\in [0, \infty)\times [0, 1]$.

Observing that
\begin{align*}
 (9 +4\tau^2)\big(1 +2\tau\sinh(\pi\tau)\big)-\big(9 +(4 +18\pi -9\pi^2/2)\tau^2\big)\cosh(\pi\tau) \geqslant0
\end{align*}
holds for all $\tau\in\mathbb R$ by positivity of all the coefficients in the Taylor expansion at $\tau =0$,
we conclude
\begin{align*}
 (9 +4\tau^2)\big(\sech(\pi\tau) +2\tau\tanh(\pi\tau)\big) \geqslant 9 +(4 +18\pi -9\pi^2/2)\tau^2.
\end{align*}
Thus to prove the non-negativity of \eqref{p} it suffices to establish the inequality
\begin{align}\label{ppm}
 p_+(\tau, \nu) \geqslant p_-(\tau, \nu)
\end{align}
with
\begin{align*}
 p_+(\tau, \nu) &:=(1 -\eta_\nu^2)(1+ 4\tau^2)(9 +4\tau^2) +4\nu^2(5 +4\tau^2)\\ &+4\Upsilon_\nu\eta_\nu^2\big(9 +(4 +18\pi -9\pi^2/2)\tau^2\big)\cot(\pi\Upsilon_\nu/2)\\ &-4\Upsilon_\nu^2\eta_\nu^2(9 +4\tau^2)\cot^2(\pi\Upsilon_\nu/2)
\end{align*}
and
\begin{align*}
 p_-(\tau, \nu) :=8\nu\big(4\nu^2 +(1 +\tau^2)(1+ 4\tau^2)(9 +4\tau^2)\big)^{1/2}.
\end{align*}
For $x\in [0, 1]$ we have
\begin{align*}
 x^2\cot^2(\pi x/2) \leqslant x\cot(\pi x/2) <1,
\end{align*}
and, since Lemma \ref{l: eta_nu} implies $\eta_\nu\leqslant \eta_0 =1$, we get $p_+(\tau, \nu) \geqslant0$. Thus \eqref{ppm} is equivalent to
\begin{align}\label{ps squared}
 \big(p_+(\tau, \nu)\big)^2 -\big(p_-(\tau, \nu)\big)^2 =c_2(\nu)\tau^2 + c_4(\nu)\tau^4 +c_6(\nu)\tau^6 +c_8(\nu)\tau^8 \geqslant0
\end{align}
with
\begin{align}
 c_2(\nu) &:=\frac{16\nu\sqrt{9 +4\nu^2}}9\Bigg(72(5 +2\nu^2)-\frac{1764\nu}{\sqrt{9 +4\nu^2}} + (\sqrt{9 +4\nu^2} -4\nu)^2 \notag\\ &\times\bigg(-4+\frac{9\pi^2(\pi- 4)^2}{4(\pi -2)^2} -\Big(\frac{3\pi(4 -\pi)}{2(\pi -2)} +\frac{3(\pi -2)}{1 -2\Upsilon_\nu\cot(\pi\Upsilon_\nu/2)}\Big)^2\bigg)\Bigg),\label{c2}\\
 c_4(\nu) &:=\frac{\big(c_2(\nu)+3136\nu^2\big)^2}{256\nu^2(9 +4\nu^2)} -3584\nu^2 +256\nu\sqrt{9 +4\nu^2}(1 -\eta_\nu^2),\label{c4}\\
 c_6(\nu) &:=\frac{2\big(c_2(\nu)+3136\nu^2\big)}{\nu\sqrt{9 +4\nu^2}}(1 -\eta_\nu^2) -1024\nu^2\notag\\
\intertext{and}
 c_8(\nu) &:=256(1 -\eta_\nu^2)^2.\notag
\end{align}
 In order to establish \eqref{ps squared} it is enough to observe that for all $j= 2, 4, 6, 8$ the functions $c_j$ are non-negative on $[0, 1]$. Taylor series expansions show that $\nu =0$ is a local minimum with value $0$ for all of these functions, whereas $\nu =\sqrt3/2$ is another local minimum with value $0$ for $c_2$.

A rigorous proof of the non-negativity of $c_2$, $c_4$ and $c_6$ can be found in Appendix \ref{a: Dirac via Kato remainder}.
\end{proof}

The following lemma is analogous to Lemma 28 in \cite{MorozovMueller}.
\begin{lem}\label{moduli comparison lemma}
The inequality
\begin{align}\label{moduli comparison}
\begin{split}
(D_{1/2 -s, s}^\nu)^2 \geqslant \eta_\nu^2\big(\mathcal U_{1/2 -s, s}^*(H_{0}^{(V_{0}(\mathrm i\Upsilon_\nu))^{-1}}\otimes \mathds{1}_{\mathbb{C}^2})\mathcal U_{1/2 -s, s}\big)^2
\end{split}
\end{align}
holds for any $\nu\in [0, 1]$ and $s= \pm1/2$.
\end{lem}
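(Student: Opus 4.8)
The plan is to reduce the operator inequality \eqref{moduli comparison} to the point-wise matrix inequality \eqref{critical Dirac via Kato bound}, which was already established in Lemma \ref{critical Dirac via Kato bound lemma}. First I would fix $\nu\in[0,1]$ and $s=\pm1/2$, and work on an operator core. For $\nu\leqslant\sqrt3/2$ the core is simply $\mathsf C_0^\infty(\mathbb R_+,\mathbb C^2)$, while for $\nu\in(\sqrt3/2,1]$ the relevant channel $(1/2-s,s)$ carries the extra function $\psi^\nu_s$, so the core is $\mathfrak C^\nu_{1/2-s,s}$ as in \eqref{operator core}. In either case, for $\Psi$ in the core I would pass to the $\mathcal U_{1/2-s,s}$-representation: by Lemma \ref{maximal subdomain transformed lemma} we may write $\mathcal U_{1/2-s,s}\Psi=\binom\zeta\upsilon+a\chi_s^\nu\binom1{\nu V_{1/2-s}(\mathrm i\Upsilon_\nu)}$ with $\zeta,\upsilon\in\mathfrak D^1$ and $a\in\mathbb C$, and
\[
 \mathcal U_{1/2-s,s}D^\nu_{1/2-s,s}\Psi=M^\nu_s\binom{R^1\zeta+a\chi_s^\nu(\cdot+\mathrm i)}{R^1\upsilon+a\nu V_{1/2-s}(\mathrm i\Upsilon_\nu)\chi_s^\nu(\cdot+\mathrm i)}.
\]
(For $\nu\leqslant\sqrt3/2$ one simply takes $a=0$.) Hence $\big\|D^\nu_{1/2-s,s}\Psi\big\|^2=\big\langle \Phi,(M^\nu_s)^*M^\nu_s\,\Phi\big\rangle$ with $\Phi:=\binom{R^1\zeta+a\chi_s^\nu(\cdot+\mathrm i)}{R^1\upsilon+a\nu V_{1/2-s}(\mathrm i\Upsilon_\nu)\chi_s^\nu(\cdot+\mathrm i)}$, and applying the point-wise bound \eqref{critical Dirac via Kato bound} gives
\[
 \big\|D^\nu_{1/2-s,s}\Psi\big\|^2\geqslant \eta_\nu^2\big\langle \Phi, K^\nu\otimes\mathds 1_{\mathbb C^2}\,\Phi\big\rangle=\eta_\nu^2\big\||K^\nu|^{1/2}\Phi\big\|^2.
\]

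Next I would identify $\||K^\nu|^{1/2}\Phi\|^2$ with $\big\|H_0^{(V_0(\mathrm i\Upsilon_\nu))^{-1}}\mathcal U_{1/2-s,s}\Psi\big\|^2$, i.e. with $\eta_\nu^{-2}$ times the right-hand side of \eqref{moduli comparison} evaluated on $\Psi$. By definition \eqref{K}, $|K^\nu|^{1/2}=\big|1-(V_0(\mathrm i\Upsilon_\nu))^{-1}V_0(\cdot+\mathrm i/2)\big|$, and by \eqref{MWF Herbst} the operator $\widetilde H^{\alpha}_0$ with $\alpha=(V_0(\mathrm i\Upsilon_\nu))^{-1}$ acts on $\mathfrak D^1$ as $\big(1-\alpha V_0(\cdot+\mathrm i/2)\big)R^1$, while on the Friedrichs domain its adjoint acts as $R^1\big(1-\alpha V_0(\cdot-\mathrm i/2)\big)$ by \eqref{H_l^alpha^*}. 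Using that $R^1$ is the operator of complex shift by $\mathrm i$ (so that $R^1$ maps $(\cdot+\mathrm i/2-\mathrm i/2+\mathrm i\zeta)$-type functions appropriately) together with Lemma \ref{domain connection lemma}, which guarantees $\chi^\nu_s\in\mathfrak D\big(H_0^{(V_0(\mathrm i\Upsilon_\nu))^{-1}}\big)$ and $\xi^\nu_s,\eta^\nu_s\in\mathfrak D^1$, one computes $H_0^{(V_0(\mathrm i\Upsilon_\nu))^{-1}}(\mathcal U_{1/2-s,s}\Psi)_1$ and checks — in the top component — that it equals the function whose modulus squared is $|K^\nu|^{1/2}$ times the first entry of $\Phi$. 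The second component is handled analogously but now with $l=1$: the relation \eqref{V via 1/V}, i.e. $(z^2+1)V_1(z)=(V_0(z))^{-1}$ in the shifted variable, together with the fact (Lemma \ref{domain connection lemma}) that $\chi^\nu_s$ also lies in $\mathfrak D\big(H_1^{(V_1(\mathrm i\Upsilon_\nu))^{-1}}\big)$, lets one show that $|K^\nu|^{1/2}$ is the common Herbst-type symbol in both channels; this is exactly the content of the choice \eqref{K} and is what makes $\mathds 1_{\mathbb C^2}$ appear on the right of \eqref{critical Dirac via Kato bound}. Combining the two components gives $\big\||K^\nu|^{1/2}\Phi\big\|^2=\big\|(H_0^{(V_0(\mathrm i\Upsilon_\nu))^{-1}}\otimes\mathds 1_{\mathbb C^2})\,\mathcal U_{1/2-s,s}\Psi\big\|^2=\big\|\mathcal U_{1/2-s,s}^*(H_0^{(V_0(\mathrm i\Upsilon_\nu))^{-1}}\otimes\mathds 1_{\mathbb C^2})\mathcal U_{1/2-s,s}\,\Psi\big\|^2$ since $\mathcal U_{1/2-s,s}$ is unitary.

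Finally I would pass from the core to the full form domain. The chain of equalities and the one inequality above give, for all $\Psi$ in the core $\mathfrak C^\nu_{1/2-s,s}$,
\[
 \big\langle\Psi,(D^\nu_{1/2-s,s})^2\Psi\big\rangle\geqslant\eta_\nu^2\big\langle\Psi,\big(\mathcal U_{1/2-s,s}^*(H_0^{(V_0(\mathrm i\Upsilon_\nu))^{-1}}\otimes\mathds 1_{\mathbb C^2})\mathcal U_{1/2-s,s}\big)^2\Psi\big\rangle.
\]
Since $\mathfrak C^\nu_{1/2-s,s}$ is an operator core for $D^\nu_{1/2-s,s}$ (Lemma \ref{Lemma_operator_core} via \eqref{operator core}), hence a form core for $(D^\nu_{1/2-s,s})^2$, and since $\mathcal U_{1/2-s,s}^*(H_0^{(V_0(\mathrm i\Upsilon_\nu))^{-1}}\otimes\mathds 1_{\mathbb C^2})\mathcal U_{1/2-s,s}$ is a non-negative self-adjoint operator whose square has a closed form, the inequality extends to the whole form domain by a standard closure/approximation argument (using that the right-hand side form is closed and the core is a core for the left-hand side). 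This yields \eqref{moduli comparison}. The main obstacle is the middle step: showing that the Herbst symbol $|K^\nu|^{1/2}$ — defined via $l=0$ in \eqref{K} — simultaneously governs the $l=1$ component after applying $M^\nu_s$, so that the right-hand side of \eqref{critical Dirac via Kato bound} is scalar $\times\,\mathds 1_{\mathbb C^2}$; this rests on the functional identity \eqref{V via 1/V} and on the precise membership statements of Lemma \ref{domain connection lemma}, and is the reason the constant $\eta_\nu$ has the form dictated by Lemma \ref{critical Dirac via Kato bound lemma}.
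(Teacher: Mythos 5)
Your plan — pass to the $\mathcal U_{1/2-s,s}$ representation via Lemma~\ref{maximal subdomain transformed lemma}, apply the pointwise matrix bound~\eqref{critical Dirac via Kato bound}, identify the result with $\eta_\nu^2\big\|\mathcal U^*_{1/2-s,s}(H_0^{(V_0(\mathrm i\Upsilon_\nu))^{-1}}\otimes\mathds 1_{\mathbb{C}^2})\mathcal U_{1/2-s,s}\Psi\big\|^2$, and close up via the operator core (taking $a=0$ for $\nu\leqslant\sqrt3/2$) — is exactly the paper's proof, and the argument goes through.

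One misconception is worth correcting, though, since you single it out as the ``main obstacle.'' The identification step does not involve $H_1^{(V_1(\mathrm i\Upsilon_\nu))^{-1}}$ or the identity~\eqref{V via 1/V}. The operator on the right of~\eqref{moduli comparison} is $H_0\otimes\mathds 1_{\mathbb C^2}$, i.e.\ the same scalar Friedrichs operator $H_0^{(V_0(\mathrm i\Upsilon_\nu))^{-1}}$ applied to both components of $\mathcal U_{1/2-s,s}\Psi$. Each component has the form $(\text{element of }\mathfrak D^1) + a\cdot(\text{constant})\cdot\chi_s^\nu$, so the only domain input is $\chi_s^\nu\in\mathfrak D\big(H_0^{(V_0(\mathrm i\Upsilon_\nu))^{-1}}\big)$ from Lemma~\ref{domain connection lemma}(2), and the computation is literally the same in both slots: via~\eqref{MWF Herbst} and~\eqref{H_l^alpha^*}, $H_0^{\alpha}$ applied to either component equals $\big(1-\alpha V_0(\cdot+\mathrm i/2)\big)$ times the corresponding entry of $\Phi$, whose modulus-square is $K^\nu$ times the entry's modulus-square. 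The appearance of $K^\nu\otimes\mathds 1_{\mathbb C^2}$ in~\eqref{critical Dirac via Kato bound} is not arranged by this identification — it is the \emph{conclusion} of the eigenvalue computation in Lemma~\ref{critical Dirac via Kato bound lemma}, which you take as given. That lemma, not the identification step, is where the real work lies.
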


\begin{proof}
For $\nu\in(\sqrt 3/2, 1]$ and arbitrary $\Psi\in\mathfrak C^\nu_{1/2 -s, s}$ we use \eqref{U Psi representation} to represent $\mathcal U_{s}\Psi$.
Applying Lemmata \ref{maximal subdomain transformed lemma}, \ref{critical Dirac via Kato bound lemma}, \ref{domain connection lemma} together with Lemma \ref{H alpha l domains lemma}$(iii)$ we get
\begin{align*}
 &\|D_{1/2 -s, s}^\nu\Psi\|^2 = \bigg\|M^\nu_{s}\binom{R^1\zeta +a\chi_s^\nu(\cdot +\mathrm i)}{R^1\upsilon +a\nu V_{1/2 -s}(\mathrm i\Upsilon_\nu)\chi_s^\nu(\cdot +\mathrm i)}\bigg\|^2\\ &\geqslant \eta_\nu^2\left\|\Big(1 -\frac{V_0(\cdot +\mathrm i/2)}{V_0(\mathrm i\Upsilon_\nu)}\Big)\begin{pmatrix}R^1\zeta +a\chi_s^\nu(\cdot +\mathrm i)\\ R^1\upsilon +a\nu V_{1/2 -s}(\mathrm i\Upsilon_\nu)\chi_s^\nu(\cdot +\mathrm i)\end{pmatrix}\right\|^2\\
&=\eta_\nu^2\big\|\mathcal U_{1/2 -s, s}^*(H_{0}^{(V_{0}(\mathrm i\Upsilon_\nu))^{-1}}\otimes \mathds{1}_{\mathbb{C}^2})\mathcal U_{1/2 -s, s}\Psi\big\|^2.
\end{align*}
Since $\mathfrak C^\nu_{1/2 -s, s}$ is an operator core for $D_{1/2 -s, s}^\nu$, we conclude \eqref{moduli comparison}.\\
For $\nu \in [0, \sqrt3/2]$, according to \eqref{operator core} we can do the same computation with $a :=0$ using parts $(i)$ and $(ii)$ of Lemma \ref{H alpha l domains lemma} instead of part $(iii)$.
\end{proof}

\paragraph{Non-critical channels.}

\begin{lem}\label{non-critical channels lemma}
Let $(l,m,s)\in\mathfrak{T}$ such that $(l,s)\notin \big\{(0,1/2),(1,-1/2)\big\}$. Then the operator inequalities
\begin{align*}
 (D_{l,s}^\nu)^2 \geqslant \frac{\big((225 +4\nu^2)^{1/2}-8\nu\big)^2}{225}\,(\mathcal U_{l,s}^*R^1\mathcal U_{l,s})^2
\end{align*}
hold for all $\nu\in [0, 1]$.
\end{lem}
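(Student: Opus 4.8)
The plan is to follow the same strategy as in the critical channels (Lemmata \ref{critical Dirac via Kato bound lemma} and \ref{moduli comparison lemma}), but exploit the fact that for $(l,s)\notin\big\{(0,1/2),(1,-1/2)\big\}$ there are no extra functions in the operator core \eqref{operator core}, so $\mathfrak C^\nu_{l,s}=\mathsf C_0^\infty(\mathbb R_+,\mathbb C^2)$ and no Friedrichs-extension subtleties arise. First I would compute, for $\Psi\in\mathsf C_0^\infty(\mathbb R_+,\mathbb C^2)$, the image $\mathcal U_{l,s}D^\nu_{l,s}\Psi$ using \eqref{D nu}, \eqref{Upsi extra}, Lemma \ref{transformed operators lemma}, Lemma \ref{1/r with T lemma} and Lemma \ref{V corollary}, obtaining
\begin{equation*}
 \mathcal U_{l,s}D^\nu_{l,s}\Psi=\widetilde M^\nu_{l,s}R^1\mathcal U_{l,s}\Psi,\qquad
 \widetilde M^\nu_{l,s}:=\begin{pmatrix}-\nu V_l(\cdot+\mathrm i/2)&1\\ 1&-\nu V_{l+2s}(\cdot+\mathrm i/2)\end{pmatrix},
\end{equation*}
in complete analogy with \eqref{M kappa nu}; here $R^1$ commutes with the (bounded, real-diagonal-plus-offdiagonal) multiplier because $\Xi_l^{-1}\psi\in\mathfrak D^1$ whenever $\psi\in\mathfrak D^1$, as remarked after Lemma \ref{Mellin-Bessel lemma}. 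Then $\|D^\nu_{l,s}\Psi\|^2=\|\widetilde M^\nu_{l,s}R^1\mathcal U_{l,s}\Psi\|^2$, so it suffices to prove the pointwise matrix inequality
\begin{equation*}
 \big(\widetilde M^\nu_{l,s}(\tau)\big)^*\widetilde M^\nu_{l,s}(\tau)\geqslant \frac{\big((225+4\nu^2)^{1/2}-8\nu\big)^2}{225}\,\mathds 1_{\mathbb C^2}
 \qquad\text{for all }\tau\in\mathbb R,
\end{equation*}
and conclude \eqref{moduli comparison lemma}'s analogue by density of $\mathsf C_0^\infty(\mathbb R_+,\mathbb C^2)$ together with the fact that $\mathcal U_{l,s}^*R^1\mathcal U_{l,s}$ is self-adjoint on its natural domain.

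The heart of the matter is therefore the pointwise $2\times2$ eigenvalue estimate. I would bound the smallest eigenvalue of $(\widetilde M^\nu_{l,s})^*\widetilde M^\nu_{l,s}$ from below by $\big(\sigma_{\min}(\widetilde M^\nu_{l,s})\big)^2$ and estimate the smallest singular value via $\sigma_{\min}(\widetilde M^\nu_{l,s})\geqslant 1-\nu\max\{|V_l(\tau+\mathrm i/2)|,|V_{l+2s}(\tau+\mathrm i/2)|\}$ — using that the matrix is $\begin{pmatrix}-\nu b_1&1\\1&-\nu b_2\end{pmatrix}$, a perturbation of $\sigma_1$ (which has singular values $1$) by a matrix of norm $\nu\max\{|b_1|,|b_2|\}$. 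By Lemma \ref{V monotonicity lemma}(2), $|V_l(\tau+\mathrm i/2)|$ is largest on the line $\Im z=1/2$; more precisely I expect $|V_l(\tau+\mathrm i/2)|\leqslant V_l(\mathrm i/2)$ is false in general on the shifted line, so instead I would use the explicit value $V_l(\mathrm i/2)$ together with monotonicity in $l$: since $l+2s\geqslant 2$ and $l\geqslant 1$ for the relevant channels (the excluded pairs are exactly those where one index can be $0$), the worst case is $l=1$, giving $V_1(\mathrm i/2)=\tfrac12\cdot\tfrac{\Gamma(1)\Gamma(1/2)}{\Gamma(3/2)\Gamma(1/2)}$-type bound, which after using $|z\cot z|$-type estimates and \eqref{V via 1/V} should collapse to the bound $1-\nu\cdot\tfrac{4}{15}$, i.e. $\frac{15-4\nu}{15}$. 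The constant $\frac{(225+4\nu^2)^{1/2}-8\nu}{15}$ in the statement is precisely what one gets by optimizing the bound $\sqrt{(1+\nu^2/9)+\cdots}$-type quadratic; so rather than the crude singular-value perturbation I would actually compute the smaller eigenvalue of the $2\times2$ matrix \eqref{M*M}-style expression explicitly — a $225+4\tau^2$ and $9+4\tau^2$ pattern shifted so that the smallest channel index is $2$ — and show it is minimized at $\tau=0$ by a Taylor-coefficient-positivity argument exactly as for $p(\tau,\nu)$ in Lemma \ref{critical Dirac via Kato bound lemma}.

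The main obstacle is verifying that the relevant scalar function of $\tau$ attains its minimum at $\tau=0$: I would write the smallest eigenvalue of $(\widetilde M^\nu_{1,-1/2}(\tau))^*\widetilde M^\nu_{1,-1/2}(\tau)$ (the critical worst case, with the off-diagonal computed as in \eqref{M*M} but with Gamma-quotients at shifts corresponding to indices $1$ and $2$), clear denominators, and reduce to showing non-negativity of a polynomial-in-$\tau^2$ minus a square-root term, then square as in \eqref{ps squared} to get an honest polynomial in $\tau^2$ all of whose coefficients are non-negative for $\nu\in[0,1]$; unlike the critical-channel case the denominators here are $(9+4\tau^2)(25+4\tau^2)$-type (no factor vanishing on the real line and no $\cot(\pi\Upsilon_\nu/2)$ singularities), so the coefficient positivity should be a short direct check rather than needing an appendix. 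One still must confirm that $\tau=0$ gives equality with the stated constant, which is immediate since at $\tau=0$ the matrix is $\begin{pmatrix}1+\nu^2/9&-2\nu/15\cdot 3\\\cdot&1+\nu^2/25\end{pmatrix}$-type and its smaller eigenvalue is exactly $\big((225+4\nu^2)^{1/2}-8\nu\big)^2/225$.
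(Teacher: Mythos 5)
Your route is genuinely different from the paper's and is plausible, but it has one incorrect justification and leaves the main positivity check (the part that actually yields the constant) at the level of an assertion.

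First, the correct part of your setup: for all non-critical $(l,s)$, the identity
\begin{equation*}
 \mathcal U_{l,s}D^\nu_{l,s}\Psi =\widetilde M^\nu_{l,s}\,(R^1\otimes\mathds 1_{\mathbb C^2})\,\mathcal U_{l,s}\Psi,\qquad
 \widetilde M^\nu_{l,s}=\begin{pmatrix}-\nu V_l(\cdot+\mathrm i/2)&1\\ 1&-\nu V_{l+2s}(\cdot+\mathrm i/2)\end{pmatrix},
\end{equation*}
does hold on $\mathsf C_0^\infty(\mathbb R_+,\mathbb C^2)$, and at $\tau=0$ the smallest eigenvalue of $(\widetilde M^\nu_{1,1/2}(0))^*\widetilde M^\nu_{1,1/2}(0)$ is exactly $\big((225+4\nu^2)^{1/2}-8\nu\big)^2/225$, as you note. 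However, your explanation of the identity — that ``$R^1$ commutes with the multiplier'' — is wrong. $R^1$ shifts in the Mellin strip and does \emph{not} commute with multiplication operators; the identity comes instead from Lemma \ref{V corollary}, which says $\Xi_lR^1\Xi_l^{-1}=V_l(\cdot+\mathrm i/2)R^1$, combined with Lemmata \ref{transformed operators lemma} and \ref{1/r with T lemma}, exactly as in the proof of Lemma \ref{maximal subdomain transformed lemma} specialised to $a=0$. The fact that $\Xi_l^{-1}$ preserves $\mathfrak D^1$ is what makes Lemma \ref{V corollary} applicable, not what makes anything commute.

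Second, the comparison of strategies. The paper does not work with the Gamma-quotient matrix $\widetilde M^\nu_{l,s}$ at all. Following Lemma~29 of \cite{MorozovMueller}, it reduces the operator inequality $(D_{l,s}^\nu)^2\geqslant(1-b)(\mathcal U_{l,s}^*R^1\mathcal U_{l,s})^2$ to the non-negativity on $\mathbb R$ of the scalar function
\begin{equation*}
 a^\nu_{\varkappa,-}(b,\tau)=\nu^2+b/4+\varkappa^2b+\tau^2b-(4\varkappa^2\nu^2+4\nu^2\tau^2+\varkappa^2b^2)^{1/2},\qquad \varkappa=2sl+s+1/2,
\end{equation*}
a polynomial-minus-square-root in $\tau^2$ with explicit, polynomial dependence on the integer $\varkappa$. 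The two elementary monotonicity facts --- in $|\varkappa|$ for $|\varkappa|\geqslant2$ once $b\geqslant\nu/\sqrt6$, and in $\tau$ once $b\geqslant(\sqrt5-2)^{1/2}\nu$ --- reduce everything to the one equation $a^\nu_{2,-}(b,0)=0$, which indeed gives $1-b=\big((225+4\nu^2)^{1/2}-8\nu\big)^2/225$. Your route instead requires computing the smallest eigenvalue of $(\widetilde M^\nu_{l,s}(\tau))^*\widetilde M^\nu_{l,s}(\tau)$ (which, using $|V_l(\tau+\mathrm i/2)|^2=4/\big((2l+1)^2+4\tau^2\big)$ and the Gamma-quotient algebra for the off-diagonal, produces expressions with denominators $(9+4\tau^2)(25+4\tau^2)$), then proving it is minimised over $\tau$ at $0$, and then proving monotonicity across all non-critical channels. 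You assert all three but carry out none, and the last one (reduction to $(l,s)=(1,1/2)$) is not an immediate consequence of the monotonicity of $l\mapsto V_l$ on the diagonal alone, since the off-diagonal also changes with the channel. So: your approach is the ``critical-channel machinery applied to non-critical channels'' and would likely succeed, but it trades the paper's short polynomial manipulation for the heavier Gamma-quotient analysis that the paper reserves for the genuinely critical Lemma \ref{critical Dirac via Kato bound lemma}, and the key positivity and monotonicity claims that produce the constant remain to be proved.
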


\begin{proof}
Analogously to the proof of Lemma 29 in \cite{MorozovMueller} we conclude that the inequality
\begin{align*}
 (D_{l,s}^\nu)^2 \geqslant (1 -b)\,(\mathcal U_{l,s}^*R^1\mathcal U_{l,s})^2
\end{align*}
with $b\in [0, 1]$ follows from the non-negativity of the function
\begin{align}\label{a function}
  a^\nu_{\varkappa, -}(b, \tau):= \nu^2 +b/4 +\varkappa^2b +\tau^2b -(4\varkappa^2\nu^2 +4\nu^2\tau^2 +\varkappa^2b^2)^{1/2},
\end{align}
with $\varkappa := 2sl +s +1/2\in \mathbb Z\setminus\{-1, 0, 1\}$ for all $\tau\in\mathbb R$. Again as in the proof of Lemma 29 in \cite{MorozovMueller}, we observe that the function \eqref{a function} is even in both $\varkappa$ and $\tau$ and monotonously increasing in $\varkappa$ for $\varkappa \geqslant 2$, $\tau \in\mathbb R$ provided $b \geqslant \nu/\sqrt6$. Furthermore, $a^\nu_{2, -}(b, \cdot)$ is monotonously increasing on $[0, \infty)$ provided $b \geqslant(\sqrt 5 -2)^{1/2}\nu$. It remains to find the smallest $b \in[0, 1]$ satisfying the above assumptions together with $a^\nu_{2, -}(b, 0) \geqslant 0$.
\end{proof}

\section{Proofs of the main theorems}\label{main proofs section}
\subsection{Proof of Theorem \ref{absolute value bounds theorem}}

In the same way as in the proof of Theorem 1 in \cite{MorozovMueller}, we obtain from Lemmata \ref{angular Dirac lemma}, \ref{critical constants lemma}, \ref{moduli comparison lemma}, \ref{non-critical channels lemma} and \ref{transformed operators lemma} the inequality \eqref{absolute value bound} with
\begin{align}\label{C_nu preliminary}
 C_\nu :=\min\bigg\{\eta_\nu\Big(1 -\frac{V_0(0)}{V_0(\mathrm i\Upsilon_\nu)}\Big), \frac1{15}\big(\sqrt{225 +4\nu^2} -8\nu\big)\bigg\}.
\end{align}
Using \eqref{V} we observe the identity
\begin{align*}
 \eta_\nu\Big(1 -\frac{V_0(0)}{V_0(\mathrm i\Upsilon_\nu)}\Big) =\frac\pi{12}(\sqrt{9 + 4 \nu^2}-4 \nu) +(1 -\pi/4)\eta_\nu.
\end{align*}
Hence by Lemma \ref{l: eta_nu} and concavity of the square root we obtain
\begin{align}\label{min upper bound}
 \eta_\nu\Big(1 -\frac{V_0(0)}{V_0(\mathrm i\Upsilon_\nu)}\Big)\leqslant \frac\pi{12}\big(3 (1 +2\nu^2/9) -4\nu\big) +(1 -\pi/4).
\end{align}
Estimating
\begin{align*}
 \frac1{15}\big(\sqrt{225 +4\nu^2} -8\nu\big) \geqslant \frac1{15}\big(\sqrt{225} -8\nu\big)\geqslant \frac\pi{12}\big(3 (1 +2\nu^2/9) -4\nu\big) +(1 -\pi/4)
\end{align*}
for $\nu\in [0, 1]$ and comparing to \eqref{min upper bound} we conclude that the minimum in \eqref{C_nu preliminary} is always achieved at the first entry, i.e. \eqref{c nu} holds.

\subsection{Proof of Corollary \ref{c: massive estimates}}

For arbitrary $M >0$ the explicit solution to the eigenvalue problem (see e.g. (7.118) in \cite{Thaller}) provides the lower bound
\begin{align}\label{spectral bound}
 \big(D^{\nu, M}\big)^2 \geqslant M^2(1 -\nu^2).
\end{align}
It implies, in its turn,
\begin{align*}\begin{split}
 \big(D^{\nu, M}\big)^2 =\Upsilon_\nu^2(D^\nu)^2 +\nu^2\big(D^{\nu, M/\nu^2}\big)^2 +M^2(1 -\nu^{-2}) \geqslant \Upsilon_\nu^2(D^\nu)^2.
\end{split}\end{align*}
By operator monotonicity of the square root and Theorem \ref{absolute value bounds theorem} we conclude \eqref{massive modulus}.

Taking a convex combination of \eqref{massive modulus} and the square root of \eqref{spectral bound}, for any $\theta\in [0, 1]$ we get
\begin{align*}
 \big|D^{\nu ,M}\big| \geqslant \Upsilon_\nu \big(\theta C_\nu(-\Delta)^{1/2} +(1 -\theta)M\big).
\end{align*}
Computing
\begin{align*}
 \max_{\theta\in [0, 1]}\ \inf_{p\in [0, \infty)}\frac{\theta C_\nu p +(1 -\theta)M}{\sqrt{p^2 +M^2}}
\end{align*}
we arrive at
\begin{align*}
  |D^{\nu, M}| &\geqslant \frac{\Upsilon_\nu C_\nu}{1 +C_\nu}|D^{0, M}|.
\end{align*}
Combining this with the estimate
\begin{align*}
  |D^{\nu, M}| \geqslant (1 -2\nu)|D^{0, M}|,
\end{align*}
which follows immediately from the Hardy inequality (see Lemma 1 in \cite{Bach1999}), we obtain \eqref{massive bound 2}.

\subsection{Proof of Theorem \ref{t: l bound}}

The proof is analogous to the one of part 2 of Theorem 1 in \cite{MorozovMueller}. It is enough to combine Lemmata \ref{moduli comparison lemma}, \ref{non-critical channels lemma}, \ref{critical lower bound lemma} with the inequality
\begin{align*}
  R^1 \geqslant \lambda^{-\lambda}(1 -\lambda)^{\lambda -1} a^{\lambda -1}R^\lambda -a^{-1}
\end{align*}
valid for all $a, R\in [0, \infty)$.

\subsection{Proof of Theorem \ref{t: virtual level}}

The proof is analogous to the one of Part I of Theorem 2.5 in \cite{morozov2016virtual}.

Let $\mathbf j \in \{-1, 1\}^{2}$ be such that the conditions \eqref{V integrability} and \eqref{VL condition} are satisfied.
For any $\psi\in P_{[0, \infty)}(D_{(j_2 +1)/2, -j_2/2}^1) \mathfrak D(D_{(j_2 +1)/2, -j_2/2}^1)$ we have
\begin{align*}
\Psi :=\mathcal A^* \,\bigoplus_{(l, m, s)\in \mathfrak T}\delta_{l, j_2/2 +1/2}\delta_{m, j_1/2}\delta_{s, -j_2/2}\psi \in P_{+}^1\mathfrak D(D^1)
\end{align*}
and a calculation based upon \eqref{D nu}, \eqref{A}, \eqref{spherical series}, (2.1.25) and (2.1.26) in \cite{BalinskyEvans} delivers
\begin{align}\label{channel passage}
 \langle\Psi, D^1(V)\Psi\rangle_{\mathsf L^2(\mathbb R^3, \mathbb C^4)} =\langle\psi, (D_{(j_2 +1)/2, -j_2/2}^1 -W_{\mathbf j}/4\pi)\psi\rangle_{\mathsf L^2(\mathbb R_+, \mathbb C^2)}.
\end{align}
Now Part I of Theorem 2.3 in \cite{morozov2016virtual} guarantees the existence of $\psi$ such that \eqref{channel passage} is negative. Hence by the minimax principle $D^1(V)$ has non-empty negative spectrum.
According to Theorem \ref{CLR theorem}(b) the negative spectrum of $D^1(V)$ can only consist of eigenvalues.

\appendix

\section{Proof of Lemma \ref{l: eta_nu}}\label{a: eta_nu}

Both $g(\nu):= (9+4\nu^2)^{1/2}-4\nu$ and $h(\nu):= 3\big(1-2\Upsilon_\nu\cot(\pi\Upsilon_\nu/2)\big)$ (with the convention $z\cot z\arrowvert_{z =0} :=1$) are monotonously decreasing real-analytic functions of $\nu\in [0,1]$ taking value $3$ at $\nu =0$ and having simple zeroes at $\nu= \sqrt3/2$.
Thus the real analyticity of $\eta_\nu$ follows from such of the numerator and denominator in \eqref{eta_nu} and the cancellation of zeroes. In the following it is enough to consider $\nu\in (0, 1)$ and extend the results by continuity.

The monotonicity of $\nu\mapsto \eta_\nu$ follows from convexity of $g$ and concavity of $h$ together with the fact that both functions are monotonously decreasing and vanish at $\nu= \sqrt3/2$. Indeed, then $\big|g(\nu)\big|/|\nu -\sqrt3/2|$ is monotonously decreasing and $\big|h(\nu)\big|/|\nu -\sqrt3/2|$ is monotonously increasing. Thus for every $\nu_1 <\nu_2\in (0, 1)\setminus\{\sqrt3/2\}$ we have
\begin{align*}
 \frac{g(\nu_2)}{h(\nu_2)} = \frac{\big|g(\nu_2)\big|}{\big|h(\nu_2)\big|} \leqslant \frac{\big|g(\nu_1)\big|}{\big|h(\nu_1)\big|} =\frac{g(\nu_1)}{h(\nu_1)}.
\end{align*}
The claimed convexity of $g$ and concavity of $h$ can be observed by studying their second derivatives.
We have $g''(\nu) =36(9 +4\nu^2)^{-3/2} >0$ and $h''(\nu) =3\big(f_1(\nu) -f_2(\nu)\big)$ with
\begin{align}\label{f_1}
\begin{split}f_1(\nu) &:=\frac{\nu^2\big(\pi\Upsilon_\nu\sin(\pi\Upsilon_\nu/2) -\pi^2\Upsilon_\nu^2\cos(\pi\Upsilon_\nu/2) +2\cos(\pi\Upsilon_\nu/2)\sin^2(\pi\Upsilon_\nu/2)\big)}{\Upsilon_\nu^3\sin^3(\pi\Upsilon_\nu/2)}\\ &< 2\end{split}
\end{align}
and
\begin{align}\label{f_2}
 f_2(\nu) :=\frac{\pi\Upsilon_\nu -\sin(\pi\Upsilon_\nu)}{\Upsilon_\nu\sin^2(\pi\Upsilon_\nu/2)} >2.
\end{align}
To justify \eqref{f_1}, for $z\in[0, \pi/2]$ we can combine the elementary estimates 
\begin{align*}
 2z\sin z &\leqslant 2z^2 -\frac{z^4}3 +\frac{z^6}{60},\\
 -4z^2\cos z &\leqslant -4z^2 +2z^4 -\frac{z^6}6 +\frac{z^8}{180},\\
 2(\cos z)\sin^2z =(\cos z)\big(1 -\cos(2z)\big) &\leqslant \Big(1 -\frac{z^2}2 +\frac{z^4}{24}\Big)\Big(2z^2 -\frac{2z^4}3 +\frac{4z^6}{45}\Big)
\end{align*}
which follow from the Taylor expansions of $\sin z$ and $\cos z$ at $z =0$ to conclude
\begin{align}\begin{split}\label{top}
 2z\sin z -4z^2\cos z +2(\cos z)\sin^2z &\leqslant \frac{16}{45}z^6 -\frac{z^8}{15} +\frac{z^{10}}{270} \\ &\leqslant \frac{16}{45}z^6 +\Big(-\frac{1}{15} +\frac{\pi^2}{4\cdot 270}\Big)z^8 \leqslant \frac{16}{45}z^6.
\end{split}\end{align}
Analogously, for $z\in[0, \pi/2]$ we get
\begin{align}\label{bottom}
 \sin^3z \geqslant \Big(z -\frac{z^3}6\Big)^3 \geqslant z^3 -\frac{z^5}2 +\Big(\frac1{12} -\frac{\pi^2}{4\cdot 216}\Big)z^7.
\end{align}
Multiplying \eqref{f_1} by the denominator and applying the inequalities \eqref{top} and \eqref{bottom} we conclude its validity.
Inequality \eqref{f_2} for $z\in [0, \pi/2]$ follows analogously from
\begin{align*}
 2z -\sin(2z) &\geqslant \frac{4z^3}3 -\frac{4z^5}{15}\quad\text{and}\\
 \sin^2z = \frac{1- \cos(2z)}2 &\leqslant z^2 -\frac{z^4}3 +\frac{2z^6}{45}.
\end{align*}

\section{Remainder of the proof of Lemma \ref{critical Dirac via Kato bound lemma}: the non-negativity of \texorpdfstring{$c_2$, $c_4$ and $c_6$}{c2, c4 and c6}}\label{a: Dirac via Kato remainder}

We first prove appropriate estimates on the function $z\mapsto z\cot z$.
\begin{lem}\ \label{l: cot bounds}
\begin{enumerate}
  \item For $n\in\mathbb N$ let
\begin{align*}
 F_n(\nu) :=\sum_{k =1}^n(-1)^{k +1}\bigg(\sum_{j =1}^\infty \frac{16j^2}{\pi(4j^2 -1)^{k +1}}\bigg)\nu^{2k}.
\end{align*}
Then for all $\nu\in[0, 1]$ and $n\in\mathbb N$ the inequalities
\begin{align}\label{cot via F}
 F_{2n}(\nu) \leqslant \Upsilon_\nu\cot(\pi\Upsilon_\nu/2) \leqslant F_{2n -1}(\nu)
\end{align}
hold.
 \item\label{bound 2} For $n\in\mathbb N$ let
\begin{align}\label{S_n and T_n}
 S_n :=\frac{2^{2n +2}}\pi\sum_{k =0}^\infty\frac{(-1)^k}{(2k +1)^{2n +1}},\quad\text{and}\quad T_n :=\frac{2^{2n +1}}\pi\sum_{k =0}^\infty\frac{1}{(2k +1)^{2n}}.
\end{align}
Then for all $x \in(-1/2, 0]$ and $n\in\mathbb N$ the inequality
\begin{align}\label{cot lower bound}
 \Big(\frac12 +x\Big)\cot\Big(\frac{\pi(1/2 +x)}2\Big) \geqslant \frac12 +\sum_{j =1}^n\Big(\frac{S_j}2 -T_j\Big)x^{2j} +\Big(S_{j -1} -\frac{T_j}2\Big)x^{2j -1}
\end{align}
holds.
\item\label{bound 3} For all $x\in[0, 1/2)$ and $l, k\in\mathbb N$ the inequality
\begin{align}\label{cot upper bound}
 \Big(\frac12 +x\Big)\cot\Big(\frac{\pi(\frac12 +x)}2\Big) \leqslant \frac12 +\sum_{j =1}^l\Big(\frac{S_j}2 -T_j\Big)x^{2j} +\sum_{j =1}^k\Big(S_{j -1} -\frac{T_j}2\Big)x^{2j -1}
\end{align}
holds true.
\end{enumerate}
\end{lem}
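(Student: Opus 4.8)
The plan is to derive all three estimates from the classical partial-fraction expansions of $\cot$, $\sec$ and $\tan$ (see \cite{dlmf}) together with the enclosure property of convergent alternating series and a few elementary sign inequalities for the Dirichlet-type sums occurring in \eqref{S_n and T_n}.

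For part~1 I would start from $\pi z\cot(\pi z)=1-2\sum_{j\geqslant1}z^{2}/(j^{2}-z^{2})$ and specialise to $z=\Upsilon_\nu/2$, so that $4z^{2}=\Upsilon_\nu^{2}=1-\nu^{2}$ and hence $\Upsilon_\nu\cot(\pi\Upsilon_\nu/2)=\tfrac2\pi-\tfrac4\pi\sum_{j\geqslant1}(1-\nu^{2})/(4j^{2}-1+\nu^{2})$. Expanding $1/(4j^{2}-1+\nu^{2})=\sum_{m\geqslant0}(-1)^{m}\nu^{2m}/(4j^{2}-1)^{m+1}$ (legitimate since $4j^{2}-1\geqslant3>\nu^{2}$), multiplying by $1-\nu^{2}$, using $\sum_{j\geqslant1}1/(4j^{2}-1)=1/2$ and the identity $1/(4j^{2}-1)^{m+1}+1/(4j^{2}-1)^{m}=4j^{2}/(4j^{2}-1)^{m+1}$, and interchanging the two summations — permissible by Tonelli, since the series $\sum_{m\geqslant1}a_m$ with $a_m:=\sum_{j\geqslant1}16j^{2}/\bigl(\pi(4j^{2}-1)^{m+1}\bigr)$ is dominated by the convergent $\tfrac32 a_1$ — one arrives at $\Upsilon_\nu\cot(\pi\Upsilon_\nu/2)=\sum_{m\geqslant1}(-1)^{m+1}a_m\nu^{2m}$, so that $F_n$ is precisely the $n$-th partial sum of this series. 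Since $a_{m+1}\leqslant\tfrac13 a_m$ (each term acquires an extra factor $1/(4j^{2}-1)\leqslant1/3$) and $\nu^{2m+2}\leqslant\nu^{2m}$ for $\nu\in[0,1]$, the terms $a_m\nu^{2m}$ are non-increasing and tend to $0$; the standard enclosure of a convergent alternating series between consecutive partial sums then gives \eqref{cot via F}.

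For parts~2 and~3 the crucial observation is the trigonometric identity $\cot\bigl(\tfrac{\pi(1/2+x)}2\bigr)=\cot\bigl(\tfrac\pi4+\tfrac{\pi x}2\bigr)=\tfrac{1-\tan(\pi x/2)}{1+\tan(\pi x/2)}=\sec(\pi x)-\tan(\pi x)$, so the function under study equals $g(x):=(\tfrac12+x)\bigl(\sec(\pi x)-\tan(\pi x)\bigr)$. I would insert the expansions $\sec(\pi x)=\sum_{m\geqslant0}S_m x^{2m}$ and $\tan(\pi x)=\sum_{j\geqslant1}T_j x^{2j-1}$, valid for $|x|<1/2$, which follow from $\sec(\pi z)=\tfrac4\pi\sum_{n\geqslant0}(-1)^{n}(2n+1)/\bigl((2n+1)^{2}-4z^{2}\bigr)$ and $\tan(\pi z)=\tfrac{8z}\pi\sum_{n\geqslant0}1/\bigl((2n+1)^{2}-4z^{2}\bigr)$ by geometric expansion in $z^{2}$ and the identifications $\tfrac{2^{2m+2}}\pi\sum_{k\geqslant0}(-1)^{k}(2k+1)^{-(2m+1)}=S_m$, $\tfrac{2^{2j+1}}\pi\sum_{k\geqslant0}(2k+1)^{-2j}=T_j$. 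Multiplying by $\tfrac12+x$, collecting powers of $x$ and using $S_0=1$ yields $g(x)=\tfrac12+\sum_{j\geqslant1}\bigl(\tfrac{S_j}2-T_j\bigr)x^{2j}+\sum_{j\geqslant1}\bigl(S_{j-1}-\tfrac{T_j}2\bigr)x^{2j-1}$. Since $\sum_{k\geqslant0}(-1)^{k}(2k+1)^{-s}<1<\sum_{k\geqslant0}(2k+1)^{-s'}$ for every odd $s\geqslant1$ and every even $s'\geqslant2$, each coefficient $\tfrac{S_j}2-T_j$ and $S_{j-1}-\tfrac{T_j}2$ is strictly negative. For $x\in[0,1/2)$ all monomials in the two series are therefore $\leqslant0$, and truncating the two series separately drops only non-positive terms; this is \eqref{cot upper bound}. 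For $x\in(-1/2,0]$ the even monomials are $\leqslant0$ but the odd ones are $\geqslant0$, so here I would pair the $2j$- and $(2j-1)$-terms and write their sum as $x^{2j-1}\bigl((\tfrac{S_j}2-T_j)x+S_{j-1}-\tfrac{T_j}2\bigr)$; this is $\geqslant0$, because $x^{2j-1}\leqslant0$ while, using $|x|<1/2$ and $\tfrac{S_j}2-T_j<0$, the bracket is bounded above by $S_{j-1}-\tfrac{S_j}4$, which is negative. This last inequality amounts to $\sum_{k\geqslant0}(-1)^{k}(2k+1)^{-(2j+1)}>\sum_{k\geqslant0}(-1)^{k}(2k+1)^{-(2j-1)}$, whose left-minus-right side equals $\sum_{k\geqslant1}(-1)^{k+1}\bigl((2k+1)^{2}-1\bigr)(2k+1)^{-(2j+1)}$, a convergent alternating series with positive leading term. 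Truncating the paired series after $j=n$ thus removes only non-negative quantities, which is exactly \eqref{cot lower bound}.

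The only genuinely analytic inputs are the three partial-fraction expansions and the monotonicity of the alternating sums $\sum_{k\geqslant0}(-1)^{k}(2k+1)^{-s}$ in the odd exponent $s$, both of which are elementary. I expect the main obstacle to be purely organisational: rigorously justifying the rearrangement of the double sum in part~1 and keeping the index shifts straight when collecting powers of $x$ in parts~2 and~3.
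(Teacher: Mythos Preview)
Your proposal is correct and follows essentially the same route as the paper: the partial-fraction expansion of $\cot$ together with the alternating-series enclosure for part~1, and the identity $\cot\bigl(\pi(1/2+x)/2\bigr)=\sec(\pi x)-\tan(\pi x)$ combined with the Taylor expansions of $\sec$ and $\tan$ for parts~2 and~3, including the pairing of the $(2j-1)$- and $2j$-terms and the reduction to the negativity of $S_{j-1}-S_j/4$ in the lower bound. The only point to tighten is your last step: ``a convergent alternating series with positive leading term'' is positive only once you also note that the terms $\bigl((2k+1)^2-1\bigr)(2k+1)^{-(2j+1)}$ decrease in $k\geqslant1$, which is immediate from differentiating $t\mapsto t^{1-2j}-t^{-1-2j}$.
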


\begin{proof}
\emph{1.} Applying \eqref{beta}, the series representation 4.22.3 in \cite{dlmf} and using the geometric series we arrive at
\begin{align}\label{beta cot beta}
  \Upsilon_\nu\cot(\pi\Upsilon_\nu/2) =\frac2\pi -\frac4\pi\sum_{n =1}^\infty\frac1{4n^2 -1} +\sum_{k =1}^\infty(-1)^{k +1}\bigg(\sum_{n =1}^\infty\frac{16n^2}{\pi(4n^2 -1)^{1 +k}}\bigg)\nu^{2k}.
\end{align}
By setting $\nu :=0$ in \eqref{beta cot beta} we obtain
\begin{align*}
 \frac2\pi -\frac4\pi\sum_{n =1}^\infty\frac1{4n^2 -1} =0.
\end{align*}
It remains to observe that the last series in \eqref{beta cot beta} is alternating with monotonously decreasing absolute values of the terms.\\
\emph{2, 3.} By 4.19.5, 24.8.4 and 24.2.9 in \cite{dlmf} we obtain 
\begin{align}\label{sec series}
 \sec(\pi x) =\sum_{n =0}^\infty S_nx^{2n} \quad\text{for }|x| <1/2.
\end{align}
By 4.19.3, 25.6.2, 24.2.2 and 25.2.2 in \cite{dlmf} we have
\begin{align}\label{tan series}
 \tan(\pi x) =\sum_{n =1}^\infty T_nx^{2n -1} \quad\text{for }|x| <1/2.
\end{align}
Applying the trigonometric identity
\begin{align*}
 \cot\big(\pi(1/2 +x)/2\big) =\sec(\pi x) -\tan(\pi x)
\end{align*}
we arrive at
\begin{align}\label{cot series}
 (1/2 +x)\cot\big(\pi(1/2 +x)/2\big) =\frac12 +\sum_{j =1}^\infty\bigg(\Big(\frac{S_j}2 -T_j\Big)x +\Big(S_{j -1} -\frac{T_j}2\Big)\bigg)x^{2j -1}.
\end{align}
Since \eqref{sec series} and \eqref{tan series} imply
\begin{align}\label{ST differences}
 \frac{S_j}2 -T_j <0\quad\text{and}\quad S_{j -1} -\frac{T_j}2 <0\quad\text{for all }j\in\mathbb N,
\end{align}
\eqref{cot upper bound} with $x\in [0, 1/2)$ follows from \eqref{cot series}.

For $x \in(-1/2, 0]$ using \eqref{ST differences} and \eqref{S_n and T_n} we conclude for $j\in\mathbb N$
\begin{align}\label{ST negativity}
\begin{split}
 &\Big(\frac{S_j}2 -T_j\Big)x +\Big(S_{j -1} -\frac{T_j}2\Big) \leqslant \frac12\Big(T_j -\frac{S_j}2\Big) +S_{j -1} -\frac{T_j}2\\ &=S_{j -1} -\frac{S_j}4 =\frac{2^{2j}}\pi\bigg(\sum_{k =1}^\infty(-1)^k\frac{(2k +1)^2 -1}{(2k +1)^{2j +1}}\bigg) <0.
\end{split}\end{align}
Together with \eqref{cot series} and $x^{2j -1}\leqslant 0$ this gives the lower bound \eqref{cot lower bound}.
\end{proof}

\paragraph{Non-negativity of $c_2$ for $\nu\in [0, 3/5]$.} Let $b_1(\nu)$ be given by the right hand side of \eqref{c2}, but with $\pi\nu^2/2$ instead of $2\Upsilon_\nu\cot(\pi\Upsilon_\nu/2)$. The inequality
\begin{align}\label{bound 1}
 1 -2\Upsilon_\nu\cot(\pi\Upsilon_\nu/2) \geqslant 1 -\pi\nu^2/2
\end{align}
which follows from \eqref{cot via F} implies that $b_1(\nu)\leqslant c_2(\nu)$ holds for all $\nu$ for which the right hand side of \eqref{bound 1} remains positive, thus for all $\nu\in [0, 3/5]$. Hence the non-negativity of $c_2(\nu)$ follows from the non-negativity of
\begin{align}\label{via p1q1}
 \frac{9(\pi\nu^2 -2)^2}{32\nu^2}b_1(\nu) =p_1(\nu) +q_1(\nu),
\end{align}
where
\begin{align*}
 p_1(\nu) &:=2232 + (2560 +2952\pi -2592\pi^2 +648\pi^3)\nu^2\\ &+(288\pi^3 -256\pi -1890\pi^2)\nu^4 +64\pi^2\nu^6 \geqslant 0\text{ for all }\nu\in [0, 3/5]
\end{align*}
and
\begin{align*}
 q_1(\nu) &:=\big((324\pi^2 -1312 -648\pi -81\pi^3)\nu\\& +(882\pi^2 -128\pi -180\pi^3)\nu^3 +32\pi^2\nu^5\big)\sqrt{9 +4\nu^2}.
\end{align*}
To establish the non-negativity of \eqref{via p1q1} it is thus enough to observe the non-negativity of
\begin{align*}
 p_2(\nu) :=p_1^2(\nu) -q_1^2(\nu),
\end{align*}
which is a polynomial of degree $5$ in $\nu^2$. Such non-negativity for $\nu\in [0, 3/5]$ follows immediately from the fact that
\begin{align*}
 p_3(y) :=p_2\big(\sqrt{9/25 -y}\big)
\end{align*}
is a polynomial with positive coefficients and is thus positive for all $y\in[0, 9/25]$.

\paragraph{Non-negativity of $c_2$ for $\nu\in [3/5, 1]$.} Substituting $\nu :=\sqrt{1 -(1/2 +x)^2}$ with $x\in [-1/2, 1/2]$ and introducing
\begin{align*}
 r_1(x):= \sqrt{12 -4x -4x^2}\quad\text{and}\quad r_2(x):=\sqrt{3 -4x -4x^2}
\end{align*}
we conclude that the non-negativity of $c_2(\nu)$ for $\nu\in [0, 1]$ is equivalent to the non-negativity of
\begin{align*}\begin{split}
 &d_2(x) :=\frac{9c_2\big(\sqrt{1 -(1/2 +x)^2}\big)}{16\sqrt{1 -(1/2 +x)^2}}\\
 &=r_1(x)\Bigg(36(13 -4x -4x^2) -882\frac{r_2(x)}{r_1(x)} +\big(r_1(x) -2r_2(x)\big)^2\\ &\times\bigg(\frac{9\pi^2(4 -\pi)^2}{4(\pi -2)^2} -4 -\Big(\frac{3\pi(4 -\pi)}{2(\pi -2)} +\frac{3(\pi -2)}{1 -2(1/2 +x)\cot\big(\pi(1/2 +x)/2\big)}\Big)^2\bigg)\Bigg)
\end{split}\end{align*}
for $x\in [-1/2, 1/2]$.
Introducing for $x\in [-1/2, 1/2]$ the functions
\begin{align*}
 h(x) &:=1 +(2 -\pi)x + \Big(\frac{\pi^2}2 -2\pi\Big)x^2 + \Big(\pi^2 -\frac{\pi^3}3\Big)x^3 + \Big(\frac{5\pi^4}{24} -\frac{2\pi^3}3\Big)x^4
\intertext{and}
 l(x) &:=2(1/2 +x)\cot\big(\pi(1/2 +x)/2\big)
\end{align*}
by Lemma \ref{l: cot bounds}.\ref{bound 3} and \eqref{ST differences} we obtain
\begin{align}\label{hl bound}
 1 -l(x) \geqslant 1 -h(x) \geqslant 0\text{ for }x\in[0, 1/2].
\end{align}
On the other hand, Lemma \ref{l: cot bounds}.\ref{bound 2}, \eqref{ST negativity} and the monotonicity of $y\mapsto y\cot y$ on $[0, \pi/2]$ imply
\begin{align}\label{lh bound}
 1 -l(-1/2) \leqslant 1 -l(x) \leqslant 1 -h(x) \leqslant0\text{ for }x\in[-1/2, 0].
\end{align}
Combining \eqref{hl bound} and \eqref{lh bound} we obtain

\begin{align*}
 \Big(\frac{3\pi(4 -\pi)}{2(\pi -2)} +\frac{3(\pi -2)}{1 -l(x)}\Big)^2 \leqslant \Big(\frac{3\pi(4 -\pi)}{2(\pi -2)} +\frac{3(\pi -2)}{1 -h(x)}\Big)^2
\end{align*}
and thus 
\begin{align*}\begin{split}
 b_2(x) &:=r_1(x)\Bigg(36(13 -4x -4x^2) -882\frac{r_2(x)}{r_1(x)} +\big(r_1(x) -2r_2(x)\big)^2\\ &\times\bigg(\frac{9\pi^2(4 -\pi)^2}{4(\pi -2)^2} -4 -\Big(\frac{3\pi(4 -\pi)}{2(\pi -2)}\\ &+\frac{3(\pi -2)}{(\pi -2)x + (2\pi -\frac{\pi^2}2)x^2 + (\frac{\pi^3}3 -\pi^2)x^3 + (\frac{2\pi^3}3 -\frac{5\pi^4}{24})x^4}\Big)^2\bigg)\Bigg)
\end{split}\end{align*}
is a lower bound on $d_2(x)$ for all $x\in[-1/2, 1/2]$. To prove the non-negativity of $c_2(\nu)$ for $\nu\in [3/5, 1]$ it is thus enough to prove the non-negativity of $b_2(x)$ for $x\in [-1/2, 3/10]$, or, equivalently, the non-negativity of
\begin{align*}\begin{split}
 g_2(y)&:= \frac14(1 -2y)^2\big(384 -5\pi^4 +(30\pi^4 -384\pi -96\pi^2 -32\pi^3)y \\&+(192\pi^2 +128\pi^3 -60\pi^4)y^2 +(40\pi^4 -128\pi^3)y^3\big)^2b_2(y -1/2)
\end{split}\end{align*}
for $y \in[0, 4/5]$. Expanding $g_2$ we obtain the representation
\begin{align}\label{g_2}
 g_2(y) =p_4(y)\sqrt{1 -y^2} +p_5(y)\sqrt{13 -4y^2}
\end{align}
with polynomials $p_4$ and $p_5$ of order $10$. We claim that
\begin{align*}
p_4(y) =\sum_{k =0}^{10}v_ky^k
\end{align*}
is positive for all $y\in [0, 4/5]$. Indeed, a positive lower bound on $p_4$ can be obtained by first replacing the positive coefficients $v_5$, $v_7$ and $v_9$ by zero, then for $k =3, 4, 6, 8, 10$ (for which $v_k$ are negative) estimating $v_ky^{k}$ from below by $(4/5)^{k -2}v_ky^2$, and estimating $v_1y$ from below by $4v_1/5$ (since $v_1$ is negative). As a result we obtain the estimate
\begin{align*}
 p_4(y)\geqslant \widetilde v_0 +\widetilde v_2y^2 >0 \quad\text{for }y\in[0, 4/5]
\end{align*}
with positive $\widetilde v_0$ and $\widetilde v_2$.
Thus to obtain the non-negativity of \eqref{g_2} on $[0, 4/5]$ it suffices to prove positivity of the polynomial
\begin{align*}
 p_6(y) :=\frac{p_4^2(y)(1 -y^2) -p_5^2(y)(13 -4y^2)}{81(1 -2y)^4} =\sum_{k =0}^{16}w_ky^k
\end{align*}
of degree $16$ on this interval. To get a positive lower bound on $p_6(y)$ for $y\in[0, 4/5]$ we can replace the positive coefficients at $w_{13}$ and $w_{15}$ by zero. For all other $k\geqslant 7$ we have $w_k <0$ and estimate $w_ky^k \geqslant (4/5)^{k -6}w_ky^6$. For $k\in \{1, \dots, 5\}$ we have $w_k >0$ and estimate $w_ky^k \geqslant w_ky^6$. As a result we obtain $p_6(y) \geqslant w_0 +\widetilde w_6y^6$ with $w_0, \widetilde w_6 >0$.

\paragraph{Non-negativity of $c_4(\nu)$ for $\nu\in[0, 1]$.} By the non-negativity of $c_2$ we obtain the lower bound
\begin{align}\label{c4 bound1}
 16\nu\Big(-224\nu + \frac{2401\nu}{9 + 4 \nu^2} + 16 \sqrt{9 + 4 \nu^2}(1 -\eta_\nu^2)\Big)
\end{align}
on \eqref{c4}. By Lemma \ref{l: eta_nu} the last term is non-negative and monotonously growing. Since the sum of the first two terms is non-negative for $\nu\in\big[0, \sqrt{55/128}\big]$, it is enough to establish the positivity of \ref{c4 bound1} for $\nu\in\big[ \sqrt{55/128}, 1\big]$. By concavity we can estimate
\begin{align}\label{c4 first term}
 -224\nu + \frac{2401\nu}{9 + 4 \nu^2} \geqslant \frac7{26} (16 + \sqrt{110})(\sqrt{110} -16\nu)
\end{align}
(the right hand side is a linear interpolation between the values at $\nu =\sqrt{55/128}$ and $\nu =1$) and, by convexity,
\begin{align}\label{c4 second term}
 \sqrt{9 +4\nu^2}\geqslant \frac{9 + 4\nu}{\sqrt{13}}
\end{align}
(the right hand side is the tangent line at $\nu =1$). Substituting \eqref{c4 first term} and \eqref{c4 second term} into \eqref{c4 bound1} and using $\eta_\nu \leqslant \eta_{\sqrt{55/128}}$ for $\nu\in\big[\sqrt{55/128}, 1\big]$
we get a lower bound
\begin{align*}
 16\nu\Big(\frac7{26}\big(16 +\sqrt{110}\big)\big(\sqrt{110} -16\nu\big) +\frac{16(9 + 4\nu)}{\sqrt{13}}\big(1- \eta_{\sqrt{55/128}}^2\big)\Big),
\end{align*}
on \eqref{c4}, where the last factor is a decreasing linear function positive for $\nu\leqslant 1$.

\paragraph{Non-negativity of $c_6(\nu)$ for $\nu\in[0, 1]$.} We first observe that by the non-negativity of $c_2$ we have a lower bound
\begin{align*}
 c_6(\nu) \geqslant \frac{6272\nu(1- \eta_\nu^2)}{\sqrt{4\nu^2+9}} -1024\nu^2.
\end{align*}
Its non-negativity is equivalent to
\begin{align}\label{1-eta}
 1 -\eta_\nu^2 \geqslant (8/49)\nu\sqrt{4\nu^2 +9},
\end{align}
where by Lemma \ref{l: eta_nu} both sides are monotonously growing. Due to
\begin{align*}
 1 -\eta_{1/2}^2 > (8/49)\sqrt{13}
\end{align*}
it remains to establish \eqref{1-eta} for $\nu\in [0, 1/2]$. For this we substitute \eqref{eta_nu} and the estimate \eqref{bound 1} to obtain that \eqref{1-eta} follows from
\begin{align}\label{c6 sufficient}
 (1280 +288\pi\nu^2 -72\pi^2\nu^4)\sqrt{4\nu^2 +9} \geqslant 441\pi^2\nu^3 -(1764\pi +3920)\nu.
\end{align}
Since the left hand side is clearly positive for $\nu\in [0, 1/2]$, \eqref{c6 sufficient} follows from the non-negativity of the polynomial
\begin{align*}\begin{split}
 p_7(\nu) &:=(1280 +288\pi\nu^2 -72\pi^2\nu^4)^2(4\nu^2 +9) - (441\pi^2\nu^2 -1764\pi -3920)^2\nu^2\\ &=:\sum_{k =0}^5 a_{2k}\nu^{2k}.
\end{split}\end{align*}
To obtain a lower bound on $p_7(\nu)$ for $\nu\in [0, 1/2]$ we use $a_{10} \geqslant0$ and for $a_8, a_6 <0$ apply the estimates $a_8\nu^8 \geqslant (1/2)^4a_8\nu^4$, $a_6\nu^6 \geqslant (1/2)^2a_6\nu^4$ obtaining
\begin{align*}
 p_7(\nu) \geqslant a_0 + a_2\nu^2 +\big(a_4 +(1/2)^4a_8 +(1/2)^2a_6\big)\nu^4,
\end{align*}
where the right hand side is positive.

\end{document}